\newcommand{\blind}{0}
\newcommand{\R}{\mathbb{R}}
\newcommand{\N}{\mathbb{N}}
\newcommand{\E}{\mathbb{E}}
\newcommand{\p}{\mathbb{P}}
\newcommand\abs[1]{|#1|} %% absolute value
\newcommand{\norm}[1]{\lVert#1\rVert} %% norme
\DeclareMathOperator*{\rank}{rank}
\DeclareMathOperator*{\argmin}{arg\,min} %% argmin
\DeclareMathOperator*{\supp}{supp}
\DeclareMathOperator*{\sign}{sign}
\newcommand{\Xtilde}{\tilde{X}}
\newcommand{\betahat}{\hat{\beta}}
\newcommand{\gammahat}{\hat{\gamma}}
\newcommand{\omegahat}{\hat{\omega}}
\newcommand{\betatilde}{\tilde{\beta}}
\newcommand{\omegatilde}{\tilde{\omega}}
\newcommand{\betahatJP}{\hat{\beta}^{\operatorname{JP}}}
\newcommand{\betahatTJP}{\hat{\beta}^{\operatorname{TJP}}}
\newcommand{\omegahatJP}{\hat{\omega}^{\operatorname{JP}}}
\newcommand{\XNA}{X^{\textrm{NA}}}
\newcommand{\NA}{\texttt{NA}}
\newcommand{\lambdamax}{\lambda_{\max}}
\newcommand{\lambdamin}{\lambda_{\min}}
\newcommand{\sigmamin}{\sigma_{\min}}
\newcommand{\Sbar}{{\overline{S^0}}}
\newcommand{\Tbar}{{\overline{T^0}}}
\newtheoremstyle{thm}
{}
{}
{\itshape}
{}
{\bf}
{.}
{ }
{}
\newtheoremstyle{text}
{}
{}
{\upshape}
{}
{\bf}
{.}
{ }
{}
\newtheoremstyle{break}
{}
{}
{\upshape}
{}
{\bf}
{.}
{\newline}
{}
\newtheorem{theorem}{Theorem}
\newtheorem{lemma}{Lemma}
\newtheorem{prop}[theorem]{Proposition}
\newtheorem{corollary}[theorem]{Corollary}
\newtheorem*{theorem*}{Theorem}
\theoremstyle{text}
\newtheorem{remark}{Remark}
\newtheorem{definition}{Definition}
\theoremstyle{break}
\begin{document}

\def\spacingset#1{\renewcommand{\baselinestretch}%
	{#1}\small\normalsize} \spacingset{1.5}

%%%%%%%%%%%%%%%%%%%%%%%%%%%%%%%%%%%%%%%%%%%%%%%%%%%%%%%%%%%%%%%%%%%%%%%%%%%%%%

\if0\blind
{
	%\title{\bf Robust Lasso-Zero for the sparse corruption problem with application to model selection in the presence of missing covariates}
\title{\bf Robust Lasso-Zero for sparse corruption and model selection with missing covariates}
	%\author{Pascaline Descloux\thanks{pascaline.descloux@unige.ch}}
\author{Pascaline Descloux$^{(1)}$}
	\author{Claire Boyer$^{(2,3)}$}
	\author{Julie Josse$^{(4,5)}$}
	\author{Aude Sportisse$^{(6,7)}$}
	\author{Sylvain Sardy$^{(1)}$}
	
	\affil{\small $^{(1)}$ Department of Mathematics, University of Geneva, Switzerland \\
	$^{(2)}$ Sorbonne Université, CNRS, Laboratoire de Probabilités, Statistique et Modélisation (LPSM), F-75005 Paris, France, \\
	$^{(3)}$ MOKAPLAN, INRIA Paris \\
	$^{(4)}$ Inria centre
at Université Côte d’Azur, branch of Montpellier, France \\
	$^{(5)}$ Institut Desbrest d'Épidémiologie et de Santé Publique, Montpellier, France \\
	$^{(6)}$ Inria centre
at Université Côte d’Azur, Maasai, Valbonne, France \\
	$^{(7)}$ Interdisciplinary Institute for Artificial Intelligence, 3iA Côte d'Azur, France
	
	}
	
	\maketitle
} \fi

\if1\blind
{
	\bigskip
	\bigskip
	\bigskip
	\begin{center}
		{\LARGE\bf Title}
	\end{center}
	\medskip
} \fi

\bigskip

%\runningauthor{Pascaline Descloux, Claire Boyer, Julie Josse, Aude Sportisse and Sylvain Sardy}

\begin{abstract}
	We propose  Robust Lasso-Zero, an extension of the Lasso-Zero methodology, initially introduced for sparse linear models, to the sparse corruptions problem.    
We give  theoretical guarantees on the sign recovery of the parameters for a  slightly simplified version of the estimator, called Thresholded Justice Pursuit.
The use of Robust Lasso-Zero  is showcased for variable selection  with missing values in the covariates.   In addition to not requiring the specification of a model for the covariates, nor estimating their covariance matrix or the noise variance, the method has the great advantage of handling missing not-at random values without specifying a parametric model. 
Numerical experiments and a medical application underline the relevance of Robust Lasso-Zero in such a context with few available competitors. The method is easy to use and implemented in the R library {\tt lass0}. 
\end{abstract}

\noindent
{\it Keywords:} incomplete data, informative missing values, Lasso-Zero, sparse corruptions, support recovery
\vfill

\spacingset{1.5} % DON'T change the spacing!

\section{Introduction}

Consider the framework of sparse linear models in high dimension,

\begin{equation}\label{model:linear_initial}
y = X \beta^0 + \epsilon,
\end{equation}

where {$y\in \mathbb{R}^n$ is the vector of observations, $X\in\mathbb{R}^{n\times p}$ is the design matrix, and $\beta^0 \in \R^p$ is the parameter of interest, assumed to be $s$-sparse (only $s$ out of its $p$ entries are different from zero). The additive noise $\epsilon \in \R^n$ is a classical noise vector, qualified of ``dense noise" and assumed to be Gaussian with covariance matrix $\sigma^2I_n$}. 
In case of additional occasional corruptions, the sparse corruption problem can be then written as
\begin{equation} 
\label{model:sparsecorruption}
y = X \beta^0 + \sqrt{n}\omega^0 + \epsilon ,
\end{equation}
where  $\omega^0 \in \R^n$ is a $k$-sparse corruption vector;  see for instance \cite{chen2013}.
Noting that \eqref{model:sparsecorruption} can be rewritten as
\[
y =  \begin{bmatrix}X & \sqrt{n} I_n \end{bmatrix} \begin{bmatrix} \beta^0 \\ \omega^0 \end{bmatrix} + \epsilon,
\]
the sparse corruption model can be seen as a sparse linear model with an augmented design matrix and an augmented sparse vector of parameters.
We are interested in theoretical guarantees of support recovery for $\beta^0$ in \eqref{model:sparsecorruption}, 
with beneficial consequences for variable selection with missing covariates.

\paragraph{Related literature.} To recover $\beta^0$ {when there is no dense noise (when $\epsilon=0$ or equivalently when $\sigma^2 = 0$)}, several authors proposed the \textit{Justice Pursuit} (JP) program, name coined by \citet{laska2009}, by solving
\begin{equation} \label{opt:JP}
\begin{aligned}
&\min_{\beta \in \R^p, \omega \in \R^n} && \norm{\beta}_1 + \norm{\omega}_1
\\
& \quad \text{ s.t. } && y = X \beta +  \omega,
\\
\end{aligned}
\end{equation}
which is nothing else than the \textit{Basis Pursuit} (BP) problem, with the augmented matrix $\begin{bmatrix}X &  I_n \end{bmatrix}$ \citep{wright2009}.
\citet{wright2010} analyzed JP for Gaussian measurements, providing support recovery results when $n\simeq p$ using cross-polytope arguments. Besides, \citet{laska2009} and \citet{li2010systematic} proved that if the entries of~$X$ are  i.i.d.\ standard Gaussian as well, then the matrix $\begin{bmatrix} X &  I_n \end{bmatrix}$ satisfies some restricted isometry property with high probability, implying exact recovery of both $\beta^0$ and $\omega^0$, provided that $n\gtrsim (s+k)\log(p)$. However, in these works, the sparsity level $k$ of $\omega^0$ cannot be fixed to a proportion of the sample size $n$.
 Therefore, \citet{li2013} and \citet{nguyen2013b} introduced a tuning parameter $\lambda > 0$ and solve
\begin{equation} \label{opt:JPlambda}
\min_{\beta \in \R^p, \omega \in \R^n} \norm{\beta}_1 + \lambda \norm{\omega}_1
 \quad \text{ s.t. } \quad y = X \beta +  \omega.
\end{equation}
In a sub-orthogonal or Gaussian design, they both proved  exact recovery, even for  a large proportion of corruption. 

In the case of sparse and dense noise {(i.e. when $\omega^0\neq 0$ and $\sigma^2 > 0$)}, \citet{nguyen2013a} proposed to jointly estimate $\beta^0$ and $\omega^0$ by solving
\begin{equation} \label{opt:robust_lasso}
\min_{\beta \in \R^p, \omega \in \R^n} \frac{1}{2} \norm{y - X \beta - \omega}_2^2 + \lambda_\beta \norm{\beta}_1 + \lambda_\omega \norm{\omega}_1.
\end{equation}
In the special case where $\lambda_\beta = \lambda_\omega,$ problem~(\ref{opt:robust_lasso}) boils down to the Lasso \citep{tibshirani1996regression}  applied to the response $y$ and the design matrix $\begin{bmatrix} X & I_n \end{bmatrix}$. Assuming a standard Gaussian design and the invertibility and incoherence properties for the covariance matrix, they obtained sign recovery guarantee for an arbitrarily large fraction of corruption, provided that $n\geq C k \log (p) \log (n)$. In addition, the required number of samples is proven to be optimal. 
More recently in the case of a Gaussian design with an invertible covariance matrix,  \citet{dalalyan2019outlier} obtained an optimal rate of estimation of $\beta^0$ when considering an $\ell_1$-penalized Huber's $M$-estimator, which is actually equivalent to~(\ref{opt:robust_lasso}) \citep{Sardy01robustwavelet}.

\paragraph{Contributions.} To estimate the support of the parameter vector $\beta^0$ in the sparse corruption problem, we study an extension of the Lasso-Zero methodology \citep{descloux2018}, initially introduced for standard sparse linear models, to the sparse corruptions problem. 
We provide theoretical guarantees on the sign recovery of $\beta^0$ for a simplified version of the Robust Lasso-Zero, that we call the Thresholded Justice Pursuit (TJP). These guarantees are extensions of recent results on the Thresholded Basis Pursuit. 
The first one extends a result of \citet{tardivel2019sign}, providing a necessary and sufficient condition for consistent recovery in a setting where the design matrix is fixed but the nonzero absolute coefficients tend to infinity. The second one extends a result of \citet{descloux2018}, proving sign consistency for correlated Gaussian designs when $p,$ $s$ and $k$ grow with $n,$ allowing a positive fraction of corruptions. 

{In this paper, we also consider the case where the matrix of covariates $X$ contains missing values, which can be due to manual errors, poor calibration, insufficient resolution, etc. In the high-dimensional setting, note that the naive complete case analysis \citep{rubin1976inference}, which discards all incomplete rows, is not an option, because the missingness of a single entry causes the loss of an entire row, which contains a lot of information when $p$ is large \citep{zhu1996co}.}
Showing that missing values in the covariates can be reformulated into a sparse corruption problem, we recommend the Robust Lasso-Zero for dealing with missing data. 
For support recovery, this approach  requires neither to specify a model for the covariates or the missing data mechanism, nor an estimation of the covariates covariance matrix or of the noise variance, and hence provides a simple {(quasi hyperparameter-free)} method for the user . 
Numerical experiments and a medical application also underline the effectiveness of Robust Lasso-Zero with respect to few available competitors, {especially when the missing-data mechanism is MNAR (for Missing Not At Random), i.e.\ when data are said informative, since their own values can cause the missingness.}

\paragraph{Organization.} After defining the Robust Lasso-Zero methodology in Section ~\ref{sec:robustlasso}, we analyse the sign recovery properties of the Thresholded Justice Pursuit  in Section~\ref{sec:TJP}. Section~\ref{section:relation} is dedicated to variable selection with missing values and the  selection of tuning parameters is discussed in Section~\ref{section:qut}.
Numerical experiments are presented in Section~\ref{sec:numexp} and an application in  Section~\ref{sct:appli}.

\paragraph{Notation.}  %For a positive integer $p$, we d
Define $[p] := \{1, \ldots, p\},$ and the complement of a subset $S \subset [p]$ is denoted $\bar{S}.$ For a matrix $A$ of size $u \times v$ and a set $T \subset [v],$ we use $A_T$ to denote the submatrix of size $n \times \abs{T}$ with columns indexed by $T.$  We define the missing value indicator matrix $M \in \R^{n \times p}$ by $M_{ij} = \bm{1}_{\{\XNA_{ij} = \NA \}},$ and the set of incomplete rows by $\mathcal{M} := \{i \in [n] \ \mid \ M_{ij} = 1 \ \text{ for some } j \in [p] \}$.

\section{Robust Lasso-Zero} \label{sec:robustlasso}

\subsection{Lasso-Zero in a nutshell} \label{sec:lassozero}

Under the linear model~(\ref{model:linear_initial}), the Thresholded Basis Pursuit (TBP) estimates the parameter $\beta^0$ by  setting the small coefficients of the BP solution to zero. Since BP fits the observations $y$ exactly, noise is generally overfitted. The Lasso-Zero \citep{descloux2018} alleviates this issue by  solving repeated BP problems, respectively fed with the augmented matrices $[X  | G^{(k)}]$, where $G^{(k)} \in\R^{n \times n}$, $k = 1, \ldots, M,$ are different  i.i.d.\  Gaussian noise dictionaries {meaning that the columns of $G^{(k)}$ form a set of random vectors used to approximate the additive noise term in the observations. Note that each entry of $G^{(k)}$ is drawn according to a standard Gaussian distribution $\mathcal{N}(0,1)$}. {Once the noise dictionaries generated, } the corresponding obtained estimates $\betahat^{(1)}, \ldots, \betahat^{(M)}$ are then aggregated by taking the component-wise medians, further thresholded at level $\tau > 0.$
\citet{descloux2018} show that the Lasso-Zero tuned by Quantile Universal Thresholding \citep{giacobino2017} achieves a very good trade-off between high power and low false discovery rate compared to competitors.

\subsection{Definition of Robust Lasso-Zero} \label{sec:def_robust_lassozero}

{The Robust Lasso-Zero approach is an extension of the Lasso-Zero methodology for the sparse corruptions problem. }
Consider the sparse corruption model \eqref{model:sparsecorruption}, for which $S^0$ and $T^0$ denote the respective supports of $\beta^0$ and $\omega^0$, with $s := \abs{S^0}$ and $k := \abs{T^0}$ their respective sparsity degrees. 

To fix the notation, we then consider the following parameterization of Justice Pursuit (JP):
To fix the notation, we then consider the following parameterization of Justice Pursuit (JP):
\begin{equation} \label{opt:JP_newscaling}
 (\betahat^{\operatorname{JP}}_\lambda, \omegahat^{\operatorname{JP}}_\lambda) \in \argmin_{\beta \in \R^p, \ \omega \in \R^n} \norm{\beta}_1 + \lambda \norm{\omega}_1 \quad \text{s.t.} \quad y = X \beta + \sqrt{n} \omega.
\end{equation}
Renormalization by $\sqrt{n}$ balances the augmented design matrix $\begin{bmatrix}X & \sqrt{n} I_n \end{bmatrix}$:
in practice the columns of $X$ are often standardized so that $\norm{X_j}_2^2 = n$ for every $j \in [p],$ (or at least this is true in expectation) and in this way, the norms of all columns of $\begin{bmatrix}X & \sqrt{n} I_n \end{bmatrix}$ scale on the same order. 

{While the original Lasso-Zero algorithm was developed to solve standard sparse regression, and built upon solving a sequence of Basis Pursuit programs involving noise dictionaries, the Robust Lasso-Zero is designed to address the sparse corruptions setting, by solving a sequence of Justice Pursuit programs~\eqref{opt:JP_newscaling} involving noise dictionaries {(see \eqref{opt:extBP_robust}) and a median aggregation (see stage \ref{stage_Rlass0:median}) in Algorithm \ref{algo:robust-lass0})}. The Robust Lasso-Zero approach is fully described in Algorithm \ref{algo:robust-lass0}. } 
{In the latter, } attention has been paid to the estimation of the support of $\beta^0$. However the estimation of the corruption support is also possible by computing
the corresponding vectors $\omegahat^{\textrm{med}}_{\lambda}$ and $\omegahat^{\textrm{Rlass0}}_{(\lambda, \tau)}$, at stages~\ref{stage_Rlass0:median}) and~\ref{stage_Rlass0:threshold}).

\begin{algorithm}[h!]
Given data $(y, X)$, for given hyper-parameters $\lambda > 0, \tau \geq 0$ and $M \in \N^*:$
\begin{enumerate}[1)]
\item For $k = 1, \ldots, M:$
	\begin{enumerate}[i)]
	\item generate a matrix $G^{(k)}$ of size $n \times n$ with i.i.d. $\mathcal{N}(0,1)$ entries
	\item compute the solution $(\betahat^{(k)}_\lambda, \omegahat^{(k)}_\lambda, \gammahat^{(k)}_\lambda)$ to the augmented JP problem 
	\begin{equation} \label{opt:extBP_robust}
\begin{aligned}
&(\betahat^{(k)}_{\lambda}, \omegahat^{(k)}_{\lambda}, \gammahat^{(k)}_{\lambda}) \in &&\argmin_{\beta \in \R^p,\ \omega \in \R^n,\ \gamma \in \R^n} && \norm{\beta}_1 + \lambda \norm{\omega}_1 + \norm{\gamma}_1
\\
& && \quad \text{ s.t. } && y = X \beta + \sqrt{n} \omega + G^{(k)} \gamma.
\end{aligned}
\end{equation}
	\end{enumerate}
\item \label{stage_Rlass0:median} Define the vector $\betahat^{\textrm{med}}_\lambda$ by 
\[
\betahat^{\textrm{med}}_{\lambda, j} := \operatorname{median}\{\betahat^{(k)}_{\lambda, j}, k = 1, \ldots, M\} \quad \text{for every} \ j \in [p].
\]
\item \label{stage_Rlass0:threshold} Calculate the estimate
$\betahat^{\textrm{Rlass0}}_{(\lambda, \tau)} := \eta_\tau(\betahat^{\textrm{med}}_{\lambda})$,
where $\eta_{\tau}(x) = x \bm{1}_{(\tau, +\infty)}(\abs{x})$  hard-thresholds component-wise.
\end{enumerate}
%\end{algorithmic}
\caption{\label{algo:robust-lass0} Robust Lasso-Zero}
\end{algorithm}

Since the minimization problem \eqref{opt:extBP_robust} in Algorithm \ref{algo:robust-lass0} can be recast as a linear program, any relevant solver can be used (e.g., proximal methods).
Algorithm \ref{algo:robust-lass0} includes {three apparent hyper-parameters: the number $M$ of samplings, the regularization parameter $\lambda$ of \eqref{opt:JP_newscaling},  the thresholding parameter $\tau$ of the Robust Lasso-Zero methodology}. Their choice in practice is discussed in Section~\ref{section:qut}{, making the methodology quasi hyperparameter-free}.

\subsection{Theoretical guarantees on the Thresholded Justice Pursuit} \label{sec:TJP}

Discarding the noise dictionaries in Algorithm~\ref{algo:robust-lass0} amounts to thresholding the solution $(\betahat^{\operatorname{JP}}_\lambda, \omegahat^{\operatorname{JP}}_\lambda)$ of the Justice Pursuit problem~(\ref{opt:JP_newscaling}). The Robust Lasso-Zero can therefore be regarded as an extension of this simpler estimator, via \textit{Thresholded Justice Pursuit} (TJP):
\begin{equation}
\betahat^{\textrm{TJP}}_{(\lambda, \tau)} = \eta_\tau(\betahat^{\operatorname{JP}}_\lambda) \qquad {\rm and}\qquad 
\omegahat^{\textrm{TJP}}_{(\lambda, \tau)} = \eta_\tau(\omegahat^{\operatorname{JP}}_\lambda).
\end{equation}
We present two results about sign consistency of TJP. 
{Note that the statistical analysis of Robust Lasso-Zero methodology would be more mathematically involved, and is actually beyond the scope of this paper.}

\subsubsection{Identifiability as a necessary and sufficient condition for consistent sign recovery} \label{sec:ext_tardivel}

First introduced in \citet{tardivel2019sign} for the TBP, we propose the following extension of the identifiability notion for the TJP {characte\-ri\-zed by the uniqueness of the solution to JP.}
\begin{definition}
The pair $(\beta^0, \omega^0) \in \R^p \times \R^n$ is said to be \textit{identifiable} with respect to $X \in \R^{n \times p}$ and the parameter $\lambda > 0$ if it is the unique solution to JP~(\ref{opt:JP_newscaling}) when $y = X \beta^0 + \sqrt{n} \omega^0$.
\end{definition}

{In the case of JP, we provide the following characterization of identifiability, which depends on the sign vectors   $\sign(\beta^0)$ and $\sign(\omega^0)$.}

\begin{lemma} \label{lemma:identifiability_JP}
The pair $(\beta^0, \omega^0) \in \R^p \times \R^n$ is identifiable with respect to $X \in \R^{n \times p}$ and the parameter $\lambda > 0$ if and only if for every pair $(\beta, \omega) \neq (0, 0)$ such that $X \beta + \sqrt{n} \lambda^{-1} \omega = 0,$
\[
\abs{\sign(\beta^0)^T \beta + \sign(\omega^0)^T \omega} < \norm{\beta_{\Sbar}}_1 + \norm{\omega_{\Tbar}}_1.
\]
\end{lemma}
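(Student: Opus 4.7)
The plan is to reduce the Justice Pursuit identifiability question to the already-known Basis Pursuit characterization of \citet{tardivel2019sign} by a rescaling of the corruption variable, and then translate the resulting condition back into $(\beta,\omega)$ coordinates. Because $\lambda>0$, the invertible substitution $\tilde\omega := \lambda\omega$ turns \eqref{opt:JP_newscaling} into the standard Basis Pursuit problem
\begin{equation*}
\min_{\beta\in\R^p,\,\tilde\omega\in\R^n}\ \norm{\beta}_1+\norm{\tilde\omega}_1 \quad \text{s.t.}\quad y = A\begin{bmatrix}\beta\\ \tilde\omega\end{bmatrix},\qquad A:=[X\ \sqrt n\,\lambda^{-1} I_n].
\end{equation*}
Hence $(\beta^0,\omega^0)$ is JP-identifiable with respect to $X$ and $\lambda$ if and only if the augmented vector $\alpha^0:=(\beta^0,\lambda\omega^0)\in\R^{p+n}$ is BP-identifiable with respect to $A$. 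Since $\lambda>0$, the support of $\alpha^0$ is $S^0\cup(p+T^0)$ and $\sign(\alpha^0)=(\sign(\beta^0),\sign(\omega^0))$; nothing in the sign/support structure is changed by the rescaling.

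Next I would apply the BP characterization from \citet{tardivel2019sign}: $\alpha^0$ is identifiable with respect to $A$ iff for every $h\in\ker(A)\setminus\{0\}$,
\begin{equation*}
\bigl|\sign(\alpha^0)^T h\bigr|<\norm{h_{\overline{\supp(\alpha^0)}}}_1.
\end{equation*}
Writing $h=(\beta,\tilde\omega)\in\R^p\times\R^n$, the kernel constraint $Ah=0$ reads $X\beta+\sqrt n\,\lambda^{-1}\tilde\omega=0$, while the inequality splits, because the two blocks of $\sign(\alpha^0)$ and of $\overline{\supp(\alpha^0)}$ decouple, into
\begin{equation*}
\bigl|\sign(\beta^0)^T\beta+\sign(\omega^0)^T\tilde\omega\bigr|<\norm{\beta_{\Sbar}}_1+\norm{\tilde\omega_{\Tbar}}_1.
\end{equation*}
Renaming $\tilde\omega$ back to $\omega$ (the inequality ranges over all such pairs), this is exactly the condition stated in the lemma, so the equivalence is established.

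If one prefers a self-contained argument rather than citing the BP result, the same proof can be given directly on JP using the subdifferential inequality
\begin{equation*}
\norm{\beta^0+\beta}_1+\lambda\norm{\omega^0+\omega}_1\ \geq\ \norm{\beta^0}_1+\lambda\norm{\omega^0}_1+\sign(\beta^0)^T\beta+\norm{\beta_{\Sbar}}_1+\lambda\bigl(\sign(\omega^0)^T\omega+\norm{\omega_{\Tbar}}_1\bigr),
\end{equation*}
obtained coordinatewise from $|a+b|\geq|a|+\sign(a)\,b$ for $a\neq 0$ and $|b|=|b|$ for $a=0$. Sufficiency of the condition (after absorbing $\lambda$ into $\tilde\omega$) then follows immediately, while necessity comes from perturbing $(\beta^0,\omega^0)$ along the offending direction $-t(\beta,\lambda^{-1}\tilde\omega)$ for small $t>0$, which either strictly decreases or preserves the objective; in the latter case one gets a second minimizer, contradicting uniqueness. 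Replacing $h$ by $-h$ in the kernel handles the absolute value.

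The only real subtlety — more bookkeeping than obstacle — is keeping the $\lambda$-scaling consistent between the two formulations: one must verify that $\sign(\lambda\omega^0)=\sign(\omega^0)$ (true because $\lambda>0$), that $\lambda\norm{\omega_{\Tbar}}_1=\norm{(\lambda\omega)_{\Tbar}}_1$, and that the constraint $y=X\beta+\sqrt n\,\omega$ becomes $y=A(\beta,\lambda\omega)$ under the change of variables, so that kernel vectors of $A$ are precisely the pairs $(\beta,\omega)$ with $X\beta+\sqrt n\,\lambda^{-1}\omega=0$ that appear in the statement.
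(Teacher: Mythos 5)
Your proposal is correct and follows essentially the same route as the paper: rescale $\omega\mapsto\lambda\omega$ to turn JP into Basis Pursuit with the augmented matrix $[X\ \sqrt{n}\lambda^{-1}I_n]$, then invoke the standard null-space characterization of uniqueness of the BP solution (the paper cites \citet{daubechies2010} for this step rather than \citet{tardivel2019sign}, but the criterion is the same) and translate it back, noting that $\lambda>0$ preserves signs and supports. The additional self-contained subdifferential argument you sketch is a fine alternative but is not needed.
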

\begin{proof}
See Appendix \ref{app:proof_thm_ext_tardivel}. 
\end{proof}

{Note that recently, \citet{schneider2020geometry} provide a necessary and sufficient condition for the uniqueness of the solution to BP and show that the set of design matrices which do not satisfy this condition is negligible with respect to the Lebesgue measure on $\R^{n\times p}$.}

\begin{remark}[Identifiability of JP vs. BP]
{Let us observe that $(\beta_0,\omega_0)$ is identifiable for JP w.r.t.\ $X$ and $\lambda$ then $\beta_0$ is identifiable for BP w.r.t.\ $X$ (i.e.\ in the noiseless case and without corruptions, $\beta_0$ can be recovered by solving BP). Moreover, when the observations are not corrupted (i.e.\ when $y = X \beta_0 + \epsilon$), the identifiability of $\beta_0$ for BP w.r.t.\ $X$ is actually a necessary and sufficient condition for sign recovery by thresholded BP (see \cite{tardivel2019sign}). Consequently, this corroborates the  intuitive idea that sign recovery is a more difficult task via thresholded JP in the corrupted case than via thresholded BP in the non-corrupted case.}
\end{remark}

In order to show that identifiability is necessary and sufficient for TJP to consistently recover $\sign(\beta^0)$ and $\sign(\omega^0)$, {we consider} a fixed matrix $X \in \R^{n \times p}$ and a sequence {of parameters} $\{ (\beta^{(r)}, \omega^{(r)})\}_{r \in \N^*}$ {such that} the following holds:
\begin{enumerate}[(i)]
\item \label{ass2:sign_invariance} there exist sign vectors $\theta \in \{1, -1, 0\}^p$ and $\tilde{\theta} \in \{1, -1, 0\}^n$ such that $\sign(\beta^{(r)}) = \theta$ and $\sign(\omega^{(r)}) = \tilde{\theta}$ for every $r \in \N^*,$
\item \label{ass2:inf_betamin} $\lim_{r \to +\infty} \min\{\beta^{(r)}_{\min}, \omega^{(r)}_{\min}\} = +\infty,$ {where $\beta_{\min}^{(r)}:= \min_{j \in \supp(\beta)} \abs{\beta_j^{(r)}},$ and $\omega_{\min}^{(r)}:= \min_{j \in \supp(\omega)} \abs{\omega_j^{(r)}}$.}
\item \label{ass2:bounded_ratio} there exists $q > 0$ such that $\frac{\min\{\beta^{(r)}_{\min}, \omega^{(r)}_{\min}\}}{\max\{\norm{\beta^{(r)}}_{\infty}, \norm{\omega^{(r)}}_{\infty}\}} \geq q.$ 
\end{enumerate}
{Note that} these assumptions are similar to the ones of \citet{tardivel2019sign}. {Assumption \eqref{ass2:sign_invariance} requires that each parameter of the sequence $\{ (\beta^{(r)}, \omega^{(r)})\}_{r \in \N^*}$ have invariant sign vectors. 
For a fixed design matrix and a fixed $\lambda$, Assumptions \eqref{ass2:inf_betamin} and \eqref{ass2:bounded_ratio} are asymptotic statements, meaning that the nonzero absolute components of the sequence parameters $\{ (\beta^{(r)}, \omega^{(r)})\}_{r \in \N^*}$ tend to infinity in a certain sense.}
We use the notation $S^0 := \supp(\theta) = \supp(\beta^{(r)})$ and $T^0 := \supp(\tilde{\theta}) = \supp(\omega^{(r)}).$ {For each parameter $(\beta^{(r)}, \omega^{(r)}), r\in \mathbb{N}^*$, let us consider the sparse corruption problem  $y^{(r)} := X \beta^{(r)} + \sqrt{n} \omega^{(r)} + \epsilon$.} 
We denote by $(\betahat^{\operatorname{JP}(r)}_{\lambda}, \omegahat^{\operatorname{JP}(r)}_{\lambda})$ the JP solution when $y = y^{(r)}$ and $(\betahat^{\operatorname{TJP}(r)}_{(\lambda, \tau)}, \omegahat^{\operatorname{TJP}(r)}_{(\lambda, \tau)})$ the corresponding TJP estimates.

\begin{theorem} \label{thm:ext_tardivel}
{
Let $\lambda > 0$ and let $X$ be a matrix of size $n \times p$ such that for any $y \in \R^n,$ the solution to JP~(\ref{opt:JP_newscaling}) is unique.

\textbf{Necessary condition}: If there exists $\tau \geq 0$ such that 
$$
\sign (\hat{\beta}^{TJP}_{\lambda, \tau}) = \sign (\beta_0) \qquad \text{and} \qquad \sign (\hat{\omega }^{TJP}_{\lambda, \tau} ) =\sign (\omega_0)
$$
then $(\beta_0,\omega_0)$ is identifiable with respect to $X$ and $\lambda$.

\textbf{Sufficient condition} (asymptotically): Let $\{(\beta^{(r)}, \omega^{(r)})\}_{r \in \N^*}$ be a sequence satisfying Assumptions~\eqref{ass2:sign_invariance}-\eqref{ass2:bounded_ratio} above. If the pair of sign vectors $(\theta, \theta')$ is identifiable with respect to $X$ and $\lambda$, then for
every $\epsilon\in\mathbb{R}^n$, there exists $R = R(\epsilon) > 0$ such that for every $r \geq R$ there is a threshold $\tau = \tau(r) > 0$ for which
\begin{equation} \label{TJP_equal_signs}
\sign (\hat{\beta}^{TJP}_{\lambda, \tau}) = \theta \qquad \text{and}
\qquad \sign(\hat{\omega }^{TJP}_{\lambda, \tau}) = \theta'.
\end{equation}
}
\end{theorem}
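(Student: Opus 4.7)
The theorem has two parts; I would tackle them separately, in each case leveraging Lemma~\ref{lemma:identifiability_JP} as the main structural tool.

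For the necessary condition, the natural approach is contraposition in the noiseless setting $y = X\beta_0 + \sqrt{n}\omega_0$. Assume $(\beta_0,\omega_0)$ is not identifiable but that TJP recovers its signs for some $\tau \geq 0$. Writing $\delta = \betahat^{\operatorname{JP}}_\lambda - \beta_0$ and $\delta' = \omegahat^{\operatorname{JP}}_\lambda - \omega_0$, feasibility of both pairs yields $X\delta + \sqrt{n}\delta' = 0$, so $(\beta^*,\omega^*) := (-\delta,-\lambda\delta')$ lies in the kernel of $[X\ \sqrt{n}\lambda^{-1}I_n]$. The sign-match on $S^0$ (with $|\hat{\beta}^{\operatorname{JP}}_{\lambda,j}| > \tau$ there) and the off-support bound $|\hat{\beta}^{\operatorname{JP}}_{\lambda,j}| \leq \tau$, together with the analogous statements for $\omega$, permit the expansion $\|\betahat^{\operatorname{JP}}_\lambda\|_1 = \|\beta_0\|_1 + \sign(\beta_0)^T\delta + \|\delta_{\Sbar}\|_1$ (and symmetrically for $\omega$). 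Combining with the JP optimality inequality $\|\betahat^{\operatorname{JP}}_\lambda\|_1 + \lambda\|\omegahat^{\operatorname{JP}}_\lambda\|_1 \leq \|\beta_0\|_1 + \lambda\|\omega_0\|_1$ produces $\sign(\beta_0)^T\beta^* + \sign(\omega_0)^T\omega^* \geq \|\beta^*_{\Sbar}\|_1 + \|\omega^*_{\Tbar}\|_1$. Non-identifiability together with the uniqueness of the JP solution forces $(\betahat^{\operatorname{JP}}_\lambda,\omegahat^{\operatorname{JP}}_\lambda) \neq (\beta_0,\omega_0)$, and hence strictness in JP optimality, which upgrades the above to a strict violation of Lemma~\ref{lemma:identifiability_JP}---the desired contradiction.

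For the sufficient direction, the plan is to show that, when the signal magnitudes of $(\beta^{(r)},\omega^{(r)})$ blow up while $\epsilon$ stays fixed, the JP solution remains $\ell_\infty$-close to the truth uniformly in $r$. I would first recast JP as BP on the augmented matrix $A_\lambda := [X\ \sqrt{n}\lambda^{-1}I_n]$ applied to the stacked signal $z_0^{(r)} := (\beta^{(r)},\lambda\omega^{(r)})$; identifiability of $(\theta,\theta')$ for JP with respect to $X,\lambda$ translates to BP identifiability of this sign pattern for $A_\lambda$. A dual-certificate argument built from the strict margin of Lemma~\ref{lemma:identifiability_JP} then yields an $\ell_\infty$-stability estimate of the form $\|\hat{z}^{\operatorname{BP}(r)} - z_0^{(r)}\|_\infty \leq C\|\epsilon\|_2$, with $C$ depending only on $X$, $\lambda$, and the sign pattern $(\theta,\theta')$---crucially, not on the magnitudes of $\beta^{(r)}, \omega^{(r)}$, since only the signs enter the identifiability condition. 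Once this is in hand, Assumption~\eqref{ass2:inf_betamin} guarantees that for large $r$ the JP estimates have magnitudes at least $\min\{\beta^{(r)}_{\min},\omega^{(r)}_{\min}\} - C\|\epsilon\|_2 \to \infty$ on $S^0 \cup T^0$ with the correct signs (a bounded error cannot flip a coordinate that grows unboundedly), while staying $\leq C\|\epsilon\|_2$ off-support; choosing $\tau(r) = \tfrac12 \min\{\beta^{(r)}_{\min},\omega^{(r)}_{\min}\}$ lies between the two scales and yields~\eqref{TJP_equal_signs}.

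The main obstacle is extracting the uniform quantitative $\ell_\infty$-stability from the purely qualitative identifiability condition of Lemma~\ref{lemma:identifiability_JP}. A compactness argument on the unit sphere of the relevant kernel, or an explicit dual-certificate construction following the BP sign-recovery literature, is required to convert the strict-inequality margin into an effective constant $C$ depending only on the sign pattern $(\theta,\theta')$. Ensuring this independence from the signal magnitudes---so that Assumption~\eqref{ass2:bounded_ratio} together with the divergence of the minimum coordinates does the remaining work---is the technical crux of the proof.
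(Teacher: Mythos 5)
Your necessary-condition argument does not close. After combining the sign-match expansion with JP optimality you obtain a nonzero kernel element $(\beta^*,\omega^*)$ of $\begin{bmatrix} X & \sqrt{n}\lambda^{-1}I_n\end{bmatrix}$ satisfying $\sign(\beta_0)^T\beta^*+\sign(\omega_0)^T\omega^*\geq \norm{\beta^*_{\Sbar}}_1+\norm{\omega^*_{\Tbar}}_1$. By Lemma~\ref{lemma:identifiability_JP} this is precisely the statement that $(\beta_0,\omega_0)$ is \emph{not} identifiable --- i.e.\ the hypothesis of your contraposition --- so deriving it (even with strict inequality) contradicts nothing; the argument is circular and never reaches a contradiction. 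You have also restricted to the noiseless observation $y=X\beta_0+\sqrt{n}\omega_0$, whereas the paper proves necessity for observations carrying an arbitrary dense noise $\epsilon$ (your kernel identity $X\delta+\sqrt{n}\delta'=0$ already fails when $\epsilon\neq 0$). The paper's route is different: since the JP solution is unique for every $y$, the pair $(\betahat^{\operatorname{JP}}_\lambda,\omegahat^{\operatorname{JP}}_\lambda)$ is \emph{automatically} identifiable, being the unique minimizer for its own right-hand side; sign recovery after thresholding means $(\theta,\tilde\theta)$ is sign-dominated by $(\sign(\betahat^{\operatorname{JP}}_\lambda),\sign(\omegahat^{\operatorname{JP}}_\lambda))$, and a triangle-inequality manipulation of the criterion of Lemma~\ref{lemma:identifiability_JP} transfers identifiability from the dominating sign vector to the dominated one. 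No comparison of objective values with $(\beta_0,\omega_0)$ is needed, and the argument is insensitive to $\epsilon$.

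For the sufficient direction, your plan hinges on a uniform estimate $\norm{\hat z^{\operatorname{BP}(r)}-z_0^{(r)}}_\infty\leq C\norm{\epsilon}_2$ with $C$ independent of the signal magnitudes, which you flag as the crux but do not establish; the strict inequality of Lemma~\ref{lemma:identifiability_JP} is a purely qualitative condition on a single sign pattern (strictly weaker than a stable null space property) and does not by itself produce such a quantitative margin. Observe also that such a bound would render Assumption~\eqref{ass2:bounded_ratio} superfluous, a warning sign that it does not follow as cheaply as sketched. The paper proves only the weaker \emph{relative} statement $\norm{\betahat^{\operatorname{JP}(r)}_\lambda-\beta^{(r)}}_1+\lambda\norm{\omegahat^{\operatorname{JP}(r)}_\lambda-\omega^{(r)}}_1=o(u_r)$ with $u_r=\norm{\beta^{(r)}}_1+\lambda\norm{\omega^{(r)}}_1$ (Lemma~\ref{lemma:auxiliary_tardivel}), via a bounded-subsequence argument in which any limit point of the normalized signals must itself be shown identifiable --- a second sign-domination step, needed because coordinates may vanish in the limit --- and therefore coincides with the limit of the normalized JP solutions. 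The threshold is then $\tau=\frac{q}{2}\max\{\norm{\beta^{(r)}}_\infty,\norm{\omega^{(r)}}_\infty\}$, and Assumption~\eqref{ass2:bounded_ratio} is exactly what guarantees $\beta^{(r)}_{\min}\geq 2\tau$. Your reduction of JP to BP on the augmented matrix does match the paper's proof of Lemma~\ref{lemma:identifiability_JP}, but the two steps above are the missing content of the proof.
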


\begin{proof}
See Appendix \ref{app:proof_thm_ext_tardivel}. 
\end{proof}

\begin{remark}
One might be interested in recovering the signs of the sparse corruption. If $\omega^{(r)}$ is considered as noise, then only the recovery of $\sign(\beta^{(r)})$ matters. In this case one could weaken Assumptions~\eqref{ass2:inf_betamin} and~\eqref{ass2:bounded_ratio} above by replacing $\min\{\beta^{(r)}_{\min}, \omega^{(r)}_{\min}\}$ by $\beta^{(r)}_{\min},$ and identifiability of $(\theta, \tilde{\theta})$ would be sufficient for recovering $\sign(\beta^0).$ However, recovery of both $\sign(\beta^{(r)})$ and $\sign(\omega^{(r)})$ is needed for proving necessity of identifiability.
\end{remark}

Identifiability of sign vectors is necessary and sufficient for sign recovery when the nonzero coefficients are large. 
In the next section, we particularize these results to Gaussian designs,
and prove that sign consistency holds,
allowing $p, s$ and $k$ to grow with the sample size $n$, in a classical way.

\subsubsection{Sign consistency of TJP for correlated Gaussian designs} \label{sec:sign_consistency_gaussian}
We make the following assumptions:
\begin{enumerate}[(i)]
\setcounter{enumi}{3}
\item \label{assumption:Gaussdesign} the rows of $X \in \R^{n \times p}$ (with $n < p$) are random and i.i.d. $\mathcal{N}(0, \Sigma)$;
\item \label{assumption:boundedsigmamin} The smallest eigenvalue of the covariance matrix $\Sigma$ is assumed to be positive: $\lambda_{\min}(\Sigma) > 0$,
\item \label{assumption:var1} the variance of the covariates is equal to one: $\Sigma_{ii}=1$ for every $i \in [p]$;
\item \label{assumption:Gaussiannoise} the noise is assumed to be Gaussian $\epsilon \sim \mathcal{N}(0, \sigma^2 I_n)$ {and independent of the columns of $X$.}
\end{enumerate}
Assumptions~\eqref{assumption:Gaussdesign} and \eqref{assumption:boundedsigmamin} imply that  almost surely $\rank{X} = n$.
\begin{prop}\label{thm:TJP} 
Under Assumptions~\eqref{assumption:Gaussdesign}-\eqref{assumption:Gaussiannoise}, choosing
$
\lambda = \frac{1}{\sqrt{\log{p}}}
$
ensures with probability greater than $1 - ce^{-c'n} - 1.14^{-n} - 2e^{-\frac{1}{8}(\sqrt{p} - \sqrt{n})^2}$, that there exists a value of $\tau > 0$ such that 
\[
\sign (\betahat^{\operatorname{TJP}}_{(\lambda, \tau)} ) = \sign (\beta^0),
\]
provided that 
\begin{align}
&n \geq C \frac{\kappa (\Sigma) }{\lambda_{\min}(\Sigma)} s \log{p},& \label{lb:n_TJP}
\\
&\frac{n}{k} \geq \max\left\{ \frac{1}{C'},\frac{\kappa (\Sigma)}{C'' } \right\}  \label{ub:k0},
\\
&\beta^0_{\min} >  \frac{10 \sqrt{2}\max\{1, \lambda\}  \sigma  \sqrt{p+n}}{\left(\frac{\lambdamin(\Sigma)}{4}(\sqrt{p/n} - 1)^2 + 1 \right)^{1/2}},& \label{beta_min}
\end{align}
where $\kappa(\Sigma):=\frac{\lambdamax(\Sigma)}{\lambdamin(\Sigma)}$ is the conditioning number of $\Sigma$, and $C, C', C''$ are some numerical constants with $C \geq 144^2.$
\end{prop}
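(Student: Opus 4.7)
The first move is to recast JP~\eqref{opt:JP_newscaling} as a Basis Pursuit program on an augmented design. Setting $\tilde{\omega} := \lambda\omega$ transforms the objective into $\|\beta\|_1 + \|\tilde{\omega}\|_1$ under the constraint $y = \tilde{X}_\lambda \bigl[\beta^\top\ \tilde{\omega}^\top\bigr]^\top$ with $\tilde{X}_\lambda := \bigl[\,X\ \ (\sqrt{n}/\lambda)\, I_n\,\bigr]$. The signal to recover is thus $\tilde{\beta}^0 := (\beta^0,\lambda\omega^0)$, whose support $\tilde{S}^0$ has cardinality $s+k$. Sign consistency of the TJP estimate for $\beta^0$ then reduces to sign consistency of Thresholded Basis Pursuit applied to $\tilde{X}_\lambda$, and the plan is to adapt the proof scheme of \cite{descloux2018}, established for correlated Gaussian designs, to this hybrid Gaussian--identity matrix.

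Two concentration events will drive the probability bound $1-ce^{-c'n} - 1.14^{-n} - 2e^{-(\sqrt{p}-\sqrt{n})^2/8}$. The first controls the restricted spectrum of $\tilde{X}_{\lambda,\tilde{S}^0} = \bigl[\,X_{S^0}\ \ (\sqrt{n}/\lambda)\, I_{T^0}\,\bigr]$. A Schur-complement argument reduces $\lambda_{\min}\bigl(\tilde{X}_{\lambda,\tilde{S}^0}^\top \tilde{X}_{\lambda,\tilde{S}^0}\bigr)$ to the minimum eigenvalue of $X_{S^0}^\top(I_n - P_{T^0})X_{S^0}$, where $P_{T^0}$ is the orthogonal projection onto $\operatorname{span}\{e_i : i \in T^0\}$, and standard Gaussian-submatrix concentration bounds this from below by $c\, n\,\lambda_{\min}(\Sigma)$ under hypotheses~\eqref{lb:n_TJP} and~\eqref{ub:k0}: the former forces $s$ to be of order $n/\log p$, while the latter keeps $k$ a bounded fraction of $n$ so that the projection does not remove too many directions. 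This event fails with probability at most $ce^{-c'n}$.

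The second event is an incoherence/dual-certificate bound guaranteeing that no off-support column of $\tilde{X}_\lambda$ can be approximately reproduced from on-support columns with $\ell_1$-cost below one. The two types of off-support columns must be treated separately: off-support Gaussian columns $X_j$, $j \notin S^0$, yield the $1.14^{-n}$ term via a Gordon-type deviation for the norm of $(I_n - P_{\tilde{S}^0})X_j$ (as in \cite{descloux2018}), while off-support identity columns $(\sqrt{n}/\lambda)\, e_i$, $i \notin T^0$, require controlling $\sigma_{\max}(X)$, whose upper tail produces the last term $2e^{-(\sqrt{p}-\sqrt{n})^2/8}$. Once both events hold, a KKT/dual-certificate argument yields an $\ell_\infty$ bound on $\betahat^{\operatorname{JP}}_\lambda - \beta^0$ whose magnitude matches the right-hand side of~\eqref{beta_min}, so that the lower bound on $\beta^0_{\min}$ allows a threshold $\tau$ separating the signal entries from zero ones, proving $\sign\bigl(\betahat^{\operatorname{TJP}}_{(\lambda,\tau)}\bigr) = \sign(\beta^0)$.

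The main obstacle is the mixed Gaussian--identity nature of $\tilde{X}_\lambda$, which is not directly covered by the existing TBP sign-recovery results of \cite{descloux2018}: the identity block is perfectly conditioned but injects $n$ additional variables, and the cross-terms between Gaussian and identity columns must be controlled explicitly rather than absorbed into a single covariance model. The specific scaling $\lambda = 1/\sqrt{\log p}$ is precisely what inflates the identity block to magnitude $\sqrt{n\log p}$, the level at which the identity columns outshine a typical standard Gaussian column; this balance is ultimately what permits $k$ to grow linearly in $n$ while $s$ must remain of order $n/\log p$.
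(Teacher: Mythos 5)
Your high-level skeleton (reduce JP to Basis Pursuit on an augmented matrix, establish a design event with the stated probability, derive an $\ell_\infty$ error bound, then threshold) matches the paper's, but the core mechanism you propose is different and does not go through under the stated hypotheses. Your second event --- a dual-certificate/incoherence condition asserting that no off-support column of $\tilde{X}_\lambda$ can be approximately synthesized from on-support columns at $\ell_1$-cost below one --- is an exact-recovery (irrepresentability-type) condition. The proposition only assumes $\lambdamin(\Sigma)>0$ and $\Sigma_{ii}=1$; it places no incoherence restriction on $\Sigma$, and for correlated Gaussian designs with $s\asymp n/\log p$ such a certificate generally does not exist. The entire point of the TBP/TJP analysis is to avoid irrepresentability: the paper instead verifies a \emph{stable null space property} for the pair $(S^0,T^0)$ with constant $1/3$ (implied by the extended restricted eigenvalue condition of Nguyen and Tran, which is where $\lambda=1/\sqrt{\log p}$, \eqref{lb:n_TJP} and \eqref{ub:k0} enter and where the $ce^{-c'n}$ term comes from), and then invokes an $\ell_1$-stability lemma for JP (an extension of Theorem 4.14 of Foucart--Rauhut). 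Your KKT step would, if it worked, give exact support recovery of the un-thresholded JP solution, which is both stronger than needed and unobtainable here.

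You also miss the device that produces the specific form of \eqref{beta_min}: the paper writes the dense noise as $\epsilon=\Xtilde\tilde{\nu}$ with $\tilde{\nu}=\Xtilde^T(\Xtilde\Xtilde^T)^{-1}\epsilon$ the minimum-norm preimage, so that $y=X(\beta^0+\betatilde)+\sqrt{n}(\omega^0+\omegatilde)$ is a \emph{noiseless} corruption model with perturbed parameters; the stability lemma then gives $\norm{\betahat^{\operatorname{JP}}_\lambda-\beta^0}_\infty\le 5\max\{1,\lambda\}\norm{\tilde{\nu}}_1\le 5\max\{1,\lambda\}\sqrt{p+n}\,\norm{\tilde{\nu}}_2$, and $\norm{\tilde{\nu}}_2\le\sqrt{2n}\,\sigma/\sigmamin(\Xtilde)$, whose denominator is exactly the expression in \eqref{beta_min} via $\sigmamin^2(\Xtilde)=\sigmamin^2(X)+n$. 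Relatedly, your attribution of the probability terms is wrong: $1.14^{-n}$ is the $\chi^2_n$ upper tail of $\norm{\epsilon/\sigma}_2^2\le 2n$ (not a Gordon bound on residual column norms), and $2e^{-\frac{1}{8}(\sqrt{p}-\sqrt{n})^2}$ is the \emph{lower} tail of $\sigmamin(G)$ for the fat Gaussian factor $G$ in $X=G\Sigma^{1/2}$ (needed to lower-bound $\sigmamin(\Xtilde)$), not an upper bound on $\sigma_{\max}(X)$. As written, the proposal would need to be rebuilt around the null-space-property/stability route to prove the proposition as stated.
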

\begin{proof}
See Appendix \ref{app:proof_thm_TJP}.
\end{proof}

Note that the lower-bound required on $\beta^0_{\min}$ in Proposition \ref{thm:TJP} is {strong but complies with the one required for TBP in}  \citep{descloux2018}. {We believe that this limitation, which is not observed in simulations, is a proof artefact that could be relaxed.} {Besides, the advantage of noise dictionaries used in the Lasso-Zero in \citep{descloux2018} (and thus for the Robust Lasso-Zero) is twofold: first it allows to better separate signal from noise, second it allows a form of resampling which, in the spirit of stability selection, allows to better identify needles in the haystack.}

Note also that the conditioning number of $\Sigma$ comes into play in the lower-bounds required on $n$ and $k$. This quantity indeed seems natural to arise in the sparse corruption problem helping discriminating design instability from corruptions.

{Corollary \ref{theo:add_gauss} trivially follows from Proposition \ref{thm:TJP} and ensures that, for correlated Gaussian designs and signal-to-noise ratios high enough, if $\Sigma$ is well-conditioned, TJP succesfully recovers $\sign (\beta^0)$ wih high probability, even with a positive fraction of corruptions. }
\begin{corollary} \label{theo:add_gauss} {Let $X \in \mathbb{R}^{n\times p}$ be a Gaussian matrix satisfying Assumptions~\eqref{assumption:Gaussdesign}-\eqref{assumption:Gaussiannoise}. Assume also that the eigenvalues of covariance matrix $\Sigma$ are bounded $0 < \gamma_1 \leq \lambdamin(\Sigma) \leq \lambdamax(\Sigma) \leq \gamma_2$.

\textbf{Sufficient condition}: If $p/n \to \delta > 1$, one has
$$\mathbb{P}\left(\exists \tau>0, \sign (\betahat^{\operatorname{TJP}}_{(\lambda, \tau)} ) = \sign (\beta^0)\right) \geq 1 - (c+2e^{-\frac{1}{8}p})e^{-c'n},$$
with $c$ and $c'$ some numerical constants and provided that
$n = \Omega(s\log{p}),$ $k = \mathcal{O}(n)$ and $\beta^0_{\min} = \Omega(\sqrt{n})$}\footnote{{Recall that $f(n)=\Omega(g(n))$ (resp. $f(n)=\mathcal{O}(g(n))$) means that there exists $\kappa>0$ (resp. $\kappa'>0$) such that $|f(n)|\geq \kappa|g(n)|$ (resp. {$|f(n)|\leq \kappa'|g(n)|$})  for large $n$ enough.}}.
\end{corollary}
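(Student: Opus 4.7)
The plan is to verify that each of the three conditions \eqref{lb:n_TJP}, \eqref{ub:k0}, \eqref{beta_min} of Proposition \ref{thm:TJP} reduces to the simpler hypotheses of the corollary under the extra boundedness assumption on the spectrum of $\Sigma$, and to simplify the probability bound using $p/n \to \delta > 1$. The whole argument should be book-keeping rather than introducing new probabilistic tools; indeed the statement claims the result follows "trivially."

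First I would control the spectral quantities that appear in Proposition \ref{thm:TJP}. The hypothesis $0 < \gamma_1 \leq \lambdamin(\Sigma) \leq \lambdamax(\Sigma) \leq \gamma_2$ yields $\kappa(\Sigma) \leq \gamma_2/\gamma_1$ and $\kappa(\Sigma)/\lambdamin(\Sigma) \leq \gamma_2/\gamma_1^2$, both constants independent of $n, p$. Thus \eqref{lb:n_TJP} becomes $n \gtrsim s\log p$, matching $n = \Omega(s\log p)$ with a sufficiently large hidden constant; and \eqref{ub:k0} reduces to $n/k \geq \max\{1/C', \gamma_2/(C''\gamma_1)\}$, which is exactly the statement $k = \mathcal{O}(n)$ with a sufficiently small hidden constant.

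Next I would handle the $\betamin^0$ condition \eqref{beta_min}. Since $\lambda = 1/\sqrt{\log p} \to 0$, one has $\max\{1, \lambda\} = 1$ for $p$ large enough. Because $p/n \to \delta > 1$, eventually $(\sqrt{p/n} - 1)^2$ is bounded below by the positive constant $(\sqrt{\delta}-1)^2/2$, so the denominator $\bigl(\tfrac{\lambdamin(\Sigma)}{4}(\sqrt{p/n}-1)^2 + 1\bigr)^{1/2}$ is bounded below by a positive constant depending only on $\gamma_1$ and $\delta$. For the numerator, $\sqrt{p+n} = \sqrt{n}\sqrt{p/n + 1} = O(\sqrt{n})$. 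Combining, \eqref{beta_min} reduces to $\betamin^0 \gtrsim \sigma\sqrt{n}$, which is exactly $\betamin^0 = \Omega(\sqrt{n})$ once $\sigma$ is absorbed into the constant.

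Finally I would absorb the three error terms. The term $ce^{-c'n}$ is already of the required form. Since $\lambda=1/\sqrt{\log p}$ and $p \to \infty$, the factor $1.14^{-n}$ is bounded above by $e^{-c''n}$ for some $c''>0$, hence absorbable into $ce^{-c'n}$ after adjusting constants. For the last term, $(\sqrt{p}-\sqrt{n})^2 = n(\sqrt{p/n}-1)^2 \geq n(\sqrt{\delta}-1)^2/2$ eventually, but the corollary keeps it as $2e^{-p/8}$; this is justified since $p = \Theta(n)$, so up to constants $e^{-\frac{1}{8}(\sqrt{p}-\sqrt{n})^2} \leq e^{-c''' p}$, which is the form $2 e^{-p/8} e^{-c'n}$ in the stated bound (absorbing adjustments into $c, c'$). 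Putting these together gives the claimed probability $1 - (c + 2e^{-p/8})e^{-c'n}$. The only mild subtlety, and the one I would write most carefully, is the uniform bound of $1.14^{-n}$ by $e^{-c''n}$ and the verification that the $\sqrt{p/n}-1$ term remains bounded away from zero for $n$ large enough so that all the reductions above are valid simultaneously; all other steps are immediate substitutions into Proposition \ref{thm:TJP}.
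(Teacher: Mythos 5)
Your proposal is correct and follows the only route the paper intends: the paper gives no written proof (it simply asserts the corollary ``trivially follows'' from Proposition~\ref{thm:TJP}), and your reductions of conditions \eqref{lb:n_TJP}, \eqref{ub:k0} and \eqref{beta_min} under the spectral bounds $\gamma_1\leq\lambdamin(\Sigma)\leq\lambdamax(\Sigma)\leq\gamma_2$ and $p/n\to\delta>1$ are exactly the required bookkeeping. One caveat on your final step: it does not literally deliver the displayed probability, since for $p>n$ one has $\tfrac18(\sqrt{p}-\sqrt{n})^2=\tfrac18(p+n-2\sqrt{pn})<\tfrac18 p$, so $2e^{-\frac18(\sqrt{p}-\sqrt{n})^2}$ can never be bounded by $2e^{-p/8}e^{-c'n}$ for any $c'>0$; what is true is $2e^{-\frac18(\sqrt{p}-\sqrt{n})^2}\leq 2e^{-\frac18(\sqrt{\delta}-1)^2 n(1+o(1))}$, which should simply be absorbed into $ce^{-c'n}$. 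This is a defect of the corollary's stated constant ($1/8$ multiplying $p$) rather than of your reductions, but your parenthetical ``absorbing adjustments into $c,c'$'' glosses over the fact that this coefficient is fixed in the statement and cannot be adjusted.
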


\color{black}

\section{Model selection with missing covariates} 
 \label{sec:application_to_NA}

In practice the matrix of covariates $X$ is often partially known and one only observes an incomplete matrix, denoted $\XNA$. \footnote{\texttt{NA}, for Not  Available, is a usual symbol for missing values.}
High dimensional variable selection with missing values turns out to be a challenging problem and very few solutions are available, not to mention implementations. Available solutions either require strong assumptions on the missing value mechanism, a lot of parameters  tuning or strong assumption on the covariates distribution which is hard in high dimensions. They include the Expectation-Maximization algorithm  \citep{dempster1977maximum} for sparse linear regression \citep{garcia2010variable}
and regression imputation methods \citep{van2018flexible}.

A method combining penalized regression techniques with multiple imputation and stability selection has been developed \citep{liu2016variable}. Yet, aggregating different models for the resulting multiple imputed data sets becomes increasingly complex as the number of data grows. 
 \citet{rosenbaum2013improved} modified the Dantzig selector by using a consistent estimation of the design covariance matrix. Following the same idea, \citet{loh2012} and \citet{datta2017} reformulated the Lasso also using an estimate of the design covariance matrix, possibly resulting in a non-convex problem.
\citet{chen2013noisy} presented a variant of orthogonal matching pursuit which recovers the support and achieves the minimax optimal rate. 
 \citet{jiang2019adaptive} proposed Adaptive Bayesian SLOPE, combining SLOPE and Spike-and-Slab Lasso. 
{While some of these methods have interesting theoretical guarantees,} they all require an estimation of the design covariance matrix, which is often obtained under the restrictive MCAR {(Missing Completely At Random)} assumption{, when the missingness does not depend on the data}.

\subsection{Relation to the sparse corruption model}  \label{section:relation}
 
To tackle the problem of  estimating the sparse model parameter  when the design matrix is incomplete, we suggest an easy-to-implement solution for the user, which consists in imputing the missing entries in $\XNA$ with the imputation of his choice to get a completed matrix $\tilde X$, {such as the mean imputation}, and to take into account the impact of the occasional poor imputation as follows.  Given the matrix $\tilde X$, the linear model~(\ref{model:linear_initial}) can be rewritten in the form of the sparse corruption model \eqref{model:sparsecorruption}, where $\omega^0 := \frac{1}{\sqrt{n}}(X - \tilde X) \beta^0$ is the (unknown) corruption due to imputations. In classical (i.e. non-sparse) regression, one could not say much about $\omega^0$ without any prior knowledge of the distribution of the covariates or the missing data mechanism. Since the key point here is that when $\beta^0$ is sparse, then so is $\omega^0,$ even if all rows of the design matrix contain missing entries. Indeed, for every $i \in [n],$
\begin{equation} \label{eq:omega0}
\omega^0_i = \frac{1}{\sqrt{n}} \sum_{j = 1}^p (X_{ij} - \tilde X_{ij}) \beta^0_j  = \frac{1}{\sqrt{n}} \sum_{j \in S^0} (X_{ij} - \tilde X_{ij}) \beta^0_j,
\end{equation}
so $\omega^0_i$ is nonzero only if the $i^{\textrm{th}}$ row of $\XNA$ contains missing value(s) on the support $S^0$, since $(X_{ij} - \tilde X_{ij}) = 0$ if $X_{ij}$ is observed.
So  the problem of missing covariates can be rephrased as a sparse corruption problem,  as already pointed out in \cite{chen2013robust}.
We propose to use  the Robust Lasso-Zero  methodology presented in Section~\ref{sec:def_robust_lassozero}, {specifically designed to solve the sparse corruption problem}, see Algorithm \ref{algo:NAlass0}.

Note that if the $i^{\text{th}}$ row of $X$ is fully observed, then $\omega^0_i = 0$ by~(\ref{eq:omega0}). Thus the dimension of $\omega^0$ can be reduced by restricting it to the incomplete rows of $\XNA.$ The corruption vector $\omega^0$ is now of size $\abs{\mathcal{M}}$ and~(\ref{model:sparsecorruption}) becomes 
\begin{equation} \label{model:NAsparsecorr}
y = X \beta^0 + \sqrt{n} I_{\mathcal{M}}\omega^0 + \epsilon,
\end{equation}
{where $I_{\mathcal{M}} \in \mathbb{R}^{n\times |\mathcal{M}|}$ is the submatrix of the idendity matrix $I_n \in \mathbb{R}^{n\times n}$ with columns indexed by $i \in \mathcal{M}$.}

\begin{algorithm}
Given data $(y,\XNA)$, for given hyper-parameters $\lambda > 0, \tau \geq 0$ and $M \in \N^*$:
\begin{enumerate}[1)]
\item Impute $\XNA$ 
and rescale the imputed matrix $X$ such that all columns have Euclidean norm equal to $\sqrt{n}$.
\item Run Algorithm \ref{algo:robust-lass0} with the design matrix~$X$.
\end{enumerate}
\caption{Robust Lasso-Zero for missing data \label{algo:NAlass0}}
\end{algorithm}

\subsection{Selection of tuning parameters} \label{section:qut}

Formally, Algorithm~\ref{algo:NAlass0} requires {three apparent} hyper-parameters: $M$, $\lambda$ and $\tau$.

{The parameter $M$ is the number of noise dictionaries also used to perform the median estimation. It controls the precision in estimating the marginal medians and does not play an important role: it can be seen as an analogue of the number of bootstrap samples in a bootstrap procedure. As a matter of fact, as it exists rules of thumb for bootstrap samples, we recommend $M=20$ -as we used in our experiments- providing satisfactory precision. However, it is important to take $M$ larger than one (see Figure 3 in \citep{descloux2018} for instance).

The parameter $\lambda$ in \eqref{opt:JP_newscaling} tunes the balance between the corruption regularization and the model parameter one. For a fair isotropic penalty on $\beta,\omega$ and $\gamma$, we fix $\lambda=1$ (even if Proposition \ref{thm:TJP} and Corollary \ref{theo:add_gauss} do not provide theoretical guarantees for $\lambda = 1$).

Finally, the thresholding parameter $\tau$ can be selected through the Quantile Universal Threshold (QUT) \citep{giacobino2017} methodology,  generally driven by model selection rather than prediction. Indeed, 
}
under the null model,  no sparse corruption exists: indeed if $\beta^0 = 0$, so is $\omega^0$ since  $\omega^0= \frac{1}{\sqrt{n}} (X - \tilde X) \beta^0 = 0$. QUT selects the tuning parameter so that under the null model ($\beta^0 = 0$), the null vector $\betahat = 0$ is recovered with probability $1-\alpha$. 
Under the null model,   $y = \epsilon$  whatever the missing data pattern is. 
Then given a fixed value of $\lambda$ and a fixed imputed matrix $\tilde X$, the corresponding QUT value of $\tau$ is the upper $\alpha$-quantile of $\norm{\betahat^{\textrm{med}}_{\lambda}(\epsilon)}_\infty,$ where $\betahat^{\textrm{med}}_{\lambda}(\epsilon)$ is the vectors of medians obtained at stage~\ref{stage_Rlass0:median}) of Algorithm~\ref{algo:robust-lass0} applied to $\tilde X$ and $y = \epsilon.$ To free ourselves from preliminary estimation of the noise level $\sigma,$ we exploit the noise coefficients $\gammahat^{(k)}$ of Robust Lasso-Zero to pivotize the statistic $\norm{\betahat^{\textrm{med}}_{\lambda}(\epsilon)}_\infty,$ as explained in \citet{descloux2018}. Thus, the parameter $\tau$, through the QUT methodology, is depending on $\lambda$ (that we choose equal to 1) and the FDR $\alpha$ under the null hypothesis. As usual, one can choose a priori a FDR of $\alpha=0.05$, or smaller depending on the application. 

Despite all these apparent hyperparameters, the method can be still considered as relying on mild tuning as the only important choice is the choice of $\tau$ by the QUT methodology.

\section{Numerical experiments} \label{sec:numexp}

We evaluate the performance of the Robust-Lasso Zero when missing data affect the design matrix. The code reproducing these experiments is available at \url{https://github.com/pascalinedescloux/robust-lasso-zero-NA}.

\subsection{Simulation settings}

\paragraph{Simulation scenarios.} 

We generate data according to model ~\eqref{model:linear_initial}
with the covariates matrix obtained by drawing {$n=100$} observations from a Gaussian distribution $\mathcal{N}(0, \Sigma)$, where $\Sigma\in \mathbb{R}^{200\times 200}$ is a Toeplitz matrix, such that $\Sigma_{ij} = \rho^{\abs{i-j}}$;  the variance of the noise $\sigma = 0.5$ and  the coefficient $\beta^0$ are drawn uniformly  from $\{\pm 1\}$.
We make the following parameters vary: 
\begin{itemize}
\item Correlation structures indexed by $\rho$ with $\rho=0$ (uncorrelated) and $\rho=0.75$ (correlated);
\item Sparsity degrees indexed by $s$ with $s \in \{3,10\}$. 
\end{itemize}
Before generating the response vector $y$, all columns of $X$ are mean-centered and standardized.
Missing data are then introduced in $X$ according to two different mechanisms, MCAR or MNAR, and  
in two different proportions. {Recall that the data are said (i) MCAR if the missingness does not depend on the data values and (ii) MNAR if the missingness depends on the data values.} More precisely, any entry of $X$ is missing according to the following logistic model
\[
\p(\XNA_{ij} = \NA \ \mid \ X_{ij} = x) = \frac{1}{1 + e^{-a \abs{x} - b}},
\]
where $a \geq 0$ and $b \in \R$. Choosing $a = 0$ yields MCAR data, whereas  $a = 5$ leads to MNAR setting in which high absolute entries are more likely to be missing. For a fixed $a$,  the value of $b$ is chosen so that the overall average proportion of missing values is $\pi,$ with $\pi = 5\%$ and $\pi = 20\%.$

Two sets of simulations are run. The first one is  ``$s$-oracle",  meaning that the tuning parameters of the different methods are chosen so that the estimated support has correct support size~$s$. In the second set,  no knowledge of $s, \beta^0$ or $\sigma$ is provided.

\paragraph{Performance evaluation.} The performance of each coming estimator is assessed in terms of the following criteria, averaged over 100 replications:
\begin{itemize}
\item  the Probability of Sign Recovery (PSR), 
$ \mathrm{PSR} = \p(\sign (\betahat) {=} \sign (\beta^0)),$
\item  the signed True Positive Rate (sTPR),  where   $\operatorname{s-TPR} = \E(\operatorname{s-TPP}) \quad {\rm with}$ 
\begin{equation} \label{def:sTPP}
\operatorname{s-TPP} := \frac{\abs{\{j \ \mid \ \beta^0_j > 0, \betahat_j > 0\}} + \abs{\{j \ \mid \ \beta^0_j < 0, \betahat_j < 0\}}}{\abs{S^0}},
\end{equation}
which is the proportion of nonzero coefficients whose sign is correctly identified;
\item the signed False Discovery Rate (sFDR): $\operatorname{s-FDR} = \E(\operatorname{s-FDP}) \quad {\rm with}$ 
\begin{equation} 
 \label{def:sFDP}
\operatorname{s-FDP} := \frac{\abs{\hat{S}} - \abs{\{j \ \mid \ \beta^0_j > 0, \betahat_j > 0\}} - \abs{\{j \ \mid \ \beta^0_j < 0, \betahat_j < 0\}}}{\max\{1, \abs{\hat{S}}\}},
\end{equation}
which is the proportion of incorrect signs among all discoveries.
\end{itemize}

\paragraph{Estimators considered.}
We compare the following estimators: 
\begin{itemize}
\item \textbf{Rlass0:} the Robust Lasso-Zero described in Algorithm~\ref{algo:NAlass0} using {the mean imputation} and $M$ equal to 30. The tuning parameters are obtained using arbitrarily $\lambda = 1$ and selecting $\tau$  by quantile universal threshold (QUT) at level $\alpha = 0.05$.
\item \textbf{lass0:}  the Lasso-Zero proposed in \citet{descloux2018}. The automatic tuning is performed by QUT, at level $\alpha = 0.05$.
\item \textbf{lasso:} the Lasso \citep{tibshirani1996regression} performed on the mean-imputed matrix where the regularization parameter is tuned by cross-validation.
\item \textbf{NClasso:} the nonconvex $\ell_1$ estimator of \citet{loh2012}. It is  only  included under the $s$-oracle setting, as selection of the tuning parameter in practice is not discussed in their work.
\item \textbf{ABSLOPE:} Adaptive Bayesian SLOPE of \citet{jiang2019adaptive}. 
\end{itemize}
{
To ease the readability of the numerical results, we also provide in Appendix \ref{sec:othernumexp_othestim} additional comparisons to other estimators: the thresholded Robust Lasso proposed in \cite{nguyen2013a} and the thresholded lasso in \cite{pokarowski2019improving}. Rlass0 still remains competitive in terms of sign recovery, in particular in difficult cases, i.e.\ when the percentage of missing values increases, when the missing data are informative and when the covariates are correlated. For the cases where Rlass0 has a lower probability of sign recovery than the other methods, it still remains competitive in terms of s-FDR and outperforms the thresholded methods in terms of s-TPR. 

In the following simulations, we impute each missing variable by its empirical mean computed with the observed individuals (as practitioners would tend to do). 
We refer to Appendix \ref{sec:othernumexp_imp} in which we provide extra numerical experiments when different strategies of imputation are considered, without changing the conclusions observed in this section. 

}

\subsection{Results}

\subsubsection{With $s$-oracle hyperparameter tuning} \label{section:soracle_tuning}

Under the $s$-oracle tuning, an $\operatorname{s-TPP}$~(\ref{def:sTPP}) of one means that the signs of~$\beta^0$ are exactly recovered, and the $\operatorname{s-TPP}$ is related to the $\operatorname{s-FDP}$~(\ref{def:sFDP}) through $\operatorname{s-FDP} = 1 - \operatorname{s-TPP}$. That is why, in Figure \ref{fig:soracle}, only  the average $\operatorname{s-TPP}$ and the estimated probability of sign recovery are reported.  

\paragraph{Small missingness -- High sparsity ($5\%$ of NA and $s=3$).}
In the non-correlated case, in Figure \ref{fig:soracle} (a) and (c),  MCAR and MNAR results are similar across methods. 
With correlation, in Figure\ref{fig:soracle} (b) and (d), Rlass0 improves PSR and sTPR, specially with MNAR data.

\begin{figure}[h!]
\vspace{-2cm}
\hspace{-2cm}
\begin{tabular}{cc}
\includegraphics[scale=0.5]{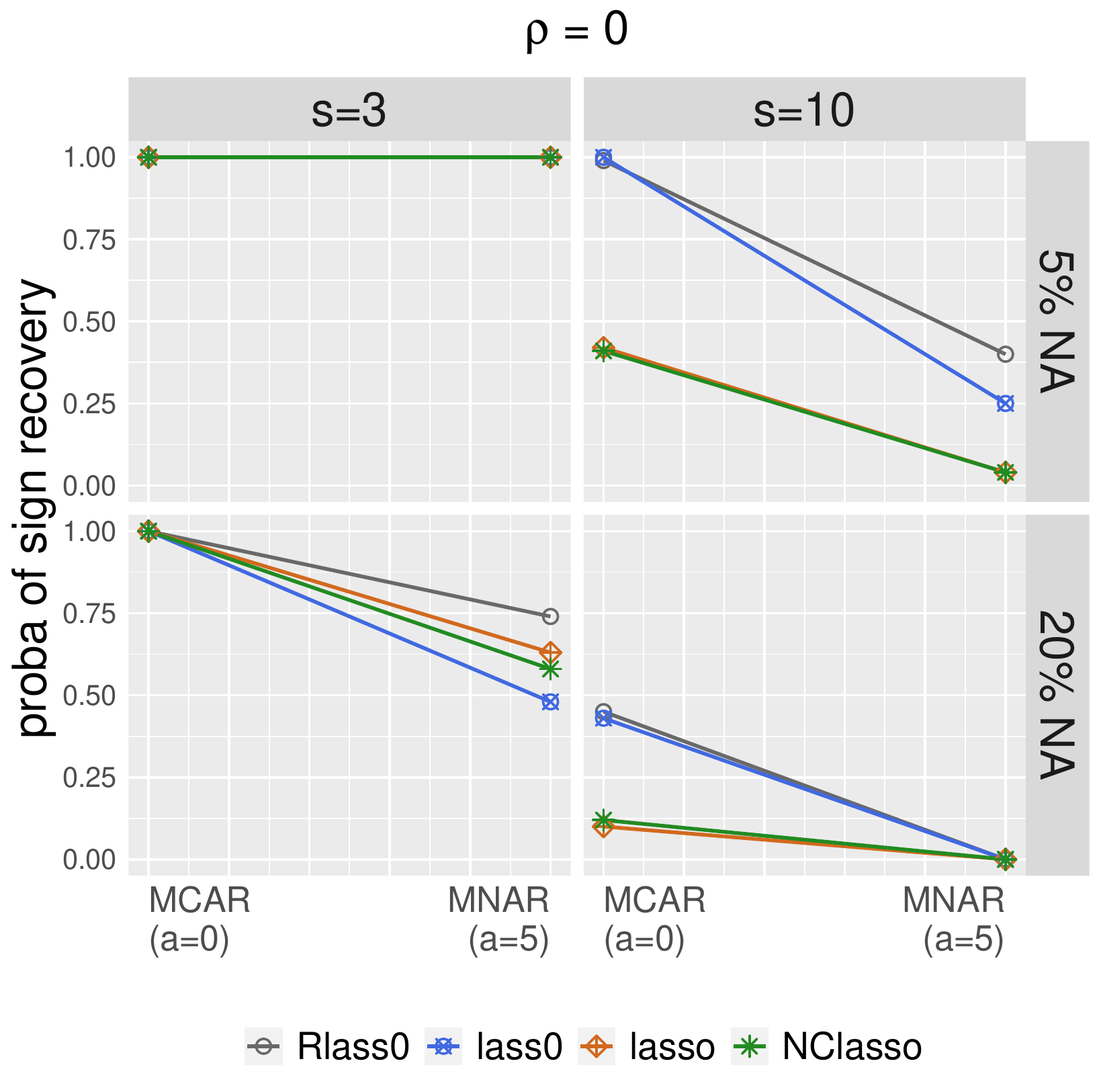} &
\includegraphics[scale=0.5]{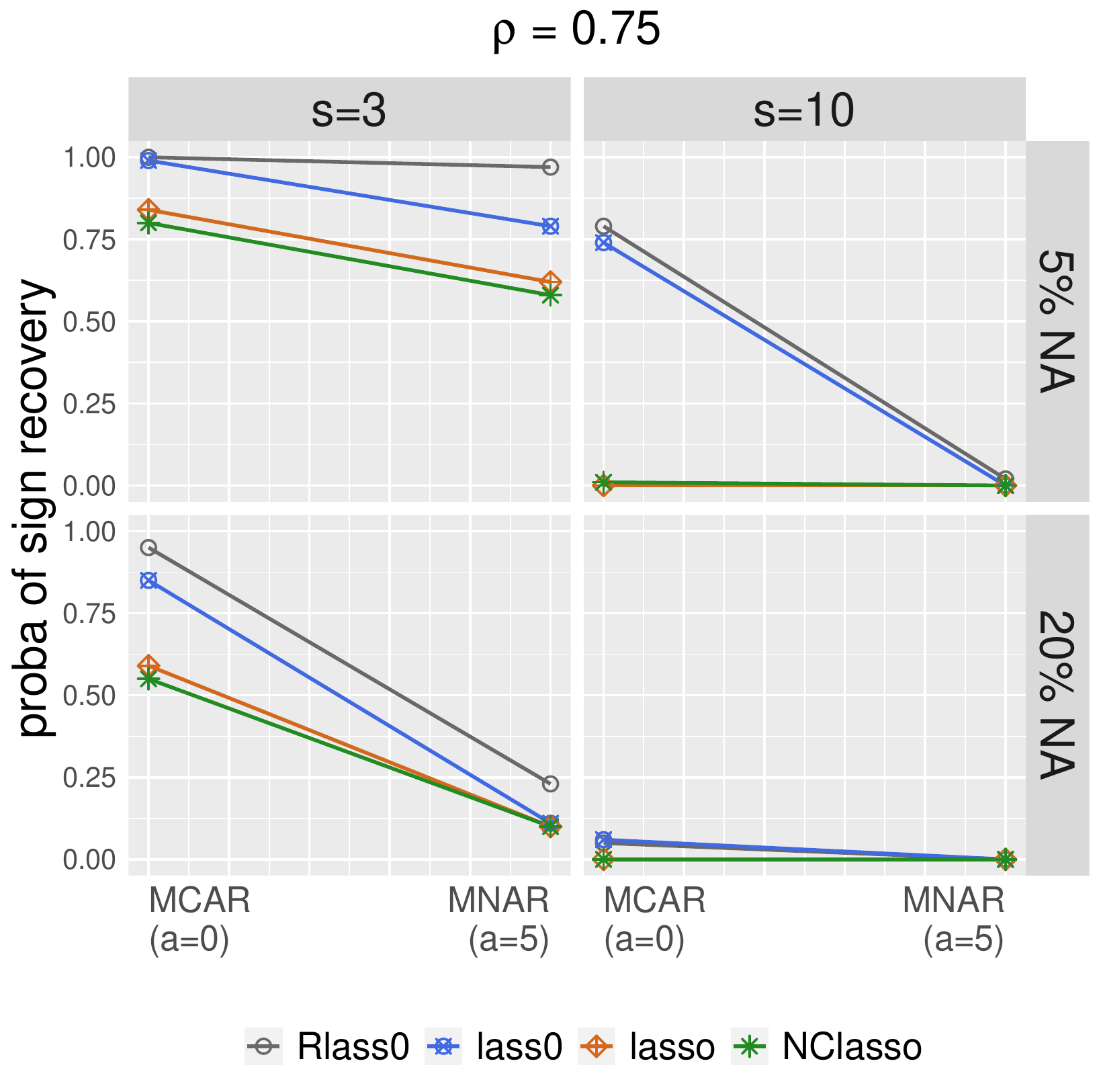} \\
(a) PSR in the non-correlated case & (b) PSR in the correlated case \\
& \\
\includegraphics[scale=0.5]{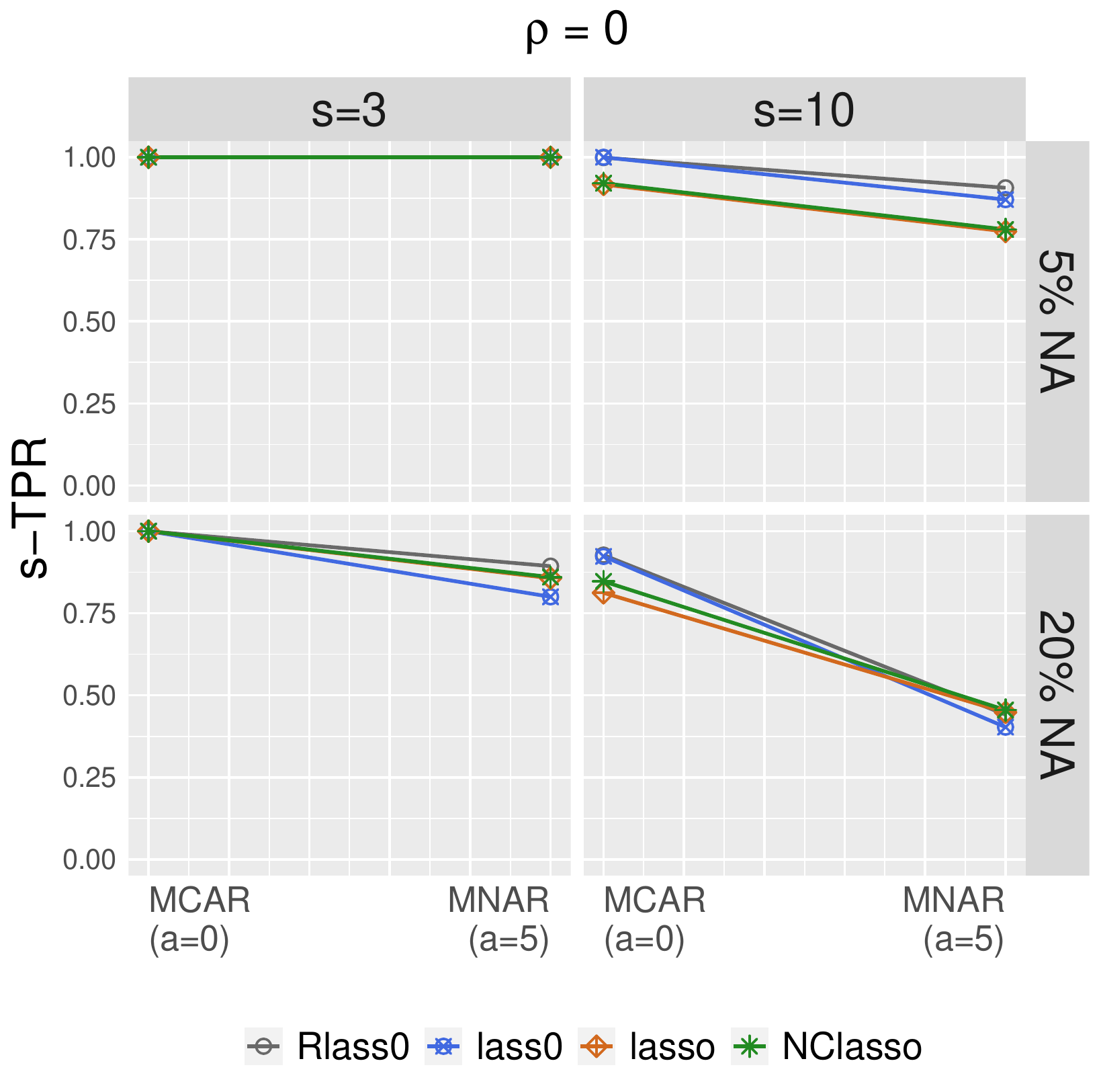} &
\includegraphics[scale=0.5]{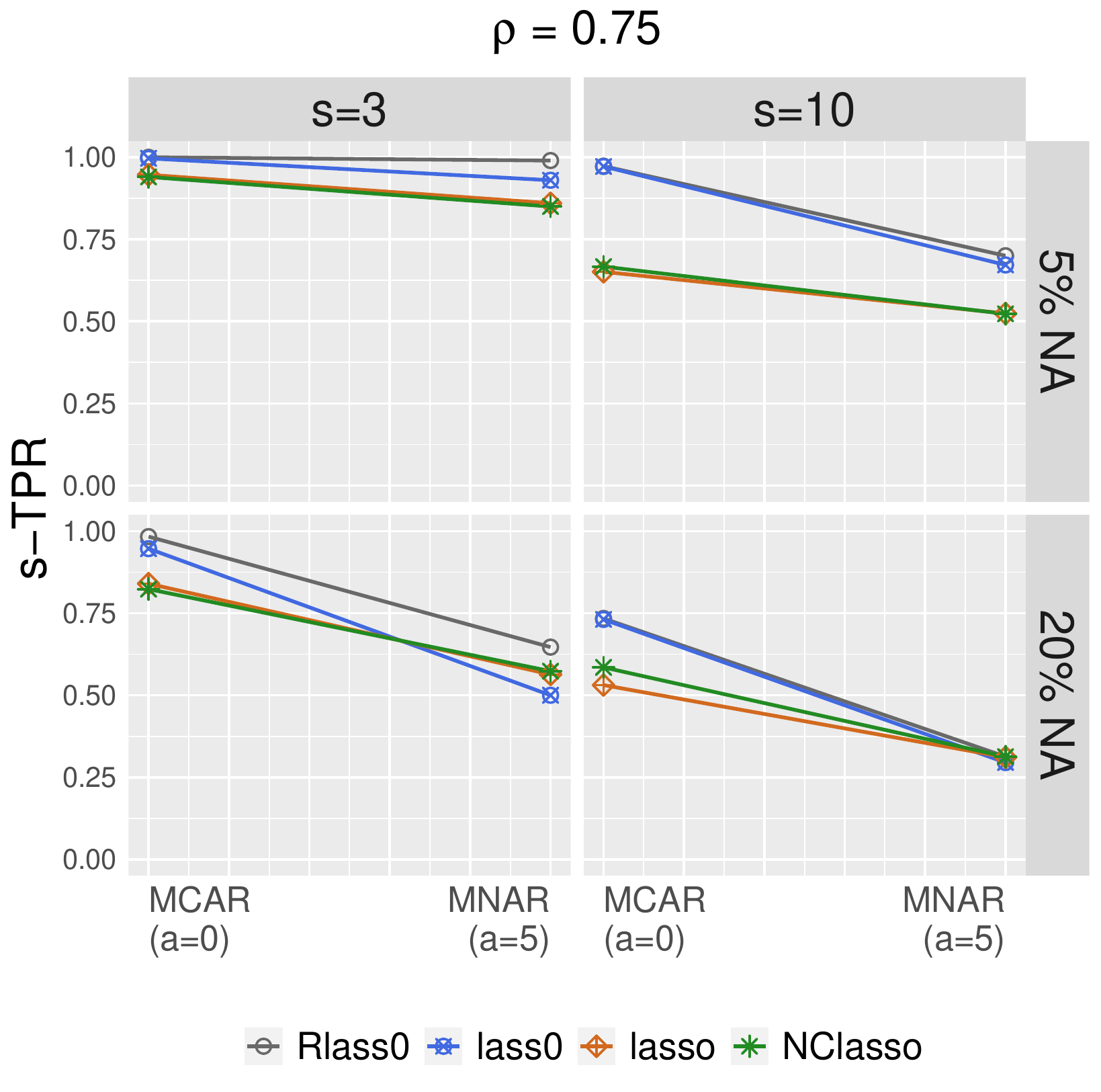} \\
(c) s-TPR in the non-correlated case & (d) s-TPR in the correlated case \\
\end{tabular}
\caption{\label{fig:soracle} PSR and s-TPR with an $s$-oracle tuning,  for sparsity levels , $s=3$ and $s=10$ (subplots columns), proportions of missing values $5\%$ or $20\%$ (subplots rows), and two missing data mechanisms (MCAR vs MNAR). }
\end{figure}

\paragraph{Increasing missingness -- High sparsity  ($20\%$ of NA and $s=3$).}
The benefit of Rlass0  is noticeable when increasing the percentage of missing data to $20\%$, for both performance indicators. Indeed, with no correlation (Figure \ref{fig:soracle} (a)(c)(bottom left)),  the improvement is clear when dealing with MNAR. With correlation (Figure \ref{fig:soracle} (b)(d)(bottom left)), Rlass0 outperforms the other methods: while the improvement can be marginal when compared to lass0 for MCAR, it becomes significant for MNAR. 

\paragraph{Lower sparsity ($s=10$).}
The performance of all estimators tends to deteriorate. One can identify two groups of estimators:  Rlass0 and  lass0 generally outperforms lasso and NClasso, except with a high proportion ($20\%$) of MNAR missing data  for which they all behave the same.  While comparable when $s=10$,  Rlass0 proves to be better than lass0 in the case of a small proportion of MNAR missing data ($5\%$).

\subsubsection{With automatic hyperparameter tuning} \label{section:automatic_tuning}

Figures~\ref{fig:PSR_auto} and \ref{fig:FDR_TPR_auto} point to the poor performance of lasso in terms of PSR for all experimental settings. 
The automatic tuning, being done by cross-validation, is known to lead to support overestimation. Indeed, its very good performance in sTPR is made at the cost of a very high sFDR.

\paragraph{Small missingness -- High sparsity ($5\%$ of NA and $s=3$).}
In Figures \ref{fig:PSR_auto}(a)(top left) and \ref{fig:FDR_TPR_auto}(a)(c)(top left), for the non-correlated case,  Rlass0, lass0 and ABSlope performs very well, providing a PSR and s-TPR of one, and a s-FDR of zero, either when dealing with MCAR or MNAR data (the lasso being already out of the game). In Figures \ref{fig:PSR_auto}(b)(top left) and \ref{fig:FDR_TPR_auto}(b)(d)(top left), adding correlation in the design matrix seems beneficial for ABSlope, at the price of high FDR, however. %\cb{pourquoi?} \pd{peut-\^etre car ABSLOPE estime (implicitement) la structure de corr\'elation?}
%\jj{j'enleverai surprisingly, c'est surprisingly pour abslope ou toi? ABslope est meilleur par construction avec correlation que non...}

\paragraph{Increasing missingness -- High sparsity ($20\%$ of NA and $s=3$).} 
With no correlation, one sees in Figure \ref{fig:PSR_auto}(a)(bottom left) that Rlass0 provides the best PSR, whatever the type of missing data is. One could also note that the performances in terms of PSR of either  lass0 or ABSLOPE are extremely variable depending on the type of missing data (MCAR or MNAR) considered: the PSR of lass0 is comparable to the one of Rlass0 when facing MCAR data and is much lower than the one of Rlass0 when facing MNAR data; the converse is true for ABSLOPE.  

Regarding the s-TPR and s-FDR results  in Figure \ref{fig:FDR_TPR_auto} (a-d)(bottom left), the following observations hold  in both correlated or non-correlated cases: 
\begin{enumerate}[(i)]
\item With MCAR data, all the methods behave similarly in terms of s-TPR, identifying correctly signs and coefficient locations in the support of~$\beta^0$, see Figure \ref{fig:FDR_TPR_auto}(a)(b)(bottom left);
\item With MNAR data, lasso and ABSLOPE remain stable in terms of s-TPR, providing an s-TPR of one, whereas the s-TPR of Rlass0 deteriorates (to 0.6 and 0.5 respectively for the non-correlated and correlated cases), and even worse for lass0, see Figure \ref{fig:FDR_TPR_auto}(a)(b)(bottom left);
\item Lasso and ABSLOPE lead to high s-FDR, % (above 0.25 with/out no correlation),
while lass0 and Rlass0 always give the best s-FDR, %($0$ and $0.15$ respectively without and with design correlation)
see Figure \ref{fig:FDR_TPR_auto}(c)(d)(bottom left).
\end{enumerate}

\paragraph{Lower sparsity ($s=10$).}
For  low missingness (5\%), see Figure \ref{fig:PSR_auto} (a)(b) (top right), ABSLOPE gives high PSR. 
In terms of s-TPR, lasso and ABSLOPE have high TPR. Moreover Rlass0 improves s-TPR compared to lass0 specially for a small proportion of MNAR missing data.
In terms of s-FDR, lass0 and Rlass0 bring very low s-FDR, proving their FDR stability with respect to MCAR/MNAR data, and correlation.

\begin{figure}
\hspace{-2cm}
\begin{tabular}{cc}
\includegraphics[scale=0.5]{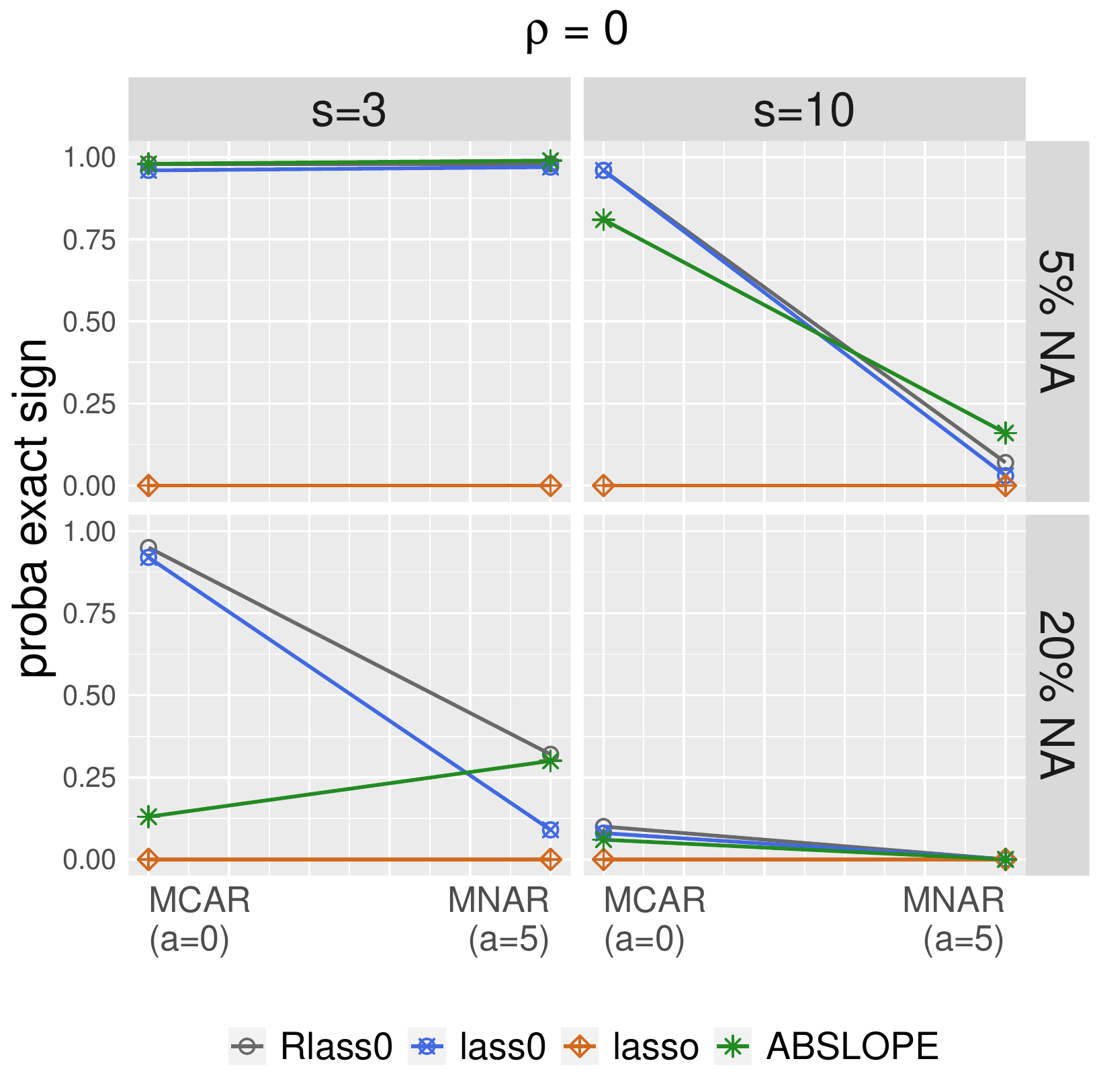} &
\includegraphics[scale=0.5]{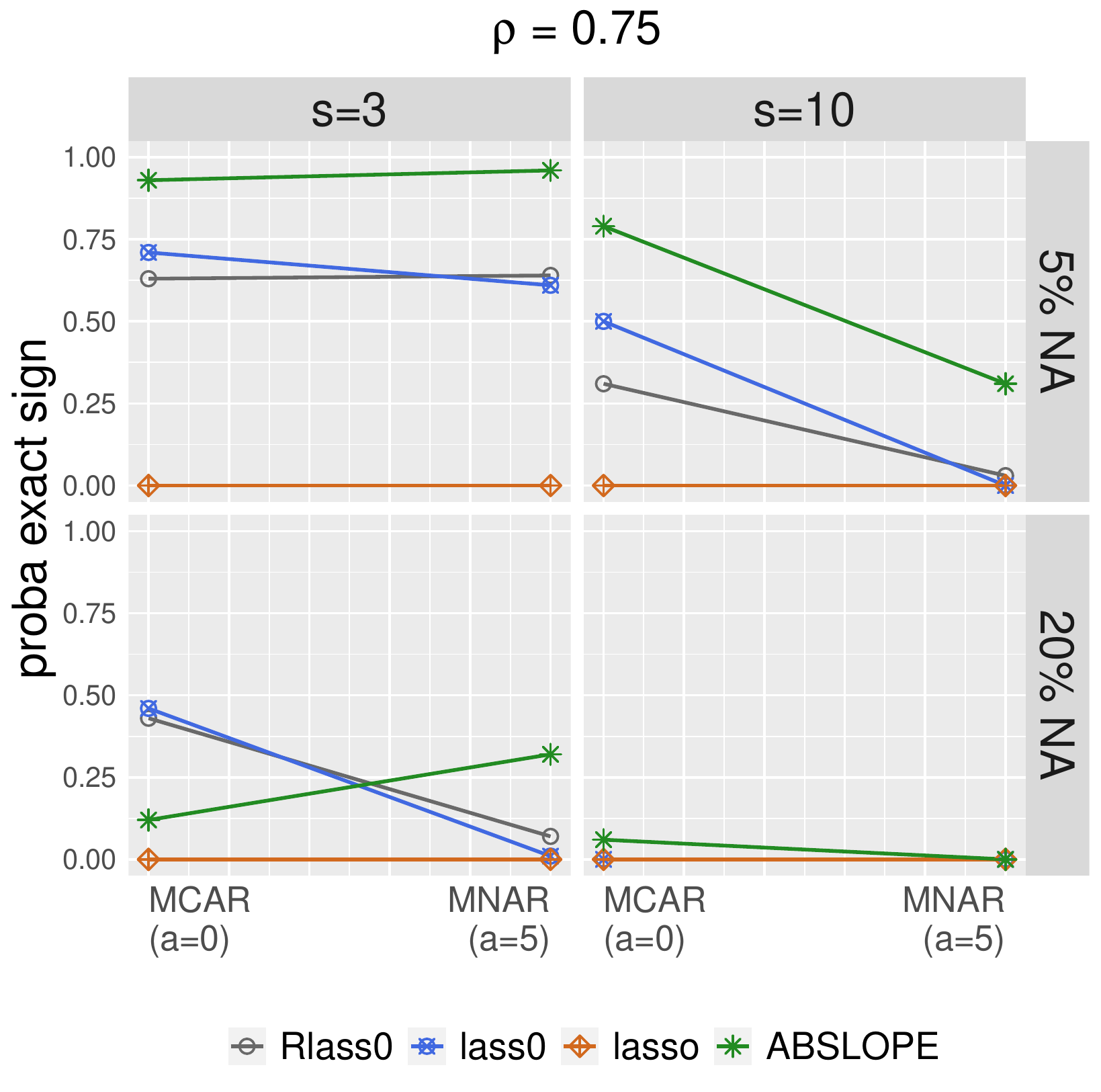} \\
(a) PSR in the non-correlated case & (b) PSR in the correlated case 
\end{tabular}
\caption{\label{fig:PSR_auto} PSR  with automatic tuning,  for sparsity levels $s=3$ and $s=10$ (subplots columns), proportions of missing values $5\%$ or $20\%$ (subplots rows), and two missing data mechanisms (MCAR vs MNAR).}
\end{figure}

\begin{figure}
\vspace{-2cm}
\hspace{-2cm}
\begin{tabular}{cc}
\includegraphics[scale=0.5]{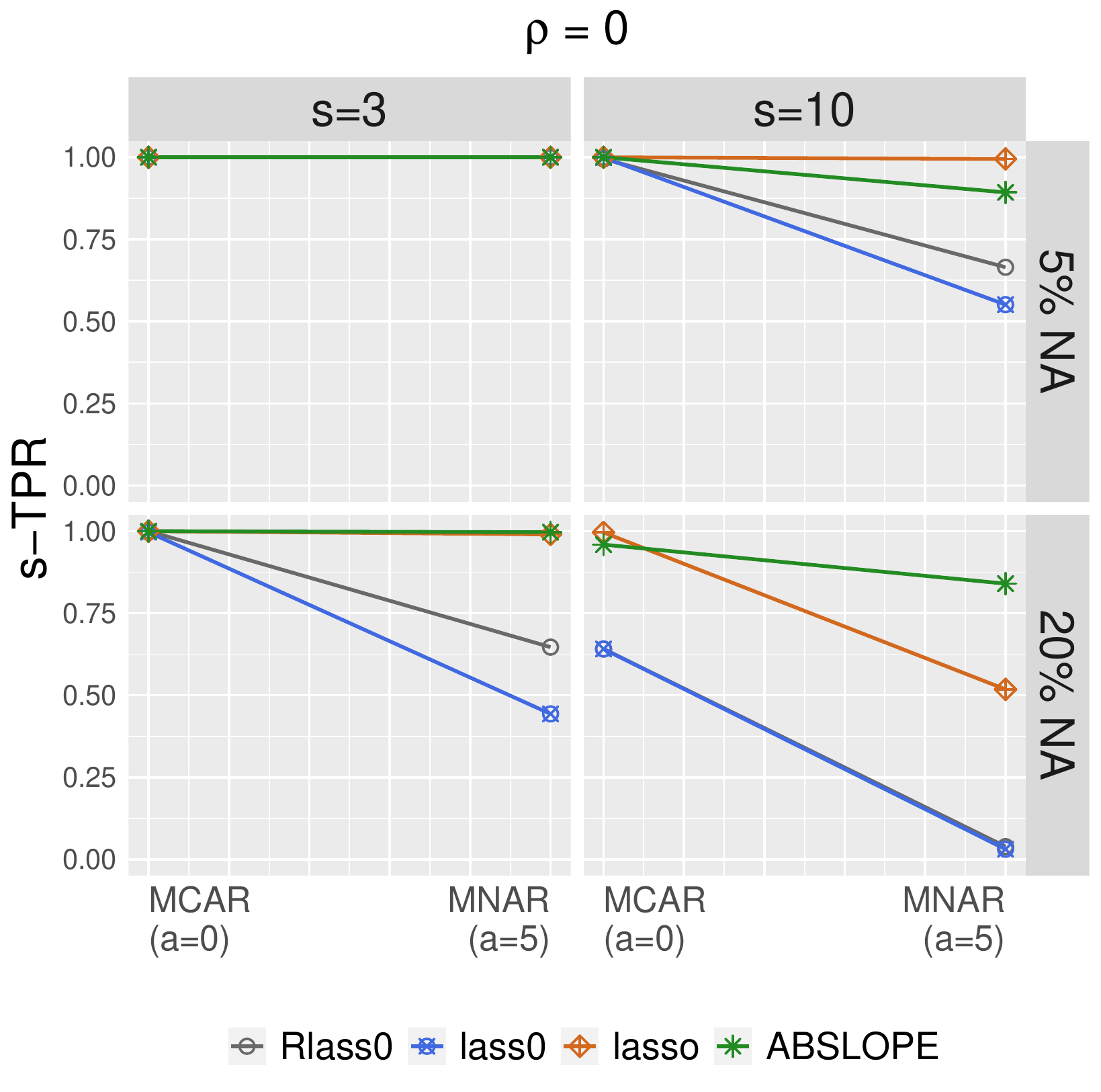} &
\includegraphics[scale=0.5]{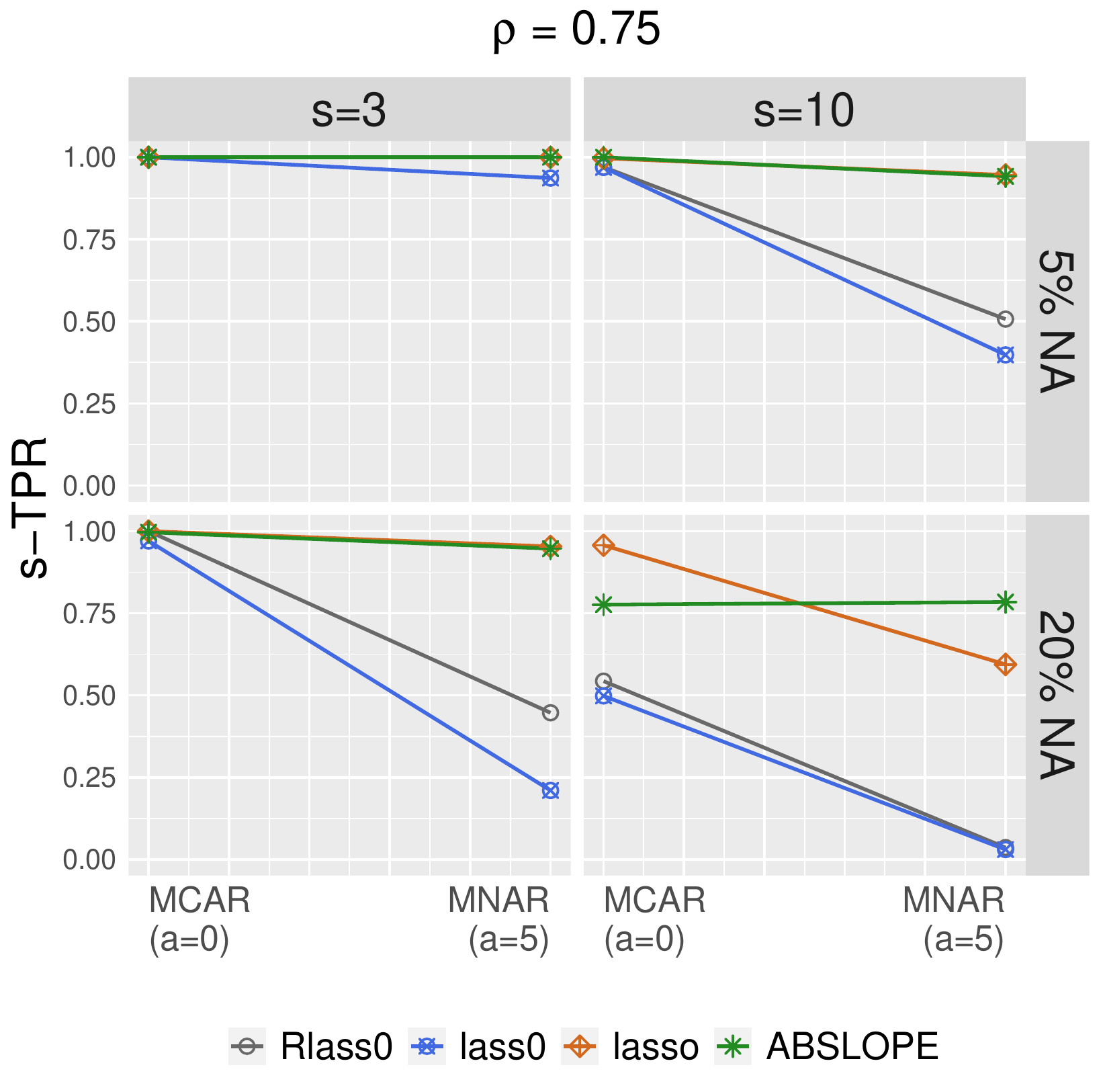} \\
(a) s-TPR in the non-correlated case & (b) s-TPR in the correlated case \\
& \\
\includegraphics[scale=0.5]{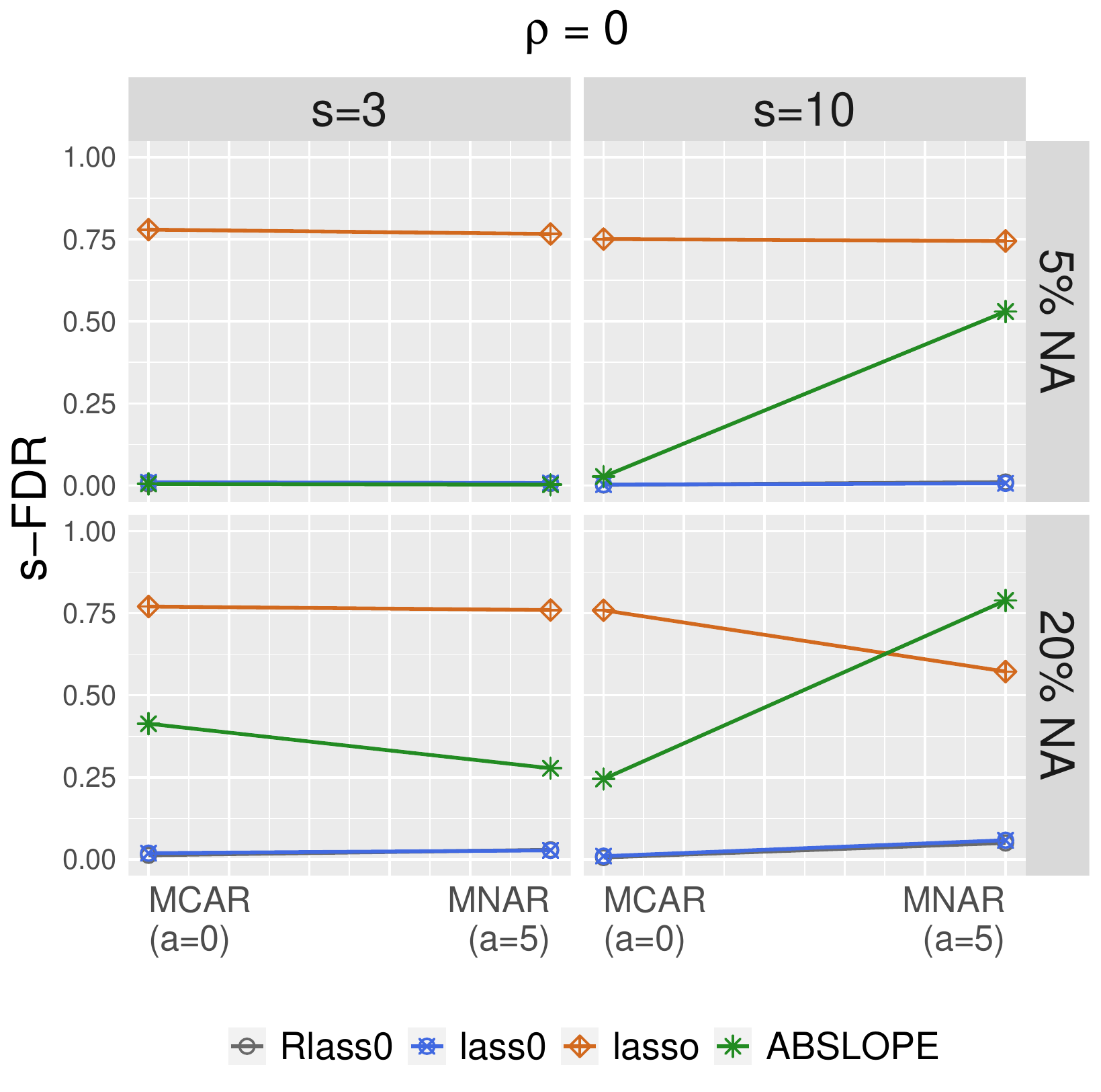} &
\includegraphics[scale=0.5]{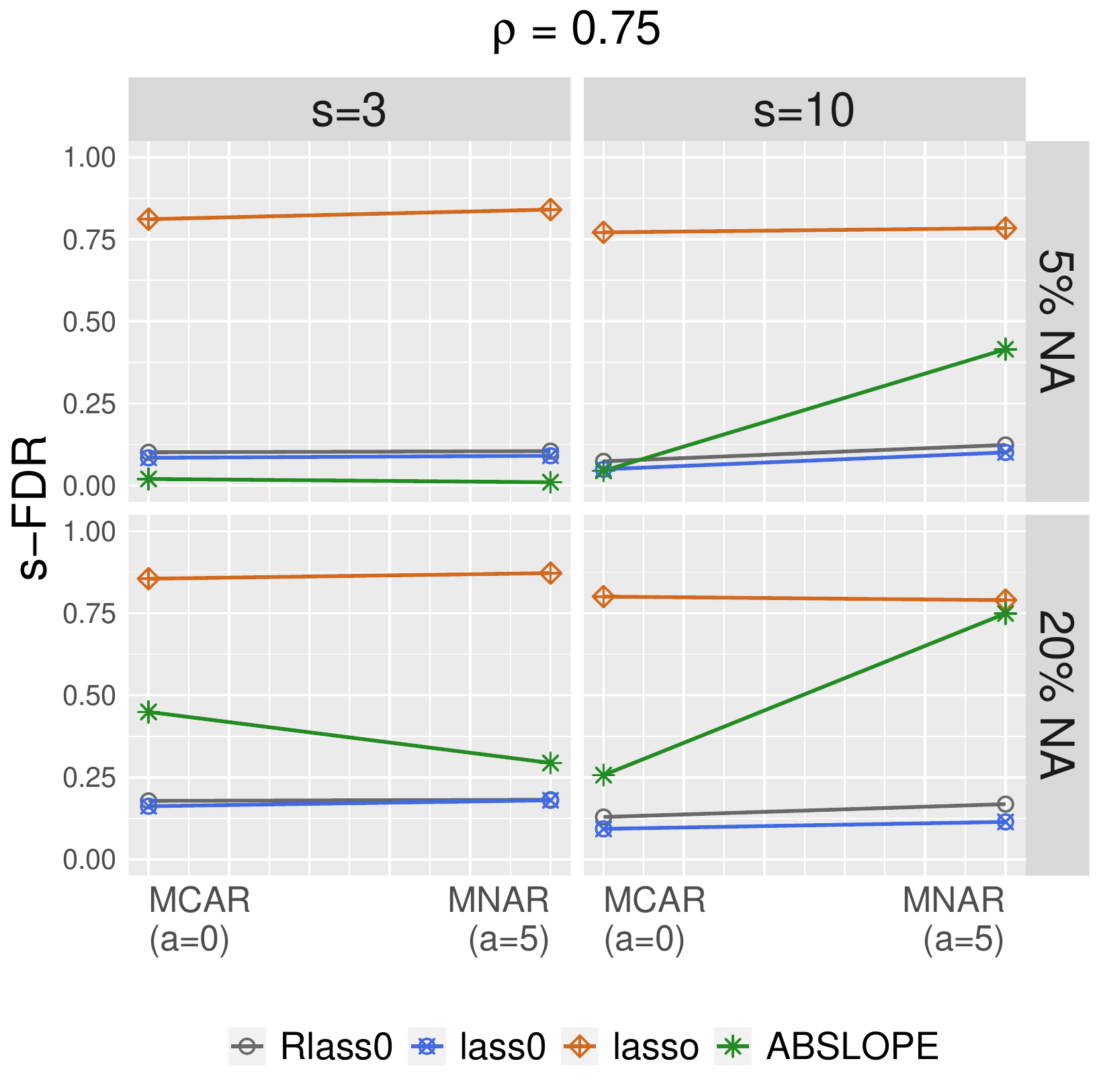} \\
(c) s-FDR in the non-correlated case & (d) s-FDR in the correlated case \\
\end{tabular}
\caption{\label{fig:FDR_TPR_auto} s-FDR and s-TPR with automatic tuning,  for sparsity levels $s=3$ and $s=10$ (subplots columns), proportions of missing values $5\%$ or $20\%$ (subplots rows), and two missing data mechanisms (MCAR vs MNAR).}
\end{figure}

\subsubsection{Summary and discussion}
The results of experiments with $s$-oracle tuning (Section~\ref{section:soracle_tuning}) show that the Robust Lasso-Zero performs better than competitors for sign recovery, and is more robust to MNAR data compared to its nonrobust counterpart when the sparsity index and/or proportion of missing entries is low. In particular, the Robust Lasso-Zero performs better than NClasso, one of the rare existing $\ell_1$-estimator designed to handle missing values.

While not designed to handle MNAR data, ABSLOPE appears to be a valid competitor in terms of s-TPR or PSR when the model complexity increases, and when dealing with MNAR data. Its poor performance in FDR in such settings reveals its tendency to overestimate the support of~$\beta^0$, under higher sparsity degrees, and  with informative MNAR missing data.

With automatic tuning (Section~\ref{section:automatic_tuning}), Robust Lasso-Zero is the best method overall.
Moreover, our results show that the choice of Robust Lasso-Zero tuned by QUT, with its low s-FDR, is particularly appropriate in cases where one wants to maintain a low proportion of false discoveries.

\section{Application to the Traumabase dataset} \label{sct:appli}

We illustrate our approach on the public health APHP (Assistance Publique Hopitaux de Paris)  TraumaBase$^{\mbox{\normalsize{\textregistered}}}$ Group for traumatized patients. Effective and timely management of trauma is crucial to improve outcomes, as delays or errors entail high risks for the patient.

\begin{figure}
	\includegraphics[scale=0.75]{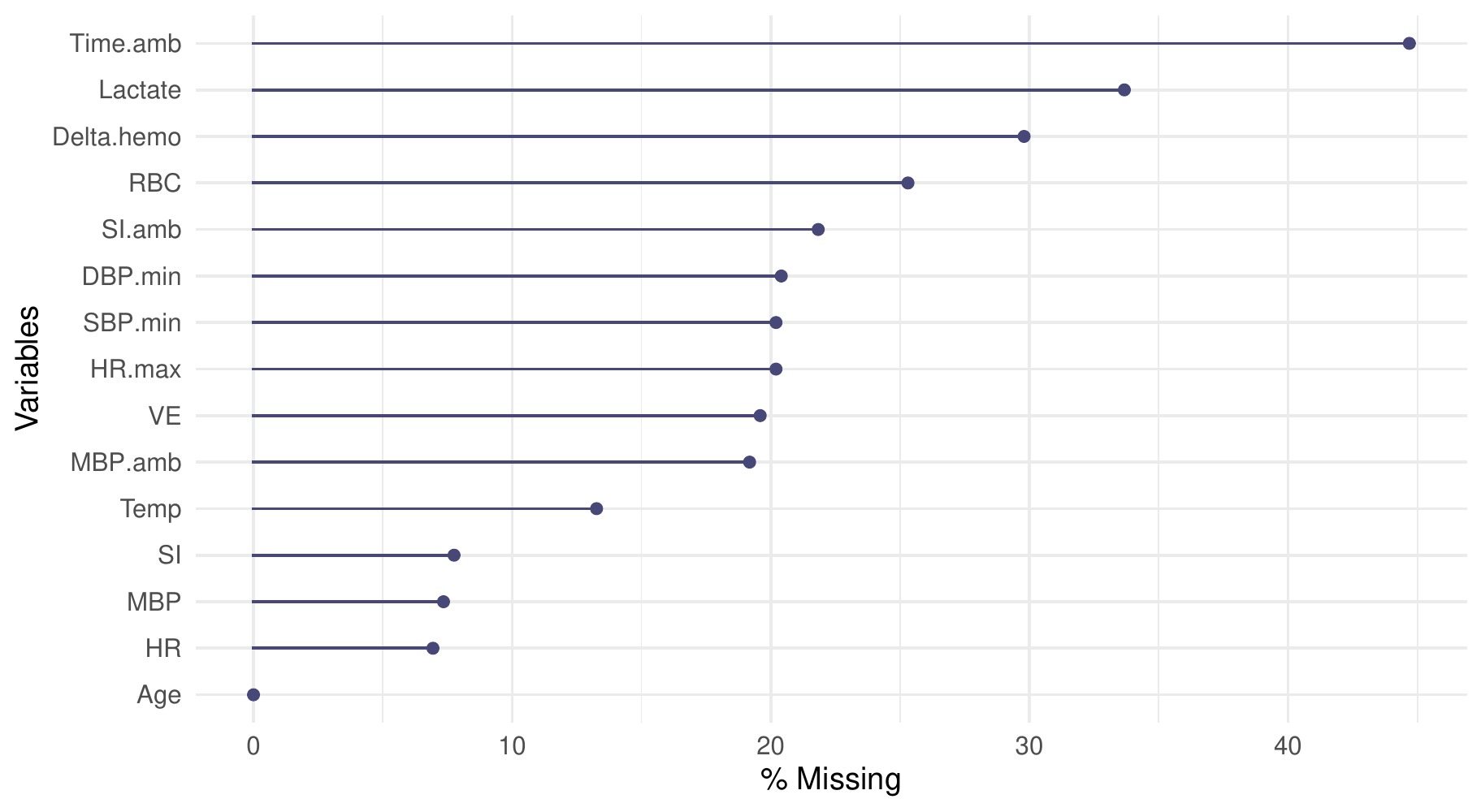}
	\caption{\label{fig:missingdataperc} Percentage of missing values in the Traumbase dataset.}
\end{figure}

\begin{table}
	\centering
	\begin{tabular}{cccccr}
		Variable & Rlass0 & lass0 & lasso & ABSLOPE \\
		\hline
		Age  & $-$  & 0 & $-$ & $-$ \\
		SI  & 0 & 0 & 0 & $-$ \\
		Delta.hemo  & 0 & 0 & 0 & $+$ \\
		Lactates & 0 & 0 & 0 & $+$ \\
		Temperature & 0 & 0 & 0 & $+$ \\
		VE         &  $-$& 0 & $-$ & 0 \\
		RBC        & $-$ & 0 & 0 & $-$ \\
		DBP.min & 0 & 0 & $-$ & $+$ \\
		HR.max & 0 & 0 & $-$ & 0 \\
		SI.amb & 0 & 0 & 0 & $+$ \\
	\end{tabular}
	\caption{\label{tab:signcoef} Sign of estimated effects on the platelet for Rlass0, lass0, lasso or ABSLOPE.  Variables not shown here are not selected by any  method.}
\end{table}

In our analysis, we focuse on one specific challenge: selecting a sparse model  from data containing missing covariates in order to explain the level of platelet. This model can aid creating an innovative response to the public health challenge of major trauma.
Explanatory variables for the level of platelet consist in fifteen quantitative variables containing missing values, which have been selected by doctors. They give clinical measurements on 490 patients. In Figure \ref{fig:missingdataperc}, one sees the percentage of missing values in each variable, varying from 0 to 45\% and  leading to 20\% is the whole dataset. Based on discussions with doctors, some variables may have informative missingness (M(N)AR variables). Both percentage and nature of missing data demonstrate the importance of taking appropriate account of missing data. More information  can be found in Appendix \ref{sec:variablestraumadataset}.

We compare the Robust Lasso-Zero to the Lasso-Zero, Lasso and ABSLOPE estimators. The signs of the coefficients are shown in Table~\ref{tab:signcoef}. Lass0 does not select any variable, whereas its robust counterpart selects three. 
According to doctors, the results given by the Robust Lasso-Zero methodology are the most coherent. Indeed, a negative effect of age (\textit{Age}), vascular filling (\textit{VE}) and blood transfusion (\textit{RBC}) was expected, as they all result in low platelet levels and therefore a higher risk of severe bleeding. Lasso similarly selects \textit{Age} and \textit{VE}, but also minimum value of diastolic blood pressure \textit{DBP.min} and the maximum heart rate \textit{HR.max}. The effect of \textit{DBP.min} is not what doctors expected. 
For ABSLOPE, the effects on platelets of delta Hemocue (\textit{Delta.Hemocue}), the lactates (\textit{Lactates}), the temperature (\textit{Temperature}) and the shock index measured on ambulance (\textit{SI.amb}), at odds with the effect of the shock index at hospital (\textit{SI}), are not in agreement with the doctors opinion either.

\section{Funding}

Aude Sportisse was supported by the French government, through the 3IA Côte d’Azur Investments in the Future project managed by the National Research Agency (ANR) with the reference number ANR-19-P3IA-0002.

\bibliographystyle{plainnat}
\bibliography{lass0NA}

\paragraph{Corresponding author} The corresponding author is Aude Sportisse (email: aude.sportisse@inria.fr). 

\appendix

\section{Proof of Theorem~\ref{thm:ext_tardivel}}
\label{app:proof_thm_ext_tardivel}
Lemma~\ref{lemma:identifiability_JP} implies that under the sign invariance assumption~(\ref{ass2:sign_invariance}), identifiability of $(\beta^{(r)}, \omega^{(r)})$ is equivalent to identifiability of $(\theta, \tilde{\theta}).$

\begin{proof}[Proof of Lemma~\ref{lemma:identifiability_JP}]
Note that $(\betahatJP_{\lambda}, \omegahatJP_{\lambda})$ is a solution to JP~(\ref{opt:JP_newscaling}) if and only if
$
(\betahatJP_{\lambda}, \omegahatJP_{\lambda}) = (\tilde{\beta}, \lambda^{-1} \tilde{\omega}),
$
where $(\tilde{\beta}, \tilde{\omega})$ is a solution to
\begin{equation}\label{BP_JP}
% \begin{aligned}
% &\min_{(\beta, \omega) \in \R^p \times \R^n} && \norm{\beta}_1 + \norm{\omega}_1 \\
% & \quad \text{ s.t.} && y = X\beta + \sqrt{n} \lambda^{-1} \omega.
\min_{(\beta,\ \omega) \in \R^p \times \R^n}  \norm{\beta}_1 + \norm{\omega}_1 \quad \text{s.t.} \quad  y = X\beta + \sqrt{n} \lambda^{-1} \omega.
% \end{aligned}
\end{equation}
So $(\beta^0, \omega^0)$ is identifiable with respect to $X$ and $\lambda > 0$ if and only if the pair $(\beta^0, \lambda \omega^0)$ is the unique solution of~(\ref{BP_JP}) when $y = X \beta^0 + \sqrt{n} \omega^0.$ But~(\ref{BP_JP}) is just Basis Pursuit with response vector $y \in \R^n$ and augmented matrix $\begin{bmatrix} X & \sqrt{n} \lambda^{-1} I_n \end{bmatrix},$ so by a result of \citet{daubechies2010} this is the case if and only if for every $(\beta, \omega) \neq (0, 0)$ such that $X \beta + \sqrt{n} \lambda^{-1} \omega = 0,$ we have
%
%$\begin{bmatrix} \beta \\ \omega \end{bmatrix} \in \ker{\begin{bmatrix} X & \sqrt{n} \lambda^{-1} I_n \end{bmatrix}} \setminus \left\{\begin{bmatrix} 0 \\ 0 \end{bmatrix}\right\},$
$
\abs{\sign(\beta^0)^T \beta + \sign(\omega^0)^T \omega} < \norm{\beta_{\Sbar}}_1 + \norm{\omega_{\Tbar}}_1,
$
which proves our statement.
\end{proof}

We will need the following auxiliary lemma.
\begin{lemma} \label{lemma:auxiliary_tardivel}
Under assumptions~(\ref{ass2:sign_invariance}) and~(\ref{ass2:inf_betamin}), if the pair $(\theta, \tilde{\theta})$ is identifiable with respect to $X$ and $\lambda,$ then for any $\epsilon \in \R^n,$
\[
\lim_{r \to + \infty} \frac{1}{u_r} \begin{bmatrix} \betahat^{\operatorname{JP}(r)}_{\lambda} - \beta^{(r)} \\
\omegahat^{\operatorname{JP}(r)}_{\lambda} - \omega^{(r)} \end{bmatrix} = \begin{bmatrix} 0 \\ 0 \end{bmatrix},
\]
where $u_r := \norm{\beta^{(r)}}_1 + \lambda \norm{\omega^{(r)}}_1.$
\end{lemma}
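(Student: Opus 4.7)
The plan is to reformulate JP as a Basis Pursuit problem and then combine the basic optimality inequality with a compactness-plus-identifiability argument. Writing $A := [X \ \sqrt{n}\lambda^{-1} I] \in \R^{n \times (p+n)}$ and setting $z^{(r)} := (\beta^{(r)}, \lambda \omega^{(r)})$, $\hat z_r := (\betahat^{\operatorname{JP}(r)}_\lambda, \lambda \omegahat^{\operatorname{JP}(r)}_\lambda)$, the JP problem at level $r$ becomes BP with matrix $A$ and response $y^{(r)} = A z^{(r)} + \epsilon$, so that $A h_r = \epsilon$ with $h_r := \hat z_r - z^{(r)}$. By assumption \eqref{ass2:sign_invariance}, $\sign(z^{(r)})$ is the constant vector $\sigma := (\theta, \tilde\theta)$ with support $S := S^0 \cup (T^0 + p)$. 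Assumption \eqref{ass2:inf_betamin} gives $u_r \to +\infty$.

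First, I would derive a basic $r$-uniform bound on $\norm{h_r}_1$. Since $z^{(r)} + (0, \lambda \epsilon/\sqrt{n})$ is feasible for BP with response $y^{(r)}$, optimality of $\hat z_r$ yields
\[
\norm{\hat z_r}_1 \leq u_r + \tfrac{\lambda}{\sqrt n}\norm{\epsilon}_1 .
\]
Combined with the elementary one-sided inequality $\norm{z^{(r)} + h_r}_1 \geq \norm{z^{(r)}}_1 + \sigma^T h_r + \norm{h_{r, \bar S}}_1$, this gives
\[
\sigma^T h_r + \norm{h_{r, \bar S}}_1 \leq \tfrac{\lambda}{\sqrt n}\norm{\epsilon}_1 =: C_\epsilon .
\]

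Next I would show $\norm{h_r}_1$ stays bounded by contradiction. If $\norm{h_r}_1 \to +\infty$ along a subsequence, then $v_r := h_r / \norm{h_r}_1$ lives on the compact $\ell_1$-unit sphere in $\R^{p+n}$, so a further subsequence converges to some $v^*$ with $\norm{v^*}_1 = 1$. From $A v_r = \epsilon / \norm{h_r}_1 \to 0$ we get $A v^* = 0$, and dividing the previous display by $\norm{h_r}_1$ and passing to the limit yields $\sigma^T v^* + \norm{v^*_{\bar S}}_1 \leq 0$. But identifiability of $(\theta, \tilde\theta)$, via Lemma~\ref{lemma:identifiability_JP}, forces $|\sigma^T v^*| < \norm{v^*_{\bar S}}_1$ for any nonzero $v^* \in \ker A$, whence $\sigma^T v^* + \norm{v^*_{\bar S}}_1 > 0$, a contradiction. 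Therefore $\norm{h_r}_1 = O(1)$, and since $u_r \to +\infty$ we conclude $h_r / u_r \to 0$, which gives both coordinate-wise limits in the statement.

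The key technical choice, and the only subtle point, is to normalize by $\norm{h_r}_1$ rather than by $u_r$: this is what lets the limit $v^*$ be nonzero and lie in $\ker A$, so that the strict identifiability inequality from Lemma~\ref{lemma:identifiability_JP} can be invoked. Once this contradiction step is set up, everything else is a routine combination of BP-type optimality, the sign trick, and finite-dimensional compactness; in particular, assumption \eqref{ass2:bounded_ratio} is not needed here and is only used later to turn this limit into a sign-recovery conclusion in Theorem~\ref{thm:ext_tardivel}.
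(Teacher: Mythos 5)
Your proof is correct, and it takes a genuinely different route from the paper's. Both arguments start the same way, by recasting JP as Basis Pursuit with the matrix $A=[X \ \sqrt{n}\lambda^{-1}I_n]$ and exploiting feasibility of a shifted version of the true signal (the paper shifts by the JP solution for $y=\epsilon$, you shift by $(0,\lambda\epsilon/\sqrt{n})$; either works). After that they diverge. The paper normalizes everything by $u_r$, extracts convergent subsequences of both $z^{(r)}/u_r$ and $\hat z_r/u_r$ with limits $\nu$ and $\nu'$, shows $A\nu'=A\nu$ with $\norm{\nu'}\le\norm{\nu}=1$ in the weighted $\ell_1$ norm, and then must prove — via a separate argument tracking how $\sign(\nu)$ relates to $(\theta,\tilde\theta)$ through indicator corrections — that the limit $\nu$ is itself identifiable before concluding $\nu'=\nu$. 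You instead keep the sharper one-sided optimality bound $\sigma^T h_r+\norm{h_{r,\bar S}}_1\le C_\epsilon$, normalize the error $h_r$ by its own $\ell_1$ norm in a contradiction argument, and land directly on a nonzero kernel element $v^*$ of $A$ to which the identifiability of $(\theta,\tilde\theta)$ applies verbatim via Lemma~\ref{lemma:identifiability_JP}. This buys you two things: you skip entirely the "identifiability of the limit point" step, which is the most delicate part of the paper's proof, and you obtain the strictly stronger conclusion that $\norm{h_r}_1$ is bounded (the paper only gets $h_r/u_r\to 0$), after which the lemma follows from $u_r\to+\infty$ alone. Your observation that assumption~\eqref{ass2:bounded_ratio} is not needed here matches the paper, which also uses it only in the proof of Theorem~\ref{thm:ext_tardivel}.
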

\begin{proof}
First note that by assumption~(\ref{ass2:inf_betamin}), $\lim_{r \to + \infty} u_r = + \infty.$ Now let $\epsilon \in \R^n$ and denote by $(\betahatJP_{\lambda}(\epsilon), \omegahatJP_{\lambda}(\epsilon))$ the JP solution when $y = \epsilon.$ In particular, one has $\epsilon = X \betahatJP_{\lambda}(\epsilon) + \sqrt{n} \omegahatJP_{\lambda}(\epsilon),$ so for every $r \in \N^*$, {let us consider $y^{(r)}$ as follows}
\[
y^{(r)} = X(\beta^{(r)} + \betahatJP_{\lambda}(\epsilon)) + \sqrt{n} (\omega^{(r)} + \omegahatJP_{\lambda}(\epsilon)).
\] 
Hence $(\beta^{(r)} +  \betahatJP_{\lambda}(\epsilon), \omega^{(r)} + \omegahatJP_{\lambda}(\epsilon))$ is feasible for JP when $y = y^{(r)},$ so
\begin{equation} \label{ineq:JPr}
\begin{aligned}
&\frac{\norm{\betahat^{\operatorname{JP}(r)}_{\lambda}}_1 + \lambda \norm{\omegahat^{\operatorname{JP}(r)}_{\lambda}}_1}{u_r}
\\
& \qquad \qquad \leq
\frac{\norm{\beta^{(r)} +  \betahatJP_{\lambda}(\epsilon)}_1 + \lambda \norm{\omega^{(r)} + \omegahatJP_{\lambda}(\epsilon)}_1}{u_r}
\\
& \qquad \qquad \leq
\frac{(\norm{\beta^{(r)}}_1 + \lambda \norm{\omega^{(r)}}_1) + (\norm{\betahatJP_{\lambda}(\epsilon)}_1 + \lambda \norm{\omegahatJP_{\lambda}(\epsilon)}_1)}{u_r} 
\\
& \qquad \qquad =
1 + \frac{\norm{\betahatJP_{\lambda}(\epsilon)}_1 + \lambda \norm{\omegahatJP_{\lambda}(\epsilon)}_1}{u_r}.
\end{aligned}
\end{equation}
Therefore
\begin{equation} \label{ineq:diff}
\begin{aligned}
&\frac{1}{u_r} (\norm{\betahat^{\operatorname{JP}(r)}_{\lambda} - \beta^{(r)}}_1 + \lambda \norm{\omegahat^{\operatorname{JP}(r)}_{\lambda} - \omega^{(r)}}_1)
\\
& \qquad \qquad \leq
\frac{1}{u_r}((\norm{\beta^{(r)}}_1 + \lambda \norm{\omega^{(r)}}_1) + (\norm{\betahat^{\operatorname{JP}(r)}_{\lambda}}_1 + \lambda \norm{\omegahat^{\operatorname{JP}(r)}_{\lambda}}_1))
\\
&\qquad \qquad =
1 + \frac{\norm{\betahat^{\operatorname{JP}(r)}_{\lambda}}_1 + \lambda \norm{\omegahat^{\operatorname{JP}(r)}_{\lambda}}_1}{u_r}
\\
&\qquad \qquad \leq
2 +  \frac{\norm{\betahatJP_{\lambda}(\epsilon)}_1 + \lambda \norm{\omegahatJP_{\lambda}(\epsilon)}_1}{u_r},
\end{aligned}
\end{equation}
using~(\ref{ineq:JPr}) for last inequality. Since $\lim_{r \to + \infty} \frac{\norm{\betahatJP_{\lambda}(\epsilon)}_1 + \lambda \norm{\omegahatJP_{\lambda}(\epsilon)}_1}{u_r} = 0,$ and since $\begin{bmatrix}\beta \\ \omega \end{bmatrix} \mapsto \norm{\beta}_1 + \lambda \norm{\omega}_1$ defines a norm on $\R^{p + n},$ one deduces that the sequence $\frac{1}{u_r} \begin{bmatrix} \betahat^{\operatorname{JP}(r)}_{\lambda} - \beta^{(r)} \\
\omegahat^{\operatorname{JP}(r)}_{\lambda} - \omega^{(r)} \end{bmatrix}$ is bounded. Therefore we need to check that every convergent subsequence converges to zero. Let 
\[
\frac{1}{u_{\phi(r)}} \begin{bmatrix} \betahat^{\operatorname{JP}(\phi(r))}_{\lambda} - \beta^{(\phi(r))} \\
\omegahat^{\operatorname{JP}(\phi(r))}_{\lambda} - \omega^{(\phi(r))} \end{bmatrix}
\] 
(with $\phi: \N^* \to \N^*$ strictly increasing) be an arbitrary convergent subsequence. Since 
\begin{equation} \label{eq:norm_r_1}
\frac{\norm{\beta^{(r)}}_1 + \lambda \norm{\omega^{(r)}}_1}{u_r} = 1
\end{equation} 
for every $r,$ and by~(\ref{ineq:JPr}), the sequences $\frac{1}{u_r} \begin{bmatrix}\beta^{(r)} \\ \omega^{(r)} \end{bmatrix}$ and $\frac{1}{u_r} \begin{bmatrix}\betahat^{\operatorname{JP}(r)}_{\lambda} \\ \omegahat^{\operatorname{JP}(r)}_{\lambda} \end{bmatrix}$ are bounded as well. Hence without loss of generality (otherwise, reduce the subsequence), 
\begin{equation} \label{limit1}
\lim_{r \to + \infty} \frac{1}{u_{\phi(r)}} \begin{bmatrix}\beta^{(\phi(r))} \\ \omega^{(\phi(r))} \end{bmatrix} = \begin{bmatrix} \nu_1 \\ \nu_2 \end{bmatrix},
\end{equation}
and
\begin{equation} \label{limit2}
\lim_{r \to +\infty} \frac{1}{u_{\phi(r)}} \begin{bmatrix} \betahat^{\operatorname{JP}(\phi(r))}_{\lambda} \\ \omegahat^{\operatorname{JP}(\phi(r))}_{\lambda} \end{bmatrix} = \begin{bmatrix} \nu_1' \\ \nu_2' \end{bmatrix}
\end{equation}
for some $\begin{bmatrix} \nu_1 \\ \nu_2 \end{bmatrix}, \begin{bmatrix} \nu_1' \\ \nu_2' \end{bmatrix} \in \R^{p+n}.$ By~(\ref{eq:norm_r_1}), one necessarily has
\begin{equation} \label{limit_norm}
\norm{\nu_1}_1 + \lambda \norm{\nu_2}_1 = 1,
\end{equation}
and~(\ref{ineq:JPr}) implies that
\begin{equation} \label{limit_norm_bound}
\norm{\nu_1'}_1 + \lambda \norm{\nu_2'}_1 \leq 1.
\end{equation}
Now
\begin{equation*}
\begin{aligned}
\lim_{r \to + \infty} \frac{X(\betahat^{\operatorname{JP}(r)}_{\lambda} - \beta^{(r)}) + \sqrt{n} (\omegahat^{\operatorname{JP}(r)}_{\lambda} - \omega^{(r)})}{u_r}
&=
\lim_{r \to + \infty} \frac{y^{(r)} - (X\beta^{(r)} + \sqrt{n} \omega^{(r)})}{u_r}
\\
&=
\lim_{r \to + \infty} \frac{\epsilon}{u_r} = 0,
\end{aligned}
\end{equation*}
so one deduces that 
\begin{equation*} 
\lim_{r \to + \infty} \begin{bmatrix} X & \sqrt{n} I_n \end{bmatrix} \begin{bmatrix} \betahat^{\operatorname{JP}(\phi(r))}_{\lambda} / u_{\phi(r)} \\ \omegahat^{\operatorname{JP}(\phi(r))}_{\lambda} / u_{\phi(r)} \end{bmatrix}
= 
\lim_{r \to + \infty} \begin{bmatrix} X & \sqrt{n} I_n \end{bmatrix} \begin{bmatrix} \beta^{(\phi(r))}/ u_{\phi(r)} \\ \omega^{(\phi(r))} / u_{\phi(r)} \end{bmatrix},
\end{equation*}
so by~(\ref{limit1}) and~(\ref{limit2}),
\begin{equation} \label{limit_signal}
\begin{bmatrix} X & \sqrt{n} I_n \end{bmatrix} \begin{bmatrix} \nu_1' \\ \nu_2' \end{bmatrix}
=
\begin{bmatrix} X & \sqrt{n} I_n \end{bmatrix} \begin{bmatrix} \nu_1 \\ \nu_2 \end{bmatrix}.
\end{equation}
Assuming for now that $(\nu_1, \nu_2)$ is identifiable with respect to $X$ and $\lambda,$ equality~(\ref{limit_signal}) together with~(\ref{limit_norm}) and~(\ref{limit_norm_bound}) imply that $\begin{bmatrix} \nu_1' \\ \nu_2' \end{bmatrix} =  \begin{bmatrix} \nu_1 \\ \nu_2 \end{bmatrix},$ hence
\begin{equation*}
\lim_{r \to + \infty} \frac{1}{u_{\phi(r)}} \begin{bmatrix} \betahat^{\operatorname{JP}(\phi(r))}_{\lambda} - \beta^{(\phi(r))} \\
\omegahat^{\operatorname{JP}(\phi(r))}_{\lambda} - \omega^{(\phi(r))} \end{bmatrix}
= 
\begin{bmatrix} \nu_1' \\ \nu_2' \end{bmatrix} -  \begin{bmatrix} \nu_1 \\ \nu_2 \end{bmatrix}
= \begin{bmatrix} 0 \\ 0 \end{bmatrix}.
\end{equation*}
It remains to check that $(\nu_1, \nu_2)$ is identifiable with respect to $X$ and $\lambda,$ which we will do using Lemma~\ref{lemma:identifiability_JP}. Note that~(\ref{limit1}) and assumption~(\ref{ass2:sign_invariance}) imply
\begin{align}
&\sign(\nu_1) = \theta - \theta', \label{sign_nu1}
\\
&\sign(\nu_2) = \tilde{\theta} - \tilde{\theta}', \label{sign_nu2}
\end{align}
where $\theta'_j := \theta_j \bm{1}_{\{\nu_{1,j} = 0, \theta_j \neq 0\}},$ and $\tilde{\theta}'_j = \tilde{\theta}_j \bm{1}_{\{\nu_{2,j} = 0, \tilde{\theta}_j \neq 0\}},$ and hence
\begin{align}
&\overline{\supp(\nu_1)} = \overline{\supp(\theta)} \sqcup \supp(\theta') = \Sbar  \sqcup \supp(\theta'), \label{disjoint_union_suppnu1}
\\
&\overline{\supp(\nu_2)} = \overline{\supp(\tilde{\theta})} \sqcup \supp{\tilde{\theta}'} = \Tbar \sqcup \supp{\tilde{\theta}'}. \label{disjoint_union_suppnu2}
\end{align}
Consider a pair $(\beta, \omega) \neq (0, 0)$ such that $X \beta + \sqrt{n} \lambda^{-1} \omega = 0.$ By~(\ref{sign_nu1}) and~(\ref{sign_nu2}),
\begin{equation} \label{ineq:lemma_indent}
\begin{aligned}
\abs{\sign(\nu_1)^T \beta + \sign(\nu_2)^T \omega}
&=
\abs{(\theta - \theta')^T \beta + (\tilde{\theta} - \tilde{\theta}')^T \omega}
\\
&\leq
\abs{\theta^T \beta + \tilde{\theta}^T \omega} + \abs{(\theta')^T \beta} + \abs{(\tilde{\theta}')^T \omega}.
\end{aligned}
\end{equation}
But since $(\theta, \tilde{\theta})$ is identifiable with respect to $X$ and $\lambda,$ Lemma~\ref{lemma:identifiability_JP} implies
%\begin{equation} \label{sign_identifiable}
$\abs{\theta^T \beta + \tilde{\theta}^T \omega} < \norm{\beta_{\Sbar}}_1 + \norm{\omega_{\Tbar}}_1.$
%\end{equation}
Plugging this into~(\ref{ineq:lemma_indent}) gives
\begin{equation*}
\begin{aligned}
\abs{\sign(\nu_1)^T \beta + \sign(\nu_2)^T \omega} 
&<
\norm{\beta_{\Sbar}}_1 + \norm{\omega_{\Tbar}}_1 +  \abs{(\theta')^T \beta} + \abs{(\tilde{\theta}')^T \omega}
\\
&\leq
\norm{\beta_{\Sbar}}_1 + \norm{\beta_{\supp(\theta')}}_1 + \norm{\omega_{\Tbar}}_1 + \norm{\omega_{\supp(\tilde{\theta}')}}_1
\\
&=
\norm{\beta_{\overline{\supp(\nu_1)}}}_1 + \norm{\omega_{\overline{\supp(\nu_2)}}}_1,
\end{aligned}
\end{equation*}
where the equality comes from~(\ref{disjoint_union_suppnu1}) and~(\ref{disjoint_union_suppnu2}). By Lemma~\ref{lemma:identifiability_JP}, one concludes that $(\nu_1, \nu_2)$ is identifiable with respect to $X$ and $\lambda.$
\end{proof}

\begin{proof}[Proof of Theorem~\ref{thm:ext_tardivel}]
Let us assume that $(\theta, \tilde{\theta})$ is identifiable with respect to $X$ and $\lambda,$ and let $\epsilon \in \R^n.$ By Lemma~\ref{lemma:auxiliary_tardivel},
\begin{equation} \label{consequence_auxiliary}
\lim_{r \to + \infty} \frac{1}{u_r} \begin{bmatrix} \betahat^{\operatorname{JP}(r)}_{\lambda} - \beta^{(r)} \\
\omegahat^{\operatorname{JP}(r)}_{\lambda} - \omega^{(r)} \end{bmatrix} = \begin{bmatrix} 0 \\ 0 \end{bmatrix}.
\end{equation}
Since
\[
\min \{1, \lambda \}\max\{\norm{\beta^{(r)}}_{\infty}, \norm{\omega^{(r)}}_{\infty}\}
\leq 
u_r
\leq
(\abs{S^0} + \lambda \abs{T^0}) \max\{\norm{\beta^{(r)}}_{\infty}, \norm{\omega^{(r)}}_{\infty}\},
\]
(\ref{consequence_auxiliary}) is equivalent to
$
\lim_{r \to + \infty} \frac{1}{\max\{\norm{\beta^{(r)}}_{\infty}, \norm{\omega^{(r)}}_{\infty}\}} \begin{bmatrix} \betahat^{\operatorname{JP}(r)}_{\lambda} - \beta^{(r)} \\
\omegahat^{\operatorname{JP}(r)}_{\lambda} - \omega^{(r)} \end{bmatrix} = \begin{bmatrix} 0 \\ 0 \end{bmatrix}.
$
Therefore there exists $R > 0$ such that for every $r \geq R,$
\begin{equation} \label{bound:error1}
\norm{\betahat^{\operatorname{JP}(r)}_{\lambda} - \beta^{(r)}}_{\infty}
<
\frac{q}{2} \max\{\norm{\beta^{(r)}}_{\infty}, \norm{\omega^{(r)}}_{\infty}\}
\end{equation}
and
\begin{equation} \label{bound:error2}
\norm{\omegahat^{\operatorname{JP}(r)}_{\lambda} - \omega^{(r)}}_{\infty}
<
\frac{q}{2} \max\{\norm{\beta^{(r)}}_{\infty}, \norm{\omega^{(r)}}_{\infty}\}.
\end{equation}
Setting $\tau := \frac{q}{2} \max\{\norm{\beta^{(r)}}_{\infty}, \norm{\omega^{(r)}}_{\infty}\},$~(\ref{bound:error1}) implies that $\abs{\betahat^{\operatorname{JP}(r)}_{\lambda, j}} < \tau$ for every $j \notin S^0,$ hence $\betahat^{\operatorname{TJP}(r)}_{(\lambda, \tau), j} = 0.$ If $j \in S^0,$ assumption~(\ref{ass2:bounded_ratio}) implies
\begin{equation} \label{beta_min_2tau}
\abs{\beta^{(r)}_j} \geq \beta^{(r)}_{\min} \geq 2 \tau,
\end{equation}
and by~(\ref{bound:error1}), we have
\begin{equation} \label{bound:error3}
\abs{\betahat^{\operatorname{JP}(r)}_{\lambda, j} - \beta^{(r)}_j} < \tau,
\end{equation}
so~(\ref{beta_min_2tau}) and~(\ref{bound:error3}) together imply $\abs{\betahat^{\operatorname{JP}(r)}_{\lambda, j} } > \tau$ and $\sign(\betahat^{\operatorname{JP}(r)}_{\lambda, j}) = \sign(\beta^{(r)}_j).$ So we conclude that $\sign(\betahat^{\operatorname{TJP}(r)}_{(\lambda, \tau)}) = \sign(\beta^{(r)}).$ Analogously,~(\ref{bound:error2}) implies $\sign(\omegahat^{\operatorname{TJP}(r)}_{(\lambda, \tau)}) = \sign(\omega^{(r)}).$

Conversely, let us assume that for some $\epsilon \in \R^n,$ $r \in \N^*$ and $\tau > 0,$
\begin{equation} \label{equal_signs}
\sign(\betahat^{\operatorname{TJP}(r)}_{(\lambda, \tau)}) = \theta, 
\quad \sign(\omegahat^{\operatorname{TJP}(r)}_{(\lambda, \tau)}) = \tilde{\theta}.
\end{equation}
Note that the JP solution $(\betahat^{\operatorname{JP}(r)}_{\lambda}, \omegahat^{\operatorname{JP}(r)}_{\lambda})$ is unique by assumption, hence $(\betahat^{\operatorname{JP}(r)}_{\lambda}, \omegahat^{\operatorname{JP}(r)}_{\lambda})$ is identifiable with respect to $X$ and $\lambda.$ Now by~(\ref{equal_signs}), all nonzero components of $\theta$ and $\tilde{\theta}$ must have the same sign as the corresponding entries of $\betahat^{\operatorname{JP}(r)}_{\lambda}$ and $\omegahat^{\operatorname{JP}(r)}_{\lambda}$ respectively. Hence
\begin{equation} \label{sign_decomp}
\begin{aligned}
&\theta = \sign(\theta) = \sign(\betahat^{\operatorname{JP}(r)}_{\lambda}) - \delta, \\
&\tilde{\theta} = \sign(\tilde{\theta}) = \sign(\omegahat^{\operatorname{JP}(r)}_{\lambda}) - \tilde{\delta},
\end{aligned}
\end{equation}
where $\delta_j = \sign(\betahat^{\operatorname{JP}(r)}_{\lambda, j}) \bm{1}_{\{\betahat^{\operatorname{JP}(r)}_{\lambda, j} \neq 0, \theta_j = 0\}}$ and $\tilde{\delta}_i = \sign(\omegahat^{\operatorname{JP}(r)}_{\lambda, i}) \bm{1}_{\{\omegahat^{\operatorname{JP}(r)}_{\lambda, i} \neq 0, \tilde{\theta}_i = 0\}},$ and
\begin{equation} \label{supp_decomp}
\begin{aligned}
&\Sbar = \overline{\supp(\theta)} = \overline{\supp(\betahat^{\operatorname{JP}(r)}_{\lambda})} \sqcup \supp(\delta)
\\
&\Tbar = \overline{\supp(\tilde{\theta})} = \overline{\supp(\omegahat^{\operatorname{JP}(r)}_{\lambda})} \sqcup \supp(\tilde{\delta}).
\end{aligned}
\end{equation}
In order to apply Lemma~\ref{lemma:identifiability_JP}, let us consider a pair $(\beta, \omega) \neq (0, 0)$ such that $X \beta + \sqrt{n} \lambda^{-1} \omega = 0.$  By~(\ref{sign_decomp}), one has
\begin{equation*}
\begin{aligned}
\abs{\theta^T \beta + \tilde{\theta}^T \omega}
&=
\abs{\sign(\betahat^{\operatorname{JP}(r)}_{\lambda})^T \beta  - \delta^T \beta + \sign(\omegahat^{\operatorname{JP}(r)}_{\lambda})^T \omega - \tilde{\delta}^T \omega}
\\
&\leq
\abs{\sign(\betahat^{\operatorname{JP}(r)}_{\lambda})^T \beta + \sign(\omegahat^{\operatorname{JP}(r)}_{\lambda})^T \omega} + \abs{\delta^T \beta} + \abs{\tilde{\delta}^T \omega}
\\
&\leq
\norm{\beta_{\overline{\supp(\betahat^{\operatorname{JP}(r)}_{\lambda})}}}_1 + \norm{\omega_{\overline{\supp(\omegahat^{\operatorname{JP}(r)}_{\lambda})}}}_1 + \norm{\beta_{\supp(\delta)}}_1 + \norm{\omega_{\supp(\tilde{\delta})}}_1
\\
&=
\norm{\beta_{\Sbar}}_1 + \norm{\omega_{\Tbar}}_1,
\end{aligned}
\end{equation*}
where we have used Lemma~\ref{lemma:identifiability_JP} and the fact that $(\betahat^{\operatorname{JP}(r)}_{\lambda}, \omegahat^{\operatorname{JP}(r)}_{\lambda})$ is identifiable with respect to $X$ and $\lambda$ in the last inequality, and~(\ref{supp_decomp}) for the last equality. Lemma~\ref{lemma:identifiability_JP} concludes our proof.
\end{proof}

\section{Proof of Proposition~\ref{thm:TJP}}
\label{app:proof_thm_TJP}

\begin{proof}[Proof of Proposition~\ref{thm:TJP}]
We define $\tilde{X} := \begin{bmatrix}X & \sqrt{n} I_n\end{bmatrix},$ and
$
\tilde{\nu} = \begin{bmatrix} \betatilde \\ \omegatilde \end{bmatrix} := \tilde{X}^T (\tilde{X} \tilde{X}^T)^{-1} \epsilon.
$ We will assume for now that the following properties hold.
\begin{enumerate}[a)]
\item \label{extstableNSP} Every pair $(\beta, \omega)$ such that $X \beta + \sqrt{n} \omega = 0$ satisfies
\[
\norm{\beta_{S^0}}_1 + \lambda \norm{\omega_{T^0}}_1 \leq \frac{1}{3} (\norm{\beta_\Sbar}_1 + \lambda \norm{\omega_\Tbar}_1),
\]
\item \label{l2bound_nutilde} $\norm{\tilde{\nu}}_2 \leq \frac{\sqrt{2}\sigma}{\left( \frac{\lambdamin(\Sigma)}{4} (\sqrt{p/n} - 1)^2 + 1 \right)^{1/2}}.$
\end{enumerate}
Since $\tilde{X} \tilde{\nu} = X \betatilde + \sqrt{n} \omegatilde = \epsilon,$ one can rewrite model~(\ref{model:sparsecorruption}) as
\begin{equation*}
y = X(\beta^0 + \betatilde) + \sqrt{n} (\omega^0 + \omegatilde).
\end{equation*}
By property~\ref{extstableNSP}) and Lemma~\ref{lemma:stability_JP} below, one has
\begin{equation}
\norm{\betahat^{\operatorname{JP}}_\lambda - (\beta^0 + \betatilde)}_1 + 
\lambda \norm{\omegahat^{\operatorname{JP}}_\lambda - (\omega^0 + \omegatilde)}_1
\leq 
4 (\norm{\betatilde_{\Sbar}}_1 + \lambda \norm{\omegatilde_{\Tbar}}_1),
\end{equation}
and therefore $\norm{\betahat^{\operatorname{JP}}_\lambda - (\beta^0 + \betatilde)}_1
\leq
4 (\norm{\betatilde}_1 + \lambda \norm{\omegatilde}_1).$
Consequently, for any $j \in [p]$ one has
\begin{equation*}
\begin{aligned}
\abs{\betahatJP_{\lambda, j} - \beta^0_j}
&\leq
\abs{\betahatJP_{\lambda, j} - (\beta^0_j + \betatilde_j)} + \abs{\betatilde_j}
\leq
\norm{\betahat^{\operatorname{JP}}_\lambda - (\beta^0 + \betatilde)}_1 + \norm{\betatilde}_1
\\
&\leq 
4 (\norm{\betatilde}_1 + \lambda \norm{\omegatilde}_1)  + \norm{\betatilde}_1
\leq
5 (\norm{\betatilde}_1 +  \lambda \norm{\omegatilde}_1)
\\
&\leq
5 \max\{1, \lambda\} (\norm{\betatilde}_1 + \norm{\omegatilde}_1)
=
5 \max\{1, \lambda\} \norm{\tilde{\nu}}_1
\\
&\leq
5 \max\{1, \lambda\} \sqrt{p+n} \norm{\tilde{\nu}}_2
\leq
\frac{5 \sqrt{2} \max\{1, \lambda\} \sigma \sqrt{p+n}}{( \frac{\lambdamin(\Sigma)}{4} (\sqrt{p/n} - 1)^2 + 1 )^{1/2}}
\end{aligned}
\end{equation*}
where we have used property~\ref{l2bound_nutilde}) in the last inequality. Now setting
\[
\tau := \frac{5 \sqrt{2} \max\{1, \lambda\} \sigma \sqrt{p+n}}{( \frac{\lambdamin(\Sigma)}{4} (\sqrt{p/n} - 1)^2 + 1 )^{1/2}},
\]
one gets 
\begin{equation} \label{bound:tau}
\abs{\betahatJP_{\lambda, j} - \beta^0_j} \leq \tau
\end{equation}
for every $j \in [p].$ If $j \in \Sbar,$ we have $\abs{\betahatJP_{\lambda, j}} \leq \tau,$ hence $\betahatTJP_{(\lambda, \tau), j} = 0.$ If $j \in S^0,$ assumption~(\ref{beta_min}) implies $\abs{\beta^0_j} > 2 \tau,$ which together with~(\ref{bound:tau}) gives $\sign(\betahatTJP_{(\lambda, \tau), j}) = \sign(\beta^0_j).$

It remains to prove that properties~\ref{extstableNSP}) and \ref{l2bound_nutilde}) hold with high probability. First, Lemma~1 in \citet{nguyen2013a}, implies that with probability greater than $1 - c e^{-c' n}$ the matrix $X$ satisfies the extended restricted eigenvalue property
\begin{equation} \label{extRE}
\begin{aligned}
\norm{\beta_\Sbar}_1 + \lambda \norm{\omega_\Tbar}_1 &\leq 3 (\norm{\beta_{S^0}}_1 + \lambda \norm{\omega_{T^0}}_1)
\\ &\Downarrow
\\
\frac{1}{n} \norm{X \beta + \sqrt{n} \omega}_2^2 &\geq \gamma^2 (\norm{\beta}_2^2 + \norm{\omega}_2^2),
\end{aligned}
\end{equation}
with $\gamma^2 = \frac{\min\{\lambdamin(\Sigma), 1\}}{16^2}.$ Property~(\ref{extRE}) clearly implies~\ref{extstableNSP}). Finally, Lemma~\ref{lemma:l2norm_nutilde} below proves that~\ref{l2bound_nutilde}) holds with probability at least $1 - 1.14^{-n} - 2e^{-\frac{1}{8}(\sqrt{p} - \sqrt{n})^2},$ which concludes our proof. 
\end{proof}

\begin{lemma} \label{lemma:stability_JP}
Assume that for some sets $S^0 \subset [p]$ and $T^0 \subset [n],$ and some constant $\rho \in (0, 1),$ the matrix $X \in \R^{n \times p}$ satisfies
\begin{equation} \label{ext_stableNSP}
\norm{\beta_{S^0}}_1 + \lambda \norm{\omega_{T^0}}_1 \leq \rho (\norm{\beta_{\Sbar}}_1 + \lambda \norm{\omega_{\Tbar}}_1),
\end{equation}
for every pair $(\beta, \omega) \in \R^p \times \R^n$ such that $X \beta + \sqrt{n} \omega = 0.$ Then for every pair $(\betatilde, \omegatilde) \in \R^{p} \times \R^n,$ the solution $(\betahatJP_{\lambda}, \omegahatJP_{\lambda})$ to JP~(\ref{opt:JP_newscaling}) with $y = X \betatilde + \sqrt{n} \omegatilde$ satisfies
\begin{equation*} 
\norm{\betahatJP_{\lambda} - \betatilde}_1 + \lambda \norm{\omegahatJP_{\lambda} - \omegatilde}_1
\leq
\frac{2(1+\rho)}{1-\rho} (\norm{\betatilde_{\Sbar}}_1 + \lambda \norm{\omegatilde_{\Tbar}}_1).
\end{equation*}
\end{lemma}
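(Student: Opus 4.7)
The plan is to follow the classical null-space-property argument for $\ell_1$ recovery, adapted to the joint $(\beta,\omega)$ variable with the weighted $\ell_1$ norm $\|\beta\|_1+\lambda\|\omega\|_1$. Define the error vectors
\[
h := \betahatJP_{\lambda} - \betatilde, \qquad k := \omegahatJP_{\lambda} - \omegatilde.
\]
Because both $(\betahatJP_{\lambda},\omegahatJP_{\lambda})$ and $(\betatilde,\omegatilde)$ are feasible for the JP program with $y=X\betatilde+\sqrt{n}\omegatilde$, we have $Xh+\sqrt{n}k = 0$, so the pair $(h,k)$ lies in the kernel on which assumption~\eqref{ext_stableNSP} is effective.

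The next step is the standard cone-condition derivation from optimality. Since $(\betatilde,\omegatilde)$ is feasible, optimality of the JP solution gives $\|\betahatJP_{\lambda}\|_1+\lambda\|\omegahatJP_{\lambda}\|_1 \leq \|\betatilde\|_1+\lambda\|\omegatilde\|_1$. Splitting each $\ell_1$ norm over $S^0$ and $\Sbar$ (resp.\ $T^0$ and $\Tbar$), writing $\betahatJP_{\lambda}=\betatilde+h$ and $\omegahatJP_{\lambda}=\omegatilde+k$, and applying the triangle inequality componentwise (reverse triangle on $S^0$, $T^0$; ordinary triangle on $\Sbar$, $\Tbar$), the terms $\|\betatilde_{S^0}\|_1$ and $\|\omegatilde_{T^0}\|_1$ cancel and one obtains
\[
\|h_{\Sbar}\|_1 + \lambda\|k_{\Tbar}\|_1 \leq \|h_{S^0}\|_1 + \lambda\|k_{T^0}\|_1 + 2\bigl(\|\betatilde_{\Sbar}\|_1 + \lambda\|\omegatilde_{\Tbar}\|_1\bigr).
\]

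Now inject the hypothesis~\eqref{ext_stableNSP}: since $Xh+\sqrt{n}k=0$, the pair $(h,k)$ satisfies $\|h_{S^0}\|_1+\lambda\|k_{T^0}\|_1 \leq \rho(\|h_{\Sbar}\|_1+\lambda\|k_{\Tbar}\|_1)$. Substituting this into the previous display and moving the $\rho$-term to the left yields
\[
(1-\rho)\bigl(\|h_{\Sbar}\|_1 + \lambda\|k_{\Tbar}\|_1\bigr) \leq 2\bigl(\|\betatilde_{\Sbar}\|_1 + \lambda\|\omegatilde_{\Tbar}\|_1\bigr),
\]
which uses $\rho<1$ to produce a genuine bound on the off-support error.

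Finally, to pass from an $\Sbar,\Tbar$-restricted bound to the full norm, write
\[
\|h\|_1 + \lambda\|k\|_1 = \bigl(\|h_{S^0}\|_1+\lambda\|k_{T^0}\|_1\bigr) + \bigl(\|h_{\Sbar}\|_1+\lambda\|k_{\Tbar}\|_1\bigr) \leq (1+\rho)\bigl(\|h_{\Sbar}\|_1+\lambda\|k_{\Tbar}\|_1\bigr),
\]
again by~\eqref{ext_stableNSP}, and combine with the previous inequality to obtain the claimed constant $\frac{2(1+\rho)}{1-\rho}$. The only conceptual step is the bookkeeping in the cone derivation; there is no real obstacle, since the extended stable null space property is tailor-made so that the weighted $\ell_1$ decomposition across $(S^0,T^0)$ and $(\Sbar,\Tbar)$ mirrors the scalar case.
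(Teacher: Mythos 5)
Your proof is correct and follows essentially the same route as the paper's: both derive the cone condition $\|h_{\Sbar}\|_1+\lambda\|k_{\Tbar}\|_1 \leq \|h_{S^0}\|_1+\lambda\|k_{T^0}\|_1+2(\|\betatilde_{\Sbar}\|_1+\lambda\|\omegatilde_{\Tbar}\|_1)$ from JP optimality and the triangle inequality, then apply the extended null space property twice (once to absorb the on-support error with the factor $1-\rho$, once to pass to the full weighted norm with the factor $1+\rho$). The paper merely organizes the intermediate algebra slightly differently (adding two inequalities rather than cancelling termwise), following Theorem 4.14 of Foucart and Rauhut; the content is identical.
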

\begin{proof}
This proof is a simple extension of the one of Theorem 4.14 in \citet{foucart2013}. Let us consider $y = X \betatilde + \sqrt{n} \omegatilde$ for an arbitrary pair $(\betatilde, \omegatilde),$ and define
$
\beta' := \betahatJP_{\lambda} -  \betatilde
$ and
$\omega' := \omegahatJP_{\lambda} - \omegatilde.$
Clearly $X \beta' + \sqrt{n} \omega' = 0,$ so by~(\ref{ext_stableNSP}),
\begin{equation} \label{ineq:extstableNSP}
\norm{\beta'_{S^0}}_1 + \lambda \norm{\omega'_{T^0}}_1
\leq
\rho (\norm{\beta'_{\Sbar}}_1 + \lambda \norm{\omega'_{\Tbar}}_1).
\end{equation}
We also have
\begin{equation*}
\begin{aligned}
\norm{\betatilde}_1 + \lambda \norm{\omegatilde}_1
&=
\norm{\betatilde_{S^0}}_1 + \norm{\betatilde_{\Sbar}}_1
+
\lambda (\norm{\omegatilde_{T^0}}_1 + \norm{\omegatilde_{\Tbar}}_1)
\\
&=
\norm{\betahatJP_{\lambda, S^0} - \beta'_{S^0}}_1 + \norm{\betatilde_{\Sbar}}_1
+
\lambda (\norm{\omegahatJP_{\lambda, T^0} - \omega'_{T^0}}_1 + \norm{\omegatilde_{\Tbar}}_1)
\\
& \leq
 \norm{\betahatJP_{\lambda, S^0}}_1 + \norm{\beta'_{S^0}}_1  + \norm{\betatilde_{\Sbar}}_1
+
\lambda (\norm{\omegahatJP_{\lambda, T^0}}_1 +  \norm{\omega'_{T^0}}_1 + \norm{\omegatilde_{\Tbar}}_1),
\end{aligned}
\end{equation*}
and
\begin{equation*}
\norm{\beta'_{\Sbar}}_1 + \lambda \norm{\omega'_{\Tbar}}_1
\leq
(\norm{\betahatJP_{\lambda, \Sbar}}_1 + \norm{\betatilde_{\Sbar}}_1) + \lambda (\norm{\omegahatJP_{\lambda, \Tbar}}_1 + \norm{\omegatilde_{\Tbar}}_1).
\end{equation*}
Adding the last two inequalities yields
\begin{equation*}
\begin{aligned}
\norm{\beta'_{\Sbar}}_1 + \lambda \norm{\omega'_{\Tbar}}_1 + \norm{\betatilde}_1 + \lambda \norm{\omegatilde}_1
&\leq
\norm{\betahatJP_{\lambda}}_1 + \norm{\beta'_{S^0}}_1  +  2 \norm{\betatilde_{\Sbar}}_1
\\
& \quad +
\lambda (\norm{\omegahatJP_{\lambda}}_1 + \norm{\omega'_{T^0}}_1 +  2\norm{\omegatilde_{\Tbar}}_1),
\end{aligned}
\end{equation*}
and rearranging terms gives
\begin{equation*}
\begin{aligned}
\norm{\beta'_{\Sbar}}_1 + \lambda \norm{\omega'_{\Tbar}}_1
& \leq
(\norm{\betahatJP_{\lambda}}_1 + \lambda \norm{\omegahatJP_{\lambda}}_1) - (\norm{\betatilde}_1 + \lambda \norm{\omegatilde}_1)
\\
& \quad + (\norm{\beta'_{S^0}}_1 + \lambda \norm{\omega'_{T^0}}_1) + 2(\norm{\betatilde_{\Sbar}}_1 + \lambda \norm{\omegatilde_{\Tbar}}_1).
\end{aligned}
\end{equation*}
Using~(\ref{ineq:extstableNSP}) and the fact that $\norm{\betahatJP_{\lambda}}_1 + \lambda \norm{\omegahatJP_{\lambda}}_1 \leq  \norm{\betatilde}_1 + \lambda \norm{\omegatilde}_1$ by minimality of the JP solution, we get
\begin{equation*}
\norm{\beta'_{\Sbar}}_1 + \lambda \norm{\omega'_{\Tbar}}_1
\leq
\rho(\norm{\beta'_{\Sbar}}_1 + \lambda \norm{\omega'_{\Tbar}}_1) 
+ 2(\norm{\betatilde_{\Sbar}}_1 + \lambda \norm{\omegatilde_{\Tbar}}_1),
\end{equation*}
hence
\begin{equation} \label{ineq:complement}
\norm{\beta'_{\Sbar}}_1 + \lambda \norm{\omega'_{\Tbar}}_1
\leq
\frac{2}{1-\rho} (\norm{\betatilde_{\Sbar}}_1 + \lambda \norm{\omegatilde_{\Tbar}}_1).
\end{equation}
Now inequality~(\ref{ineq:extstableNSP}) also implies
\begin{equation} \label{ineq:extstableNSP2}
\begin{aligned}
\norm{\beta'}_1 + \lambda \norm{\omega'}_1
&=
\norm{\beta'_{S^0}}_1 + \lambda \norm{\omega'_{T^0}}_1 + \norm{\beta'_{\Sbar}}_1 + \lambda \norm{\omega'_{\Tbar}}_1
\\
&\leq
(1 + \rho) (\norm{\beta'_{\Sbar}}_1 + \lambda \norm{\omega'_{\Tbar}}_1)
\end{aligned}
\end{equation}
and continuing~~(\ref{ineq:extstableNSP2}) with~(\ref{ineq:complement}) gives the desired inequality.
\end{proof}
%%%%%%%%%%%%%%%%%%%%%%%%%%%%%%%%%%%%%%%%%%%%%%

\begin{lemma} \label{lemma:l2norm_nutilde}
Let $\Xtilde := \begin{bmatrix} X & \sqrt{n} I_n \end{bmatrix}.$ Under assumptions~\ref{assumption:Gaussdesign}), \ref{assumption:boundedsigmamin}), \ref{assumption:var1}) and \ref{assumption:Gaussiannoise}), 
\[
\norm{\Xtilde^T (\Xtilde \Xtilde^T)^{-1} \epsilon}_2 \leq \frac{\sqrt{2}\sigma}{( \frac{\lambdamin(\Sigma)}{4} (\sqrt{p/n} - 1)^2 + 1 )^{1/2}},
\]
with probability at least $1 - 1.14^{-n} - 2e^{-\frac{1}{8}(\sqrt{p} - \sqrt{n})^2}.$
\end{lemma}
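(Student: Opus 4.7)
The quantity $\tilde\nu = \Xtilde^T(\Xtilde\Xtilde^T)^{-1}\epsilon$ is the minimum-$\ell_2$-norm solution of the system $\Xtilde \nu = \epsilon$, so a direct computation gives
\[
\norm{\tilde\nu}_2^2 \;=\; \epsilon^T (\Xtilde\Xtilde^T)^{-1}\epsilon \;=\; \epsilon^T (XX^T + n I_n)^{-1}\epsilon \;\leq\; \frac{\norm{\epsilon}_2^2}{\lambdamin(XX^T) + n}.
\]
The strategy is then to control the numerator and denominator separately with high probability, and combine via a union bound; the target bound has the form $\sqrt{2}\sigma$ in the numerator (coming from $\norm{\epsilon}_2^2 \leq 2\sigma^2 n$) divided by $(n(\lambdamin(\Sigma)(\sqrt{p/n}-1)^2/4 + 1))^{1/2}$, which cleanly identifies what has to be shown for each factor.

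\textbf{Numerator.} Since $\epsilon \sim \mathcal{N}(0, \sigma^2 I_n)$, one has $\norm{\epsilon}_2^2/\sigma^2 \sim \chi^2_n$. Applying the Laurent--Massart deviation inequality $\p(\chi^2_n \geq n + 2\sqrt{nt} + 2t) \leq e^{-t}$ with $t = \alpha n$ for $\alpha = (1-\sqrt{3}/2)$ (so that $1 + 2\sqrt{\alpha} + 2\alpha = 2$) yields
\[
\p\bigl(\norm{\epsilon}_2^2 \leq 2 \sigma^2 n\bigr) \;\geq\; 1 - e^{-\alpha n} \;\geq\; 1 - 1.14^{-n},
\]
which is the first probability term in the statement.

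\textbf{Denominator.} Under Assumption~\eqref{assumption:Gaussdesign}, write $X = Z \Sigma^{1/2}$ where $Z \in \R^{n \times p}$ has i.i.d.\ $\mathcal{N}(0,1)$ entries. Then for any unit vector $u \in \R^n$,
\[
u^T X X^T u \;=\; (Z^T u)^T \Sigma (Z^T u) \;\geq\; \lambdamin(\Sigma) \, \norm{Z^T u}_2^2 \;\geq\; \lambdamin(\Sigma) \, \sigma_{\min}(Z)^2,
\]
so $\lambdamin(XX^T) \geq \lambdamin(\Sigma)\,\sigma_{\min}(Z)^2$. Since $p > n$, Gordon's theorem (Davidson--Szarek) gives $\sigma_{\min}(Z) \geq \sqrt{p}-\sqrt{n} - t$ with probability at least $1 - 2e^{-t^2/2}$; taking $t = (\sqrt{p}-\sqrt{n})/2$ yields
\[
\sigma_{\min}(Z) \;\geq\; \tfrac{1}{2}(\sqrt{p}-\sqrt{n}), \qquad \text{so} \qquad \lambdamin(XX^T) \;\geq\; \tfrac{n}{4} \lambdamin(\Sigma) (\sqrt{p/n}-1)^2,
\]
with probability at least $1 - 2e^{-(\sqrt{p}-\sqrt{n})^2/8}$. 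Consequently $\lambdamin(XX^T) + n \geq n \bigl( \lambdamin(\Sigma)(\sqrt{p/n}-1)^2/4 + 1\bigr)$.

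\textbf{Combination.} On the intersection of the two events above, which by the union bound has probability at least $1 - 1.14^{-n} - 2e^{-(\sqrt{p}-\sqrt{n})^2/8}$,
\[
\norm{\tilde\nu}_2^2 \;\leq\; \frac{2\sigma^2 n}{n \bigl(\lambdamin(\Sigma)(\sqrt{p/n}-1)^2/4 + 1\bigr)} \;=\; \frac{2\sigma^2}{\lambdamin(\Sigma)(\sqrt{p/n}-1)^2/4 + 1},
\]
which is exactly the squared bound claimed. Nothing here is truly hard; the main thing to watch is the choice of the constant $\alpha$ in the $\chi^2$ tail so that the exponent gives the stated $1.14^{-n}$ base, and the correct orientation in Gordon's theorem since $Z$ is a ``wide'' matrix ($p > n$) whose smallest singular value concentrates around $\sqrt{p}-\sqrt{n}$, not around $0$.
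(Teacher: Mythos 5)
Your proof is correct and follows essentially the same route as the paper's: bound $\norm{\tilde\nu}_2^2 \leq \norm{\epsilon}_2^2/\lambdamin(\Xtilde\Xtilde^T)$, control the numerator by the Laurent--Massart $\chi^2_n$ tail (your explicit choice of $\alpha = 1-\sqrt{3}/2$ justifying the $1.14^{-n}$ base is a nice touch the paper leaves implicit) and the denominator via $\lambdamin(XX^T+nI_n)=\sigmamin^2(X)+n$ together with the factorization $X = Z\Sigma^{1/2}$ and the Gaussian smallest-singular-value bound at $t=(\sqrt{p}-\sqrt{n})/2$. No gaps.
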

\begin{proof}
We have 
\[
\norm{\Xtilde^T (\Xtilde \Xtilde^T)^{-1} \epsilon}_2^2 
= 
\epsilon^T (\Xtilde \Xtilde^T)^{-1} \epsilon 
\leq 
\frac{\norm{\epsilon}_2^2}{\lambdamin(\Xtilde \Xtilde^T)}
= 
\frac{\sigma \norm{\tfrac{1}{\sigma}\epsilon}_2^2}{\lambdamin(\Xtilde \Xtilde^T)}.
\]
Since $\norm{\tfrac{1}{\sigma}\epsilon}_2^2 \sim \chi^2_n,$ it is upper bounded by $2n$ with probability larger than $1 - 1.14^{-n}$ (a corollary of Lemma 1 in \citet{laurent2000}). So 
\begin{equation} \label{bound:nutilde}
\p\left(\norm{\tilde{\nu}}_2 \leq \frac{\sqrt{2n} \sigma}{\sigmamin(\Xtilde)}\right) \geq 1 - 1.14^{-n}.
\end{equation}
Let us now bound $\sigmamin(\Xtilde).$ One has
\begin{equation} \label{sigmamin_relation}
\begin{aligned}
\sigmamin^2(\Xtilde) 
=
\lambdamin(\Xtilde \Xtilde^T)
=
\lambdamin(XX^T + n I_n)
=
\sigmamin^2(X) + n .
\end{aligned}
\end{equation}
One can write $X = G \Sigma^{1/2}$ where $G \in \R^{n \times p}$ with $G_{ij} \overset{\textrm{i.i.d.}}{\sim} N(0, 1),$ thus
\begin{equation} \label{ineq:sigmamin}
\sigmamin(X) \geq \sigmamin(G) \sigmamin(\Sigma^{1/2}) = \sigmamin(G) \sqrt{\lambdamin(\Sigma)}.
\end{equation}
Now it is known (see \citet{rudelson2010}, eq. (2.3)) that
\[
\sigmamin(G) \geq \frac{1}{2}(\sqrt{p} - \sqrt{n}) = \frac{\sqrt{n}}{2} (\sqrt{p/n} - 1)
\]
with probability at least $1 - 2e^{-\frac{1}{8} (\sqrt{p} - \sqrt{n})^2}.$ Together with~(\ref{sigmamin_relation}) and (\ref{ineq:sigmamin}) this gives
\begin{equation*}
\p\left(\sigmamin(\Xtilde) \geq \left(\frac{n \lambdamin(\Sigma)}{4} (\sqrt{p/n} - 1)^2 + n\right)^{1/2} \right) \geq 1 - 2e^{-\frac{1}{8} (\sqrt{p} - \sqrt{n})^2}.
\end{equation*}
With~(\ref{bound:nutilde}), this implies
\begin{equation*}
\p\left(\norm{\tilde{\nu}}_2 \leq  \frac{\sqrt{2}\sigma}{( \frac{\lambdamin(\Sigma)}{4} (\sqrt{p/n} - 1)^2 + 1 )^{1/2}}\right) \geq 1 - 1.14^{-n} - 2e^{-\frac{1}{8}(\sqrt{p} - \sqrt{n})^2}.
\end{equation*}
\end{proof}

\section{Additional numerical experiments}

\subsection{Comparison of imputation methods}\label{sec:othernumexp_imp}

{We perform other numerical experiments for Robust Lasso-Zero changing the initial naive imputation in order to show if it affects its performances. We compare two different ways of imputing beforehand of the Robust Lasso methodology when covariates are centered at $(1,\hdots, 1)$: either an imputation by the empirical mean of the observed entries is considered (and thus should be around 1), or an arbitrary imputation by zero. 
The results are shown in Figure \ref{fig:imputations} for non-correlated features in (a), and correlated ones in (b): no clear conclusion can be actually drawn.

Indeed, in Figure \ref{fig:imputations} (a) (with non-correlated variables), for $s=3$ (left), both imputations lead to similar results; but for $s=10$ (right), the mean imputation outperforms  the zero-imputation with 5\% MNAR values (top right), and the contrary holds for 10\% of MCAR values (bottom right).

In Figure \ref{fig:imputations} (b) (with correlated design),  the imputation strategies may lead to the same results (see for instance when $s=3$ with MCAR values (left), or for $s=10$ with 5\% MCAR values (top right), or even for $s=10$ and 20\% MNAR values (bottom right)), which is the most challenging setting in these simulations. 
%corresponds to the most difficult case when both imputation strategies leads to a null probability of sign recovery. 
In other cases (i.e.\ all the settings with MNAR data and 20\% MCAR data (bottom right)), the mean imputation outperforms the zero imputation.}

\begin{figure}[h!]
%\vspace{-2cm}
%\hspace{-2cm}
\footnotesize 
\begin{tabular}{cc}
\includegraphics[width=0.47\textwidth]{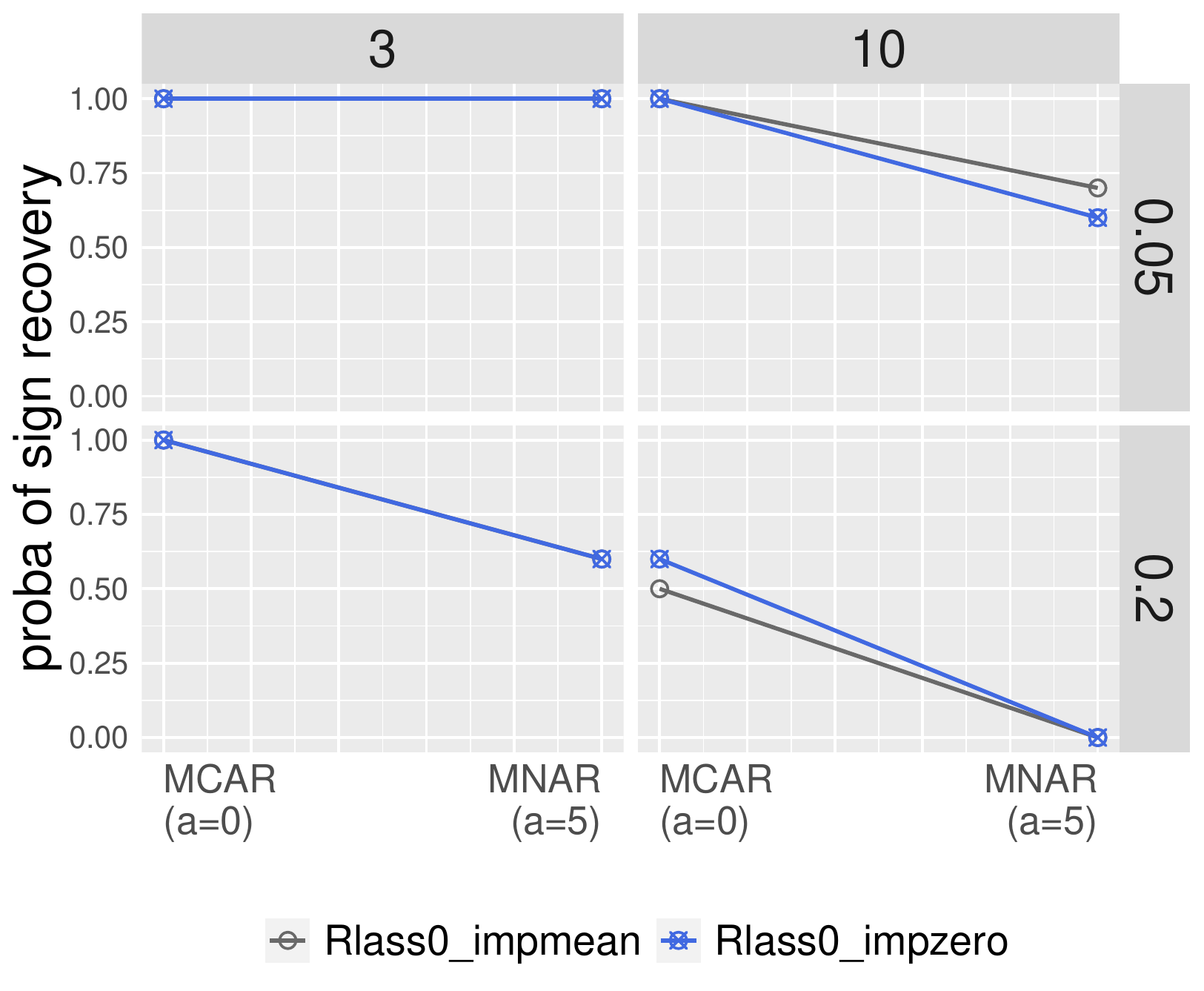} &
\includegraphics[width=0.47\textwidth]{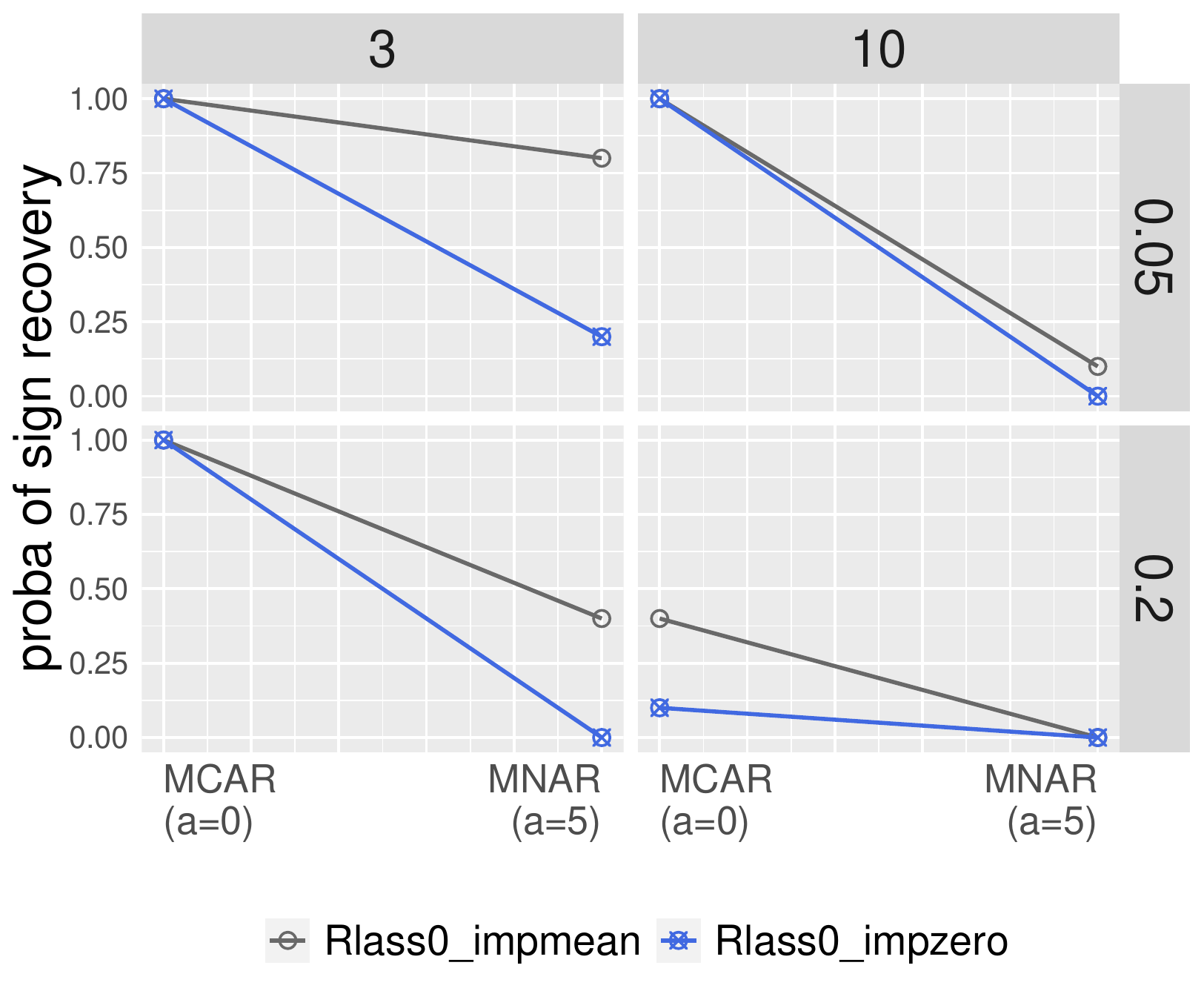} \\
(a) Support recovery, $\rho=0$  & (b) Sign recovery, $\rho=0.5$
\end{tabular}
\caption{\label{fig:imputations} \footnotesize Support and sign recovery for Robust Lass0 with mean imputation (Rlass0\_impmean) and imputation by zeros (Rlass0\_impzero), when $n=100$, $p=200$, $\sigma=0.5$ and the covariates are centered at $(1,\hdots, 1)$.  An $s$-oracle tuning for the threshold is used  for different sparsity levels $s=3$ and $s=10$ (subplots columns), and different proportions of missing values $5\%$ or $20\%$ (subplots rows), and different missing data mechanisms (MCAR vs MNAR).}
\end{figure}

\subsection{Comparison of the Robust Lasso-Zero with other estimators}\label{sec:othernumexp_othestim}

{For a non-oracle hyperparameter tuning, we compare the Robust Lasso-Zero with the thresholded Robust Lasso proposed in \cite{nguyen2013a} and the thresholded lasso in \cite{pokarowski2019improving}. 

    For the Robust Lasso-Zero, we select the threshold $\tau$ by using the quantile universal threshold (QUT) at level $\alpha=0.05$. 
    
    For the thresholded version of the Robust-Lasso and of the Lasso, we used the Generalized Information Criterion (GIC), proposed in the R package \texttt{DMRnet} \cite{pokarowski2019improving}. 
    One could note that this criterion requires the estimation of the noise level, which is not needed for the Robust Lasso-Zero using quantile universal threshold instead. Besides, these methods require the choice of the parameter $c$ controlling the amount of penalization in regard of the model complexity in the GIC, which means that these methods require one hyperparameter for the thresholding tuning, which is one more than needed for the Robust Lasso-Zero with QUT. 
    It is recommended to use the default value $c=2.5$ for linear regression. Figure \ref{fig:threshold_auto} (a) shows that if we choose such a parameter, both thresholded versions of the Robust Lasso and the Lasso do not recover the sign. In Figure \ref{fig:threshold_auto} (b) and (c), Robust Lasso-Zero with QUT outperforms the other methods in terms of s-TPR (the signed True Positive Rate) and s-FDR (the signed False Discovery Rate). The difficulty in choosing $c$ may be due to the \texttt{DMRnet} code and not to the method, but this hyperparameter choice remains under the responsibility of the user, which shows the drawback of the information criteria involving more hyperparameters. 
    In Figure \ref{fig:threshold}, we have also compared the Robust Lasso-Zero with the thresholded versions of the Robust Lasso and the Lasso, but by manually tuning an optimal parameter controlling the penalization in the GIC. 
    
    Note here that the thresholded version of the Robust Lasso always outperforms the thresholded lasso, which was expected, because it allows to account for the corruptions. 
    
    In terms of sign recovery, when the covariates are correlated (Figure \ref{fig:threshold} (a)) or not (Figure \ref{fig:threshold} (b)), the Robust Lasso-Zero is at least equivalent to the two other strategies or outperforms them in the MNAR setting when $s=10$ (right) and in the MCAR setting when $s=10$ (right) (except in the correlated case for $5\%$ of MCAR values). The Robust Lasso-Zero should be then the recommended strategy handling difficult cases in terms of sign recovery, i.e.\ when the percentage of missing values increases, when the mechanism is MNAR and when the covariates are correlated. 
    
    Besides, even if the Robust Lasso Zero may provide lower probabilities of sign recovery than the thresholded versions of the (Robust) Lasso in some particular cases (see for instance in Figure \ref{fig:threshold} (b) (left) for $s=3$, $5\%$ or with $20\%$ MCAR values), 
    the Robust Lasso Zero always remains competitive in terms of s-FDR (see Figure \ref{fig:threshold} (e) and  (f)) and always give the best s-TPR (see Figure \ref{fig:threshold} (c) and (d)).
    }

        \begin{figure}[H]
    %\vspace{-2cm}
    %\hspace{-2cm}
        \footnotesize
    \begin{tabular}{cc}
    \includegraphics[width=0.47\textwidth]{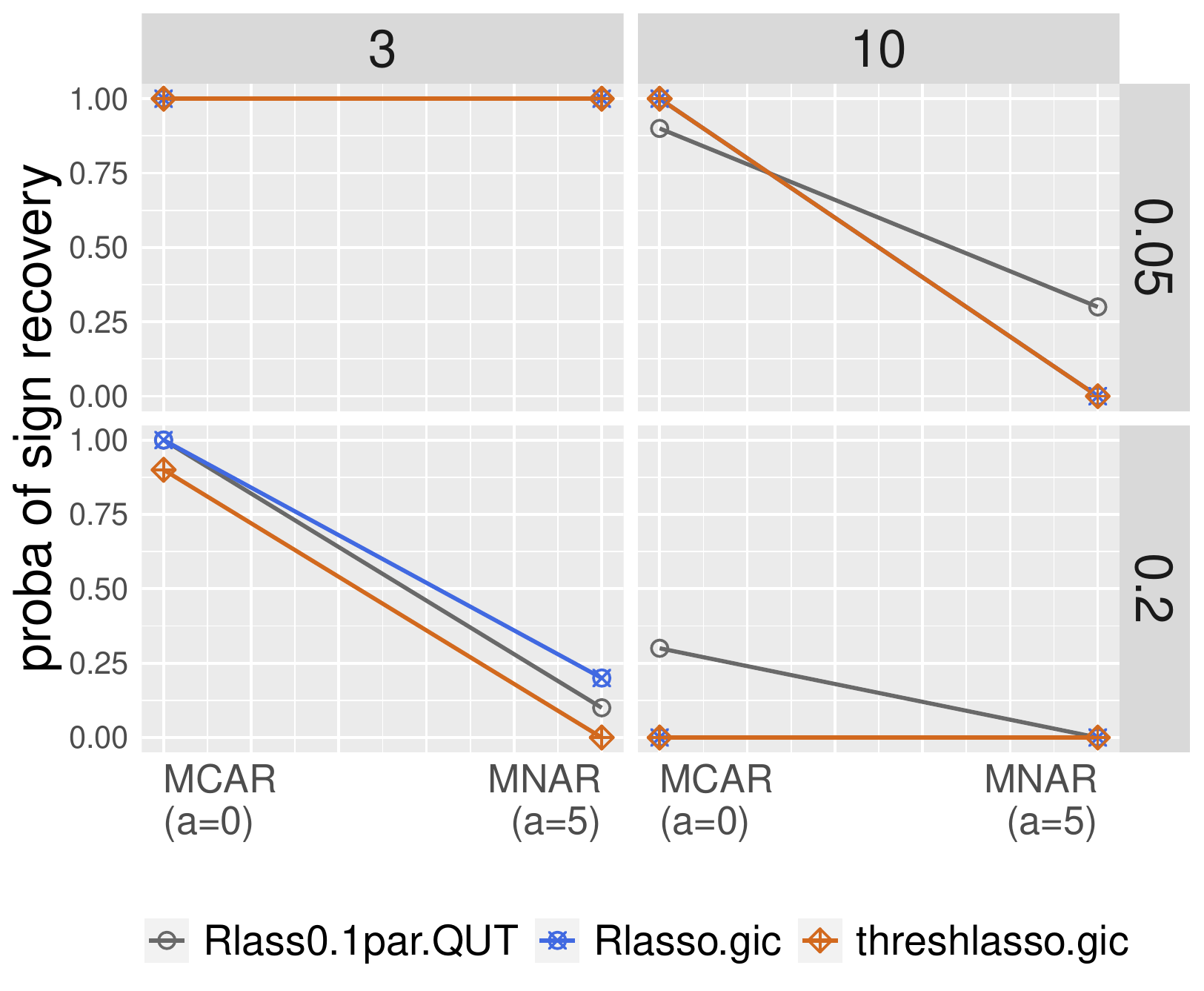} &
    \includegraphics[width=0.47\textwidth]{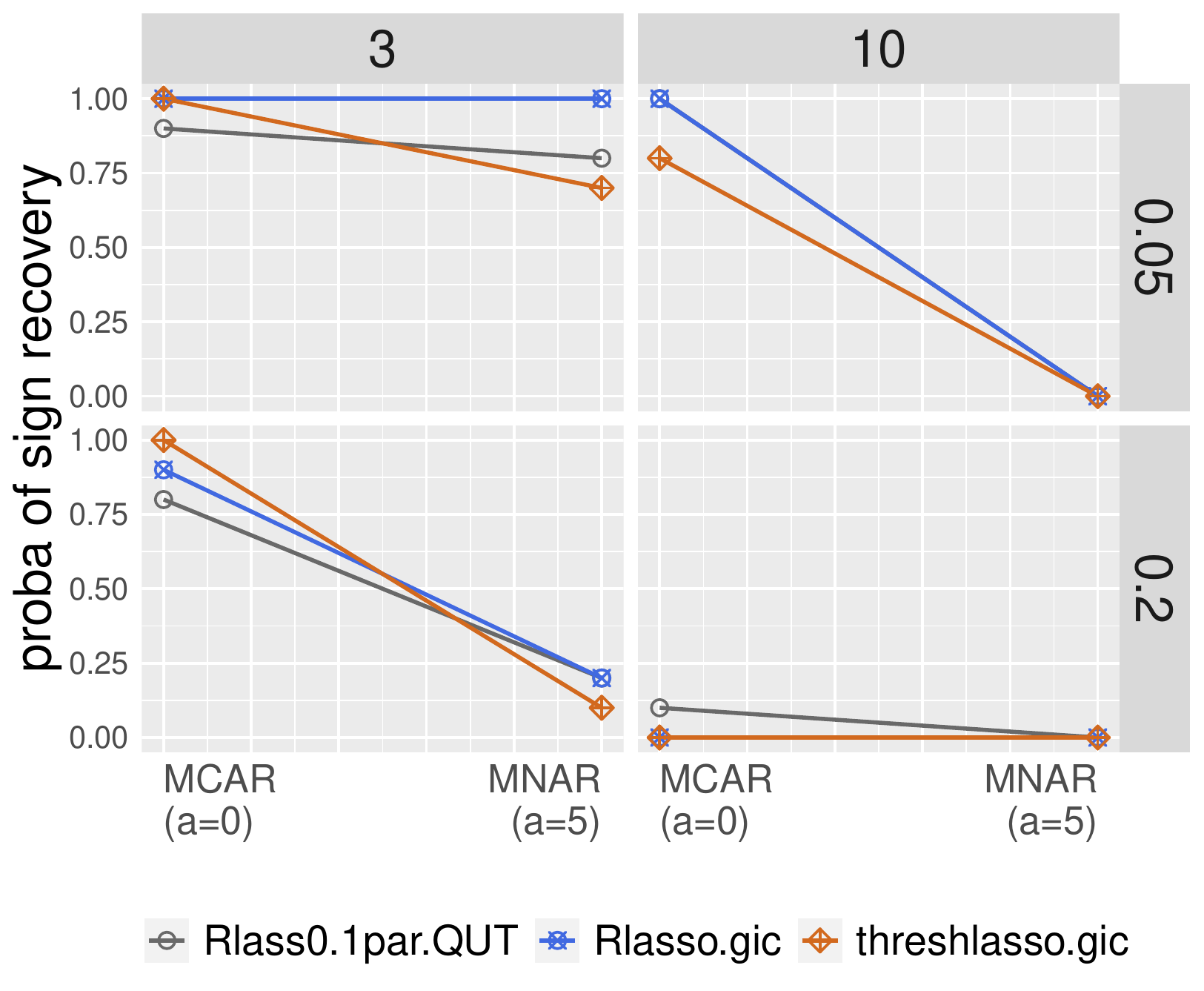} \\
    (a) Sign recovery, $\rho=0$  & (b) Sign recovery, $\rho=0.5$ \\
    \includegraphics[width=0.47\textwidth]{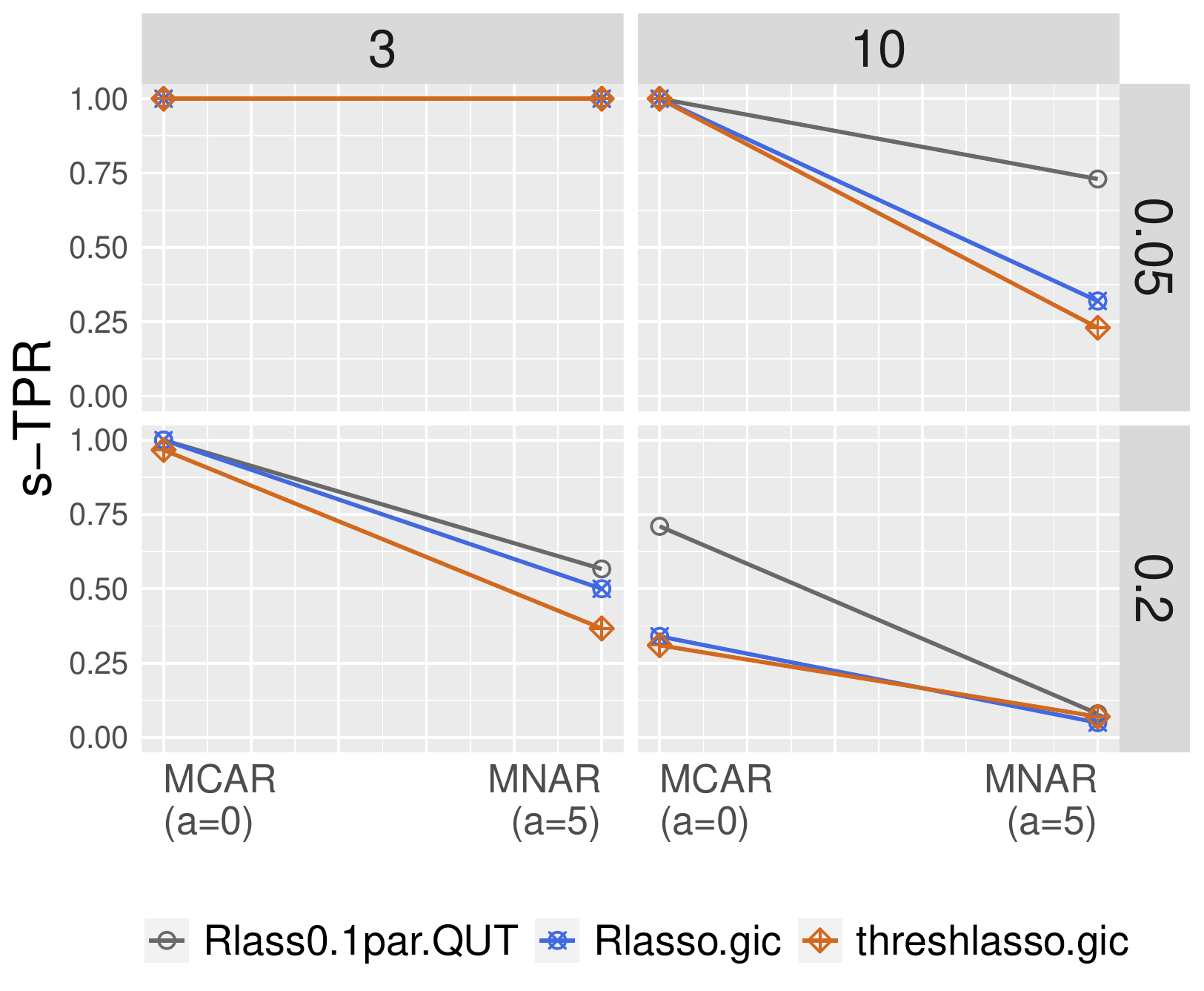} &
    \includegraphics[width=0.47\textwidth]{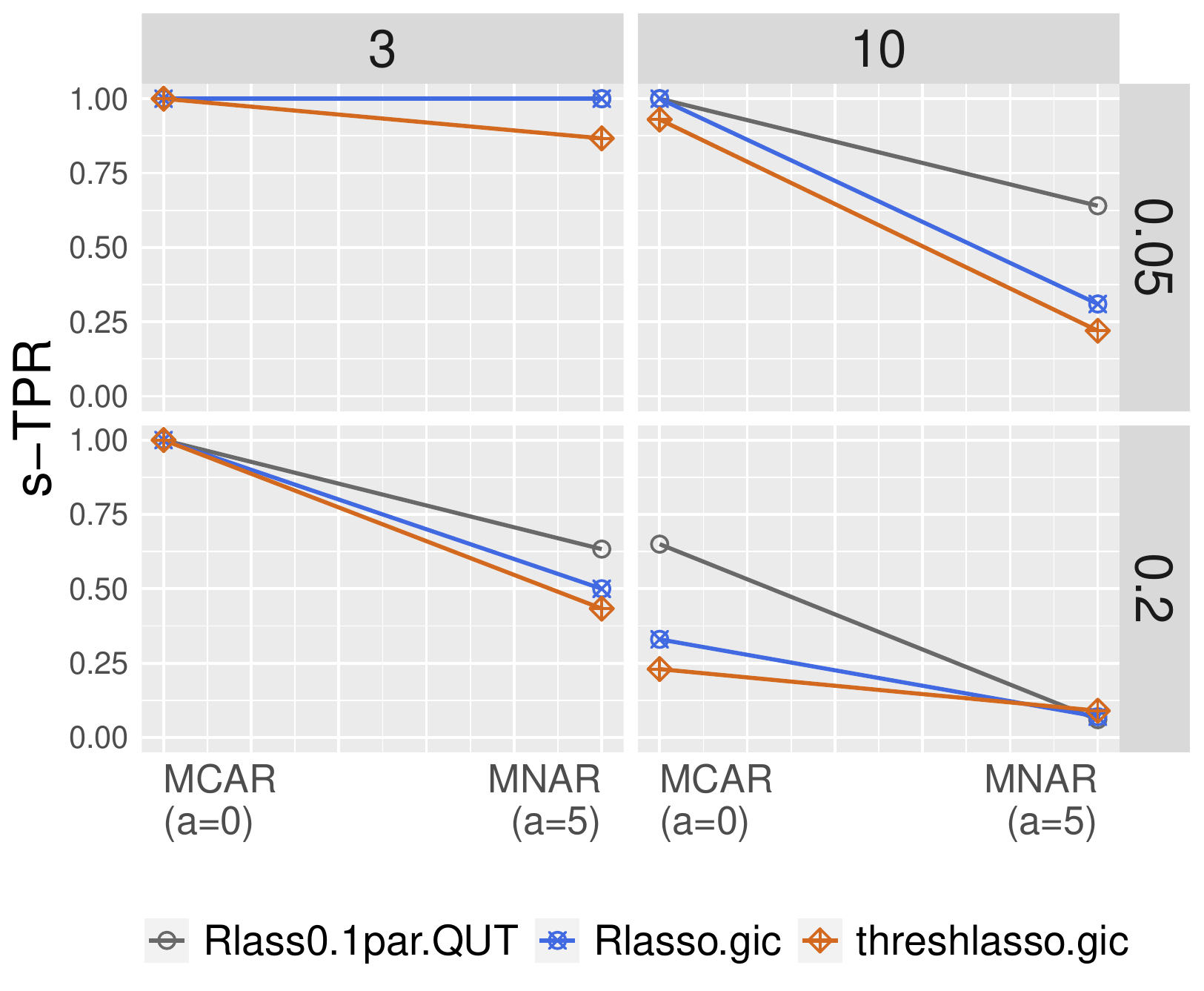} \\
    (c) s-TPR, $\rho=0$  & (d) s-TPR, $\rho=0.5$ \\\includegraphics[width=0.47\textwidth]{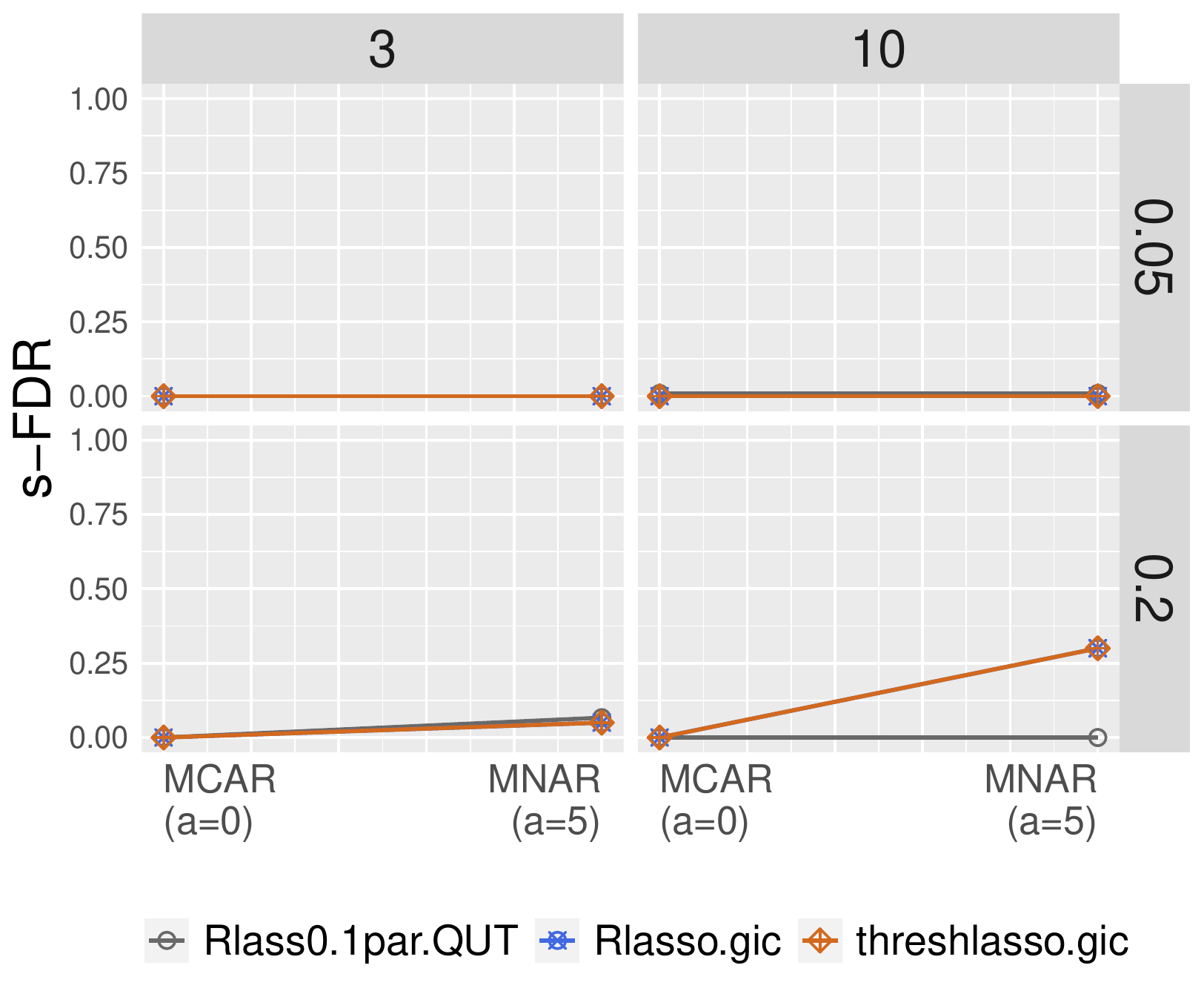} &
    \includegraphics[width=0.47\textwidth]{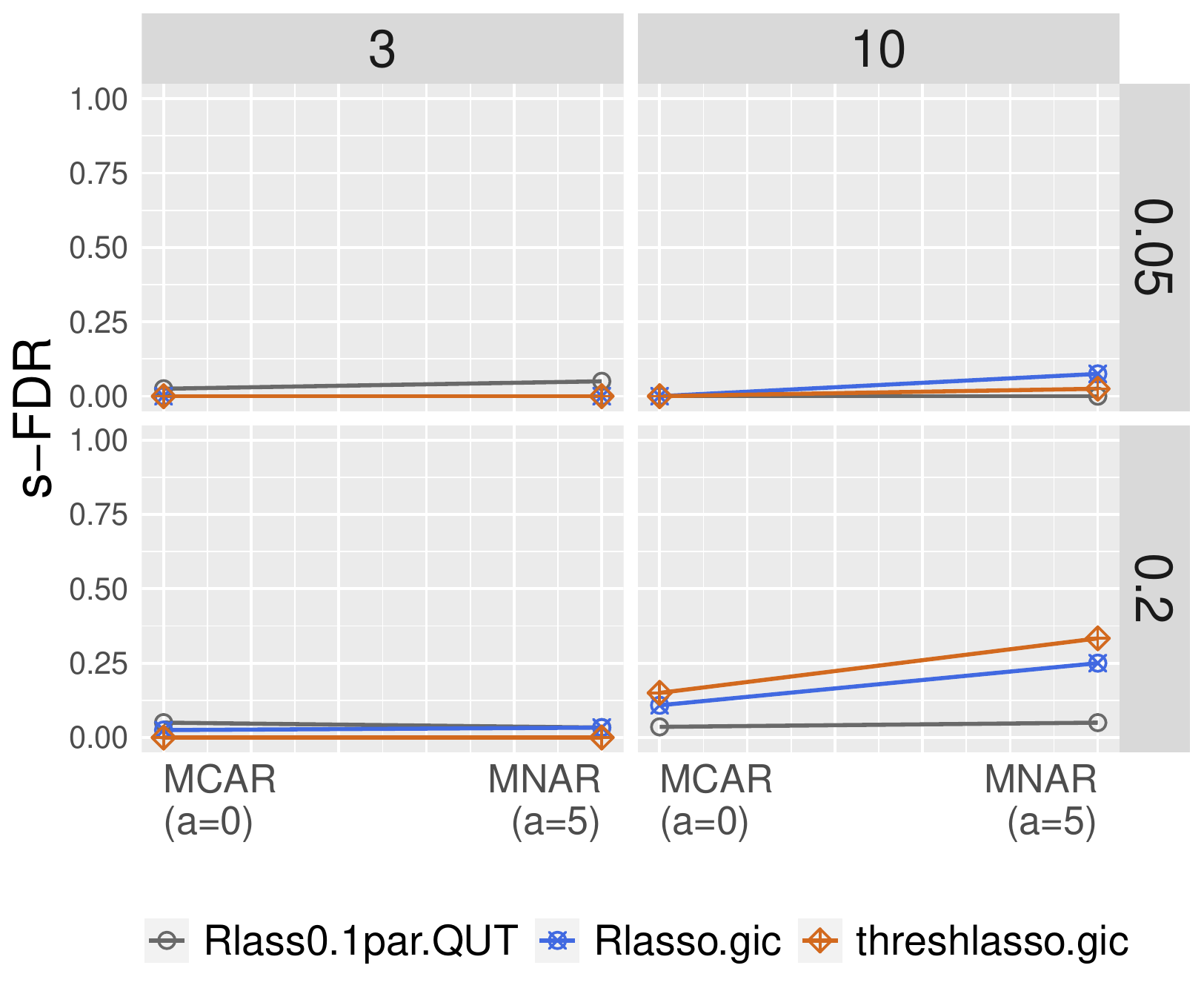} \\
    (e) s-FDR, $\rho=0$ & (f) s-FDR, $\rho=0.5$
    \end{tabular}
    \caption{\label{fig:threshold} \footnotesize Sign recovery, s-TPR and s-FDR for the Robust Lasso-Zero, a thresholded version of the Robust Lasso and a thresholded version of the Lasso, when $n=100$, $p=200$, $\sigma=0.5$ and the covariates are centered at $(1,\hdots, 1)$.  To tune the threshold, we use QUT for the Robust Lasso-Zero and GIC for the thresholded versions of both the Robust Lasso and the Lasso (the parameter $c$ is chosen scaling as $n$). Different sparsity levels $s=3$ and $s=10$ (subplots columns), different proportions of missing values $5\%$ or $20\%$ (subplots rows), and different missing data mechanisms (MCAR vs MNAR) are considered.}
    \end{figure}
    
    \begin{figure}[H]
    %\vspace{-2cm}
    %\hspace{-2cm}        
    \footnotesize
    \begin{tabular}{cc}
    \includegraphics[width=0.47\textwidth]{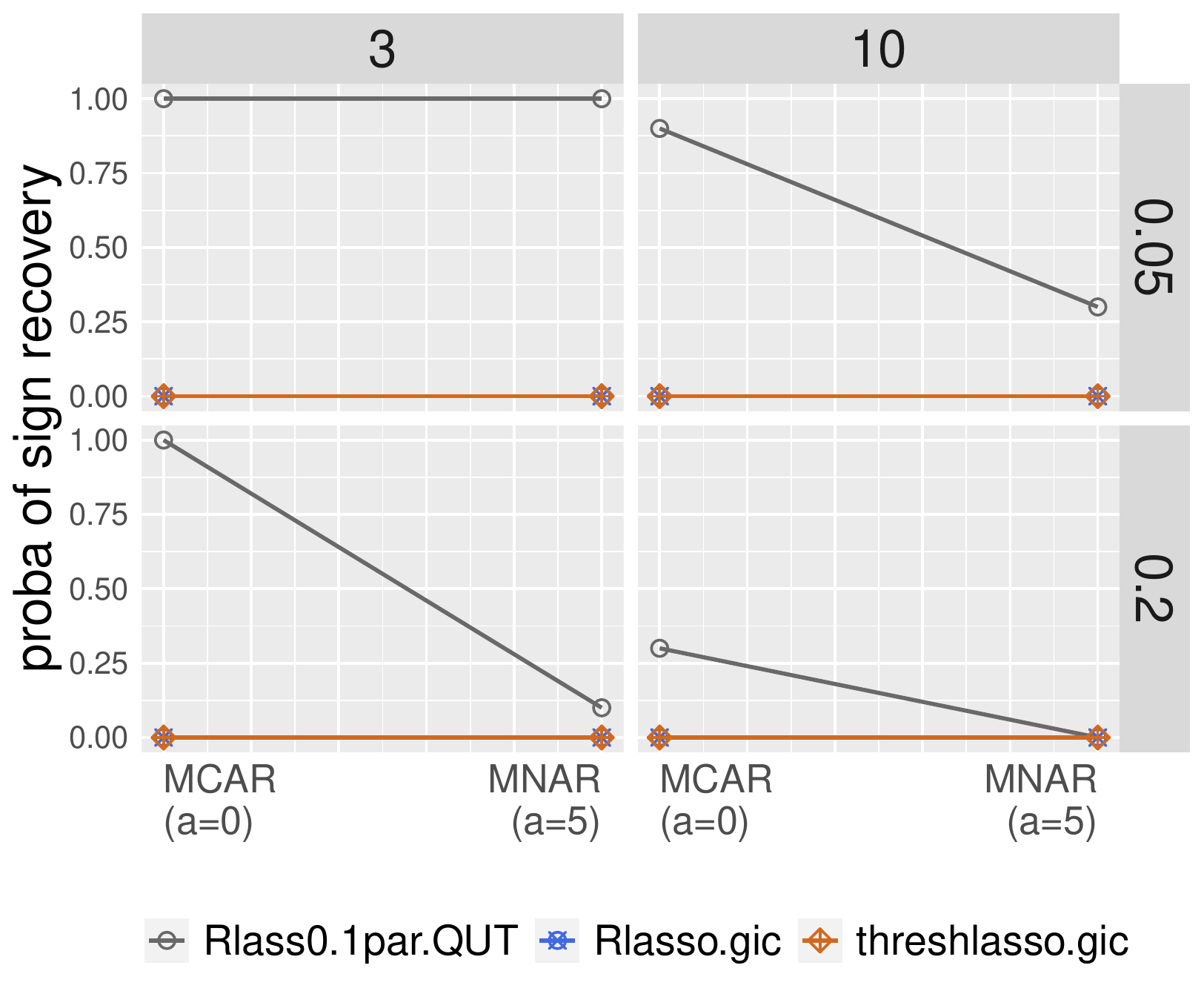} &
    \includegraphics[width=0.47\textwidth]{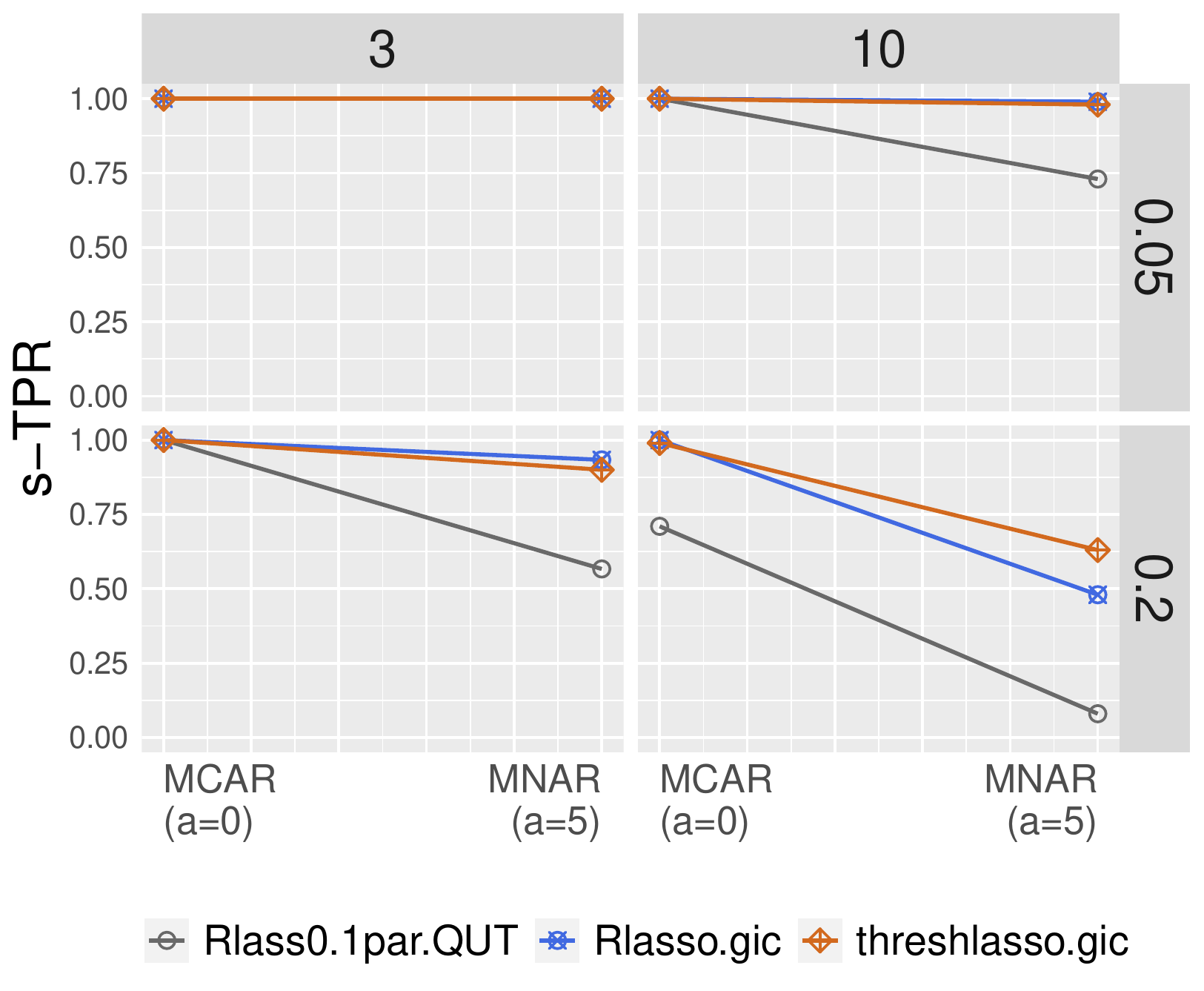} \\
    (a) Sign recovery, $\rho=0$   & (b) s-TPR, $\rho=0$ \\
    \includegraphics[width=0.47\textwidth]{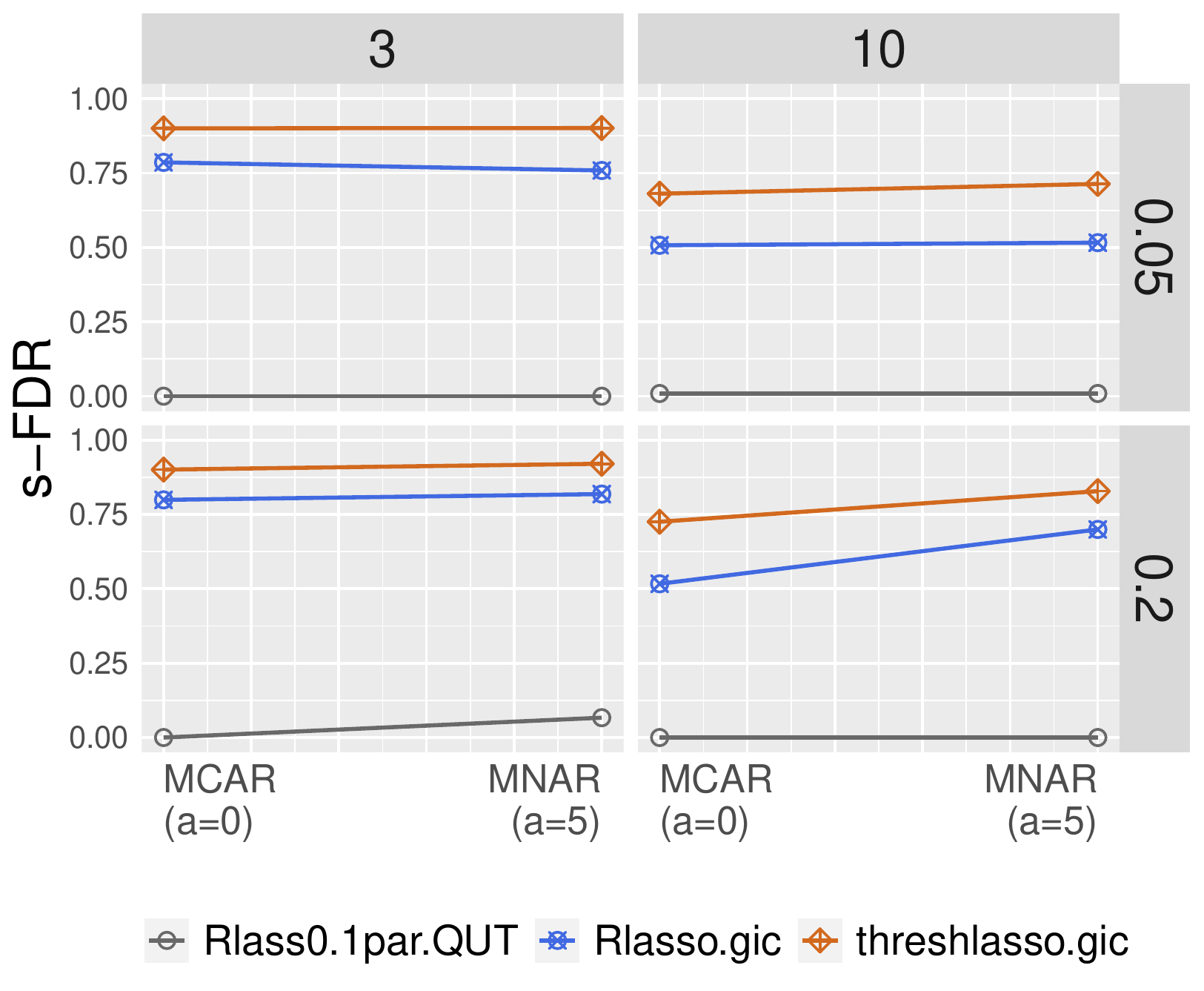} & \\
    (c) s-FDR, $\rho=0$
    \end{tabular}
    \caption{\label{fig:threshold_auto} \footnotesize Sign recovery, s-TPR and s-FDR for the Robust Lasso-Zero, a thresholded version of the Robust Lasso and a thresholded version of the Lasso, when $n=100$, $p=200$, $\sigma=0.5$ and the covariates are centered at $(1,\hdots, 1)$.  To tune the threshold, we use QUT for the Robust Lasso-Zero and GIC for the thresholded versions of both the Robust Lasso and the Lasso (we take for $c$ the default value i.e.\ $c=2.5$). Different sparsity levels $s=3$ and $s=10$ (subplots columns), different proportions of missing values $5\%$ or $20\%$ (subplots rows), and different missing data mechanisms (MCAR vs MNAR) are considered.}
    \end{figure}
    
\section{Variables in the Traumabase dataset}
\label{sec:variablestraumadataset}

The variables of the Traumabase dataset are:
\begin{itemize}
\item \textit{Time.amb}: Time spent in the ambulance, \textit{i.e.}, transportation time from accident site to hospital, in minutes.  
\item \textit{Lactate}: The conjugate base of lactic acid. 
\item \textit{Delta.Hemo}: The difference between the homoglobin on arrival at hospital and that in the ambulance. 
\item \textit{RBC}: A binary index which indicates whether the transfusion of Red Blood Cells Concentrates is performed. 
\item \textit{SI.amb}: Shock index measured on ambulance.
\item \textit{DBP.min}: Minimum value of measured diastolic blood pressure in the ambulance. 
\item \textit{SBP.min}: Minimum value of measured systolic blood pressure in the ambulance. 
\item \textit{HR.max}: Maximum value of measured heart rate in the ambulance. 
\item \textit{VE}: A volume expander is a type of intravenous  therapy that has the function of providing volume for the circulatory system. 
\item \textit{MBP.amb}: Mean arterial pressure measured in the ambulance. 
\item \textit{Temp}: Patient's body temperature. 
\item \textit{SI}: Shock index $SI=HR/SBP$ indicates level of occult shock based on heart rate and systolic blood pressure on arrival at hospital.
\item \textit{MBP}: Mean arterial pressure $MBP=(2DBP+SPB)/3$ is an average blood pressure in an individual during a single cardiac cycle. %, based on systolic blood pressure  and diastolic blood pressure. 
\item \textit{HR}: Heart rate measured on arrival of hospital. 
\item \textit{Age}: Age.
\end{itemize}

\end{document}